\documentclass[journal,12pt]{article}

\RequirePackage[OT1]{fontenc}
\RequirePackage{amsmath,amssymb,amsthm}
\RequirePackage{natbib}
\RequirePackage[colorlinks,citecolor=blue,urlcolor=blue]{hyperref}


\numberwithin{equation}{section}
\theoremstyle{plain}

\usepackage{fancyvrb}
\usepackage{pdfpages,caption}
\usepackage{csquotes}
\usepackage{natbib}
\usepackage{multirow}
\usepackage{array,graphicx}
\usepackage{morefloats}
\usepackage{wrapfig}
\usepackage{savesym}
\usepackage{enumerate}
\usepackage{paralist}
\usepackage{geometry}

\usepackage{booktabs}
\usepackage{multirow}
\usepackage{caption}
\usepackage{subcaption}
\usepackage{rotating}
\usepackage{soul}
\usepackage{bm}  
\usepackage[colorinlistoftodos]{todonotes} 
\bibliographystyle{chicago}

\newcommand{\ignore}[1]{}

\newtheorem{theorem}{Theorem}[section]
\newtheorem{lemma}[theorem]{Lemma}

\newtheorem{result}[theorem]{Result}
\newtheorem{algorithm}[theorem]{Algorithm}

\newenvironment{remark}[1][Remark]{\begin{trivlist}
\item[\hskip \labelsep {\bfseries #1}]}{\end{trivlist}}
\newcommand{\PreserveBackslash}[1]{\let\temp=\\#1\let\\=\temp}
\newcolumntype{R}[1]{>{\PreserveBackslash\raggedleft}p{#1}}


\newcommand{\abs}[1]{\left|{#1}\right|}
\newcommand{\norm}[1]{\left\lVert#1\right\rVert}
\newcommand*{\tran}[1]{#1^{\mkern-1.5mu\mathsf{T}}}

\newcommand{\IE}{\mathbb{E}}

\newcommand{\IR}{\mathbb{R}}

\newcommand{\IP}{\mathbb{P}}

\def\argmin{\mathop{\rm argmin}}

\begin{document}

\title{Long-term prediction intervals with many covariates}

\author{Sayar~Karmakar\footnote{Corresponding author, email: sayarkarmakar@ufl.edu, Address: 230 Newell Drive, Gainesville, FL- 32601, USA. Phone: +1(352)273-1839.},
        Marek~Chud\'y,
        and~Wei~Biao~Wu \\
        University of Florida, University of Vienna, University of Chicago
}


\maketitle

\begin{abstract}
Accurate forecasting is one of the fundamental focuses in the literature of econometric time-series. Often practitioners and policymakers want to predict outcomes of an entire time horizon in the future instead of just a single $k$-step ahead prediction. These series, apart from their own possible non-linear dependence, are often also influenced by many external predictors. In this paper, we construct prediction intervals of time-aggregated forecasts in a high-dimensional regression setting. Our approach is based on quantiles of residuals obtained by the popular LASSO routine. We allow for general heavy-tailed, long-memory, and nonlinear stationary error processes and stochastic predictors. Through a series of systematically arranged consistency results, we provide theoretical guarantees of our proposed quantile-based method in all of these scenarios. After validating our approach using simulations we also propose a novel bootstrap-based method that can boost the coverage of the theoretical intervals. Finally analyzing the EPEX Spot data, we construct prediction intervals for hourly electricity prices over horizons spanning 17 weeks and contrast them to selected Bayesian and bootstrap interval forecasts.
\end{abstract}

\textbf{Keywords:}
Forecasting, Heavy-tailed distribution, Long-range dependence, Electricity prices, Bootstrap, Time-aggregation

\section{Introduction}\label{sec:intro}

\noindent Prediction intervals (PI hereafter) help forecasters to access the uncertainty concerning the future values of time series. This has been a central topic of interest in the analysis of time series from both theoretical and algorithmic point of view. The $k$-step ahead prediction of a time-series after observing $X_1, \ldots, X_t$ has been discussed from many different perspectives. While estimation of the conditional mean $\IE(X_{n+k}|X_1,\ldots,X_n)$ has received most of the attention since \cite{11}; other approaches were also developed significantly in the last few decades, see \cite{20}, \cite{29}, \cite{18}, \cite{48}, \cite{31} \cite{33}, \cite{8}, \cite{17}, \cite{35}. Nonetheless in the eye of an applied scientist, evaluation of prediction intervals ( PIs henceforth) is challenging \citep[e.g.,][]{c93,ct03}, since a higher empirical coverage probability often comes with the cost of larger width, thus less precision. In this paper, we focus on the construction and evaluation of PIs for a single univariate series using information from many predictors under a possibly high-dimensional linear model with a general error process. Moreover, the target for which we construct the PIs is not a single future value of the univariate series, but a sum of its values over the entire forecasting horizon. This setup has practical applications in telecommunication and data services, energy, and finance. The characteristic feature of these sectors is that data about sales, prices, or returns are collected with hourly frequency while the management is interested in predictions of aggregated volumes over horizon spanning several weeks or months. In the volatility forecasting of stock returns using popular ARCH/GARCH models, often aggregated (over time) mean squared error (AMSE) is an evaluation metric (See \citep{starica2003garch,subbarao2008,sayar2020} etc.).  When it comes to predicting energy or electricity consumption, policy makers want to predict future usages for entire one month or several months ahead.  The aggregated predictions help them in setting the price of such commodities. See some of the applications mentioned in \citet{zhou10}. Time-aggregation is also useful to arrive at strategic decisions made by trust funds, pension management and insurance companies, portfolio management of specific derivatives \citep{kw13} and assets \citep[see][]{bky16} among others.

Towards including external regressors, \citet{lfn15} found that the inclusion of disaggregated wind speed and temperature (measured at more than 70 weather stations across Germany) leads to improvements of forecasts for EPEX~SPOT hourly day-ahead electricity prices.\footnote{In electricity price forecasting (EPF), one has to control for weather conditions, local economy, and environmental policy \citep{kr05,hrz12}. Additionally, EPF is challenging due to complex seasonality (daily, weekly, and yearly), heteroscedasticity, heavy-tails, and sudden price spikes \citep[see][for a recent review]{w14}.}.  We adapt their framework to allow for long-run dependence which is important for long-horizon and medium-horizon forecasting. Similar applications can be found in in power portfolio risk management, derivatives pricing, medium-term and long-term contract evaluation, and maintenance scheduling. To challenge our forecasts, we provide out-of-sample comparison with several forecasting methods, which include Bayes PIs of \citet{mw16} and bootstrap PIs obtained from methods such as exponential smoothing, neural networks, and regression with auto-correlated errors implemented in the R-package \texttt{forecast} \citep[see][]{hk08}. 

An interesting advantage of predicting a sum of future values instead of just a single $k$-step ahead prediction is it allows for some weak law of large numbers to kick in if the horizon length of prediction is large. Towards the methodological development and corresponding theoretical advancement, we explore the properties of the quantile-based PIs of \citet{zhou10} in a high-dimensional regression setting. First we identify that \citet{zhou10} uses linear processes in their modelling of innovation processes. This limits the scope of applicability. Using the idea of predictive density and the functional dependence framework as proposed in a seminal paper by  \citet{wu05}, we extend the quantile consistency results to a large class of nonlinear processes which includes thresholded/transition autoregressive \citep{ltd03} processes and some nonlinear version of GARCH processes. 

On the regression front, the quantile consistency results obtained so far were limited to only a low-dimension regression setting. This builds a strong motivation to explore a high-dimensional regression setting as this would allow us to incorporate several predictors that better explain the variability in the data. In particular, we put a special focus on the scenario where the number of predictors grows much faster than the sample size ($\log p=o(n)$). Using the popular LASSO estimator, we were able to provide theoretical guarantees for these PIs. One interesting contribution of this paper is how our consistency rates explicitly depict the price of having short or long-range dependence and lighter or heavier tails. We also show sharp consistency results for stochastic design in the high-dimensional regime which can be seen as a contribution that is important on its own. Thus on one hand, this paper advances the theory of future-aggregated prediction by enlarging the scope of time-series; at the same time, it allows for many covariates in a high-dimensional regression setting. In particular, this paper can be seen to provide some theoretical justification for how a simple LASSO-based prediction routine can sustain such general time-series dependence. 

Besides the theoretical contributions, we tackle the practical validity of PIs in the scenario of long-horizon forecasting with only a relatively short sample. One new discovery was that the original quantile-based PIs fail when the length of time-horizon to be aggregated for forecasting grows compared to the length of the past data. In light of this shortcoming, we employ a bootstrap-assisted step to improve the out-of-sample coverage probability. Apart from a conjectural viewpoint justifying our bootstrap procedure, these simulation results support our choice of PIs for the final out-of-sample experiment using real-world data on hourly electricity prices. 


The rest of the paper is organized as follows:  Section \ref{sec:methods} shows the construction of quantile-based PIs and also details a novel bootstrap adjustment to improve coverage. Section \ref{sec:noregress} states the quantile consistency results in an exhaustive number of cases: the error/innovation process can be linear or non-linear, short-range or long-range and can have light or heavy tail. Building on the consistency for just the error process, we use Section \ref{sec:high dimension} to consider consistency results for LASSO-fitted residuals. Apart from the traditional fixed design, we also provide discussion on stochastic design in this section. Moving on, the simulations are shown in details in Section \ref{sec:simu} where we compare OLS, LAD and LASSO for both low-dimension and high-dimension regression setup. Section \ref{sec:real data analysis} is used to analyze EPEX spot electricity using several methods including ours. For contrasting our method with the competing ones in literature we use a Pseudo-out-of-sample (POOS hereafter) approach. Finally Section \ref{sec:discussion} provides concluding remarks. We defer some theoretical results and all proofs of theorems to the appendix. 

We now introduce some notation. For a random vector $Y$, write $Y \in \mathcal{L}_p$, for $p > 0$, if $\|Y \|_p := E(|Y |^p ) ^{1/p} < \infty$.  For the $\mathcal{L}_2$ norm write $\|\cdot \| =\| \cdot \|_ 2$. Throughout the text, $c_p$ denotes a constant that depends only on $p$ and $c$ denotes a universal constants. These might take different values in different lines, unless otherwise specified. Then, $x^+=\max(x,0)$ and $x^{-}=-\min(x,0)$. For two positive sequences $a_n$ and $b_n$, if $a_n/b_n \to 0$, write $a_n =O(b_n)$. Write $a_n \lesssim b_n$ if $a_n \leq c b_n$, for some $c<\infty$. For a random variable sequence $X_n$ and a possibly random or non-random sequence $Y_n$, we say $X_n=o_{\IP}(Y_n)$ if $X_n/Y_n \to 0$ in probability. Respectively if $X_n/Y_n$ is a tight random variable we say $X_n=O_{\IP}(Y_n)$. The $d$-variate normal distribution with mean $\mu$ and covariance matrix $\Sigma$ is denoted by $N(\mu, \Sigma)$. Denote by $I_d$ the $d \times d$ identity matrix. For a matrix $A= (a_{ij})$, we define its element-wise $\ell-\infty$ norm as $|A|_{\infty}= max_{ij} |a_{ij}|$ and its $\ell_1$ norm as $|A|_1= max_{j} \sum_i |a_{ij}|$.

\section{Methods: Construction of prediction intervals}\label{sec:methods}
\noindent Suppose an univariate target time series $y_i,i=1,\ldots,n$ follows a regression model 
\begin{equation}\label{eq:regression}
y_i=\tran{\bm{x}}_i  \bm \beta +e_i, \quad \bm \beta \in \IR^p, 
\end{equation}
where $e_i$ stands for the mean-zero temporally-dependent error process. Assuming model \eqref{eq:regression}, we wish to construct PI for $y_{n+1}+\ldots+y_{n+m}$ after observing $(y_i,\bm x_i);i=1,\ldots,n$. We first discuss the scenario $\bm \beta=0$, i.e., $y_i=e_i$ is a zero-mean noise process. This serves as a primer to the high-dimensional regression problem and introduces a basic quantile-based method of constructing the prediction interval.

\subsection{Without covariates}\label{ssc:nocov}
\noindent We will appropriately define our notion of short-range and long-range and light-tailed and heavy-tailed distribution later in Section \ref{sec:nonlinear}. Loosely speaking, short-range dependence stands for faster decay of dependence as the lag increases where long-range dependent process kills dependence slowly. Heavy-tailed refers to the scenario where the error process does not possess even the finite second moment. Depending on the strength of dependence and the tail behavior, \citet{zhou10} proposed two different types of PIs for $m$-step ahead aggregated response $e_{n+1}+\ldots+e_{n+m}$. We present them here, with some suggestive modifications especially for the first approach where we estimate the long-run variance differently. 
\subsubsection{Quenched CLT method}
If the process $e_t$ shows short-range dependence and light-tailed behavior, then in the light of a quenched central limit theorem, \citet{zhou10} proposed the following PI for $\frac{1}{m}(e_{n+1}+\cdots+e_{n+m})$,
    \begin{eqnarray}
    [L,U]=[Q^N(\alpha/2),Q^N(1-\alpha/2)]\frac{\sigma}{\sqrt{m}},
    \end{eqnarray}
    \noindent where $\sigma$ is the long-run standard deviation (sd) of $e_t$. However, since $\sigma$ is unknown, it must be estimated. One can estimate the long-run variance $\sigma^2$ of the $e_i$ process using the sub-sampling block estimator \citep[see eq. (2) in][]{dfsvw13}
\begin{equation}\label{eq:ss estimator}
\tilde{\sigma}=\frac{\sqrt{\pi l/2}}{n}\sum_{k=1}^{\kappa}\abs{\sum_{i=(k-1)l+1}^{kl}e_i},
\end{equation}
with block length $l$ and number of blocks $\kappa=\lceil n/l\rceil$ in order to obtain $100(1-\alpha)\%$ asymptotic PI
$$[L,U]=\pm \hat \sigma Q^t_{\kappa-1}(\alpha/2) \sqrt{m},$$
where $Q^t_{\kappa-1}$ is student-t quantile with $\kappa-1$ degrees of freedom. 

\ignore{
\todo{Do we keep this description here?}
The justification behind such an interval will be discussed in detail as part of our asymptotic results where we show convergence of $(e_{n+1}+\ldots+e_{n+m})/m$ to normal distribution under mild assumptions. 
}

\subsubsection{Empirical method based on quantiles}
A substantially more general method that can account for long-range dependence or heavy-tailed behavior of the error process uses the following quantile  
\begin{eqnarray*}
\hat{Q}(u), \text{ the $u$-th empirical quantile of } \sum_{j=i-m+1}^{i}e_i; i=m,\ldots, n.
\end{eqnarray*}
The PI in this case is
\begin{equation}\label{eq:EQpis}
[L,U]= \left[\hat{Q}(\alpha/2) , \hat{Q}(1-\alpha)/2)\right].
\end{equation}
This approach enjoys reasonable coverage for moderate rate of growth for $m$ compared to sample size $n$.
\subsection{Low dimensional regression}\label{ssc:withcov}
\noindent Now assume the scenario where $\bm \beta$ is possibly non-zero in (\ref{eq:regression}). For $p<n$, with dependent but linear error process, after estimating $\bm{\bm \beta}$, \citet{zhou10} constructed PI as  
\begin{equation}\label{eq:pi}
\sum_{i=n+1}^{n+m} \tran{\bm{x}}_i  \hat{\bm{\bm \beta}}+\text{ PI for } \sum_{i=n+1}^{n+m} \hat{e}_i,
\end{equation} 
where $\hat{e}_i=y_i-\tran{\bm{x}}_i\hat{\bm{\bm \beta}}$ are the regression residuals. Regarding the choice of estimator for $\bm{\bm \beta}$, if the error process shows light tailed behavior and short-range dependence, we typically use OLS estimator $ \bm{\hat{\bm \beta}}=\argmin \sum_i (y_i-\tran{\bm{x}}_i \bm{\bm \beta})^2 $. For heavy-tailed or long-range dependent errors \citep[see][]{huber}, it is better  to use robust regression with general distance $\rho(\cdot)$ and 
$\bm{\hat{\bm \beta}}= \argmin \sum_i \rho (y_i-\tran{\bm{x}}_i \bm{\bm \beta}).$ Examples of distance include the $\mathcal{L}^q$ regression for $1 \leq q \leq 2$. In this paper our focus is on a LASSO-type least square estimation (cf. \ref{eq:lasso objective}) in presence of possibly dependent and nonlinear errors where the number of covariates $p$ is much larger than the sample size $n$.



\subsection{High dimensional regression}\label{ssc:hd}
 Consider the high dimensional regression situation i.e. $p \gg n$. Here we use the popular LASSO estimator
\begin{equation}\label{eq:lasso objective}
\hat{\bm{\bm \beta}} = \argmin_{\bm{\bm \beta} \in \mathcal{R}^p} \frac{1}{n}\sum_{i=1}^n(y_i-\tran{\bm{x}}_i \bm{\bm \beta})^2+ \lambda \sum_{j=1}^p \abs{\beta_j},
\end{equation} 
with the penalty coefficient $\lambda$. Then we get PIs (\ref{eq:pi}) with $\bm{\hat{\bm \beta}}$ replaced by the LASSO estimator. One of the key contributions of our paper lies in the fact that the intuitive and computationally fast lasso estimator adapts well for the time-aggregated response scenario but still provides theoretical consistency guarantee as we show in Section \ref{sec:high dimension}.

\subsection{Future predictors}
Note that, the PI in (\ref{eq:pi}) requires the future values of covariates  $\bm x_i$ namely $\bm{x}_{n+1},\ldots, \bm{x}_{n+m}$. \citet{zhou10} discussed the scenario where these predictors are trigonometric and thus perfectly predictable. For practical implementation in econometrics, it is more appealing to consider stochastic design so that one can allow for the covariates $\bm x_i$ to be observed random variables as well. This naturally raises the question of how to predict the future values of the covariates. For the theoretical part, we shed some light on consistency results for the stochastic design in Section \ref{sec:high dimension}. However, in the implementation in Section \ref{sec:real data analysis}, we choose to simply use a year-over-year mean forecast for the weather covariates.

\subsection{Bootstrap adjustment}\label{ssc:bootstrap}
In our actual implementation for real data, we propose a bootstrap adjusted version of the algorithm in Section \ref{ssc:withcov}. In the stationary bootstrap, one randomly draws a sequence of starting point uniformly from $\{1,2,\ldots,n\}$ and a sequence of Geometric random variable $L_1,L_2,\cdots$. Depending on the values of $L_i$ and the starting value, we draw a consecutive $L_i$-length block and then finally concatenate these blocks together to arrive at a replicated series of length $n$. Then finally we look at the final $m$ for each such series, whereas it is understandable that it was not particularly important to only look at the last $m$ since the starting point is uniform. We conjecture that this ensures consistency of the quantiles despite providing some more dispersion to the average. This extra dispersion arises as this method concatenates blocks containing same elements with a nontrivial probability and thus adding those covariances in the dispersion of the overall average. We postpone a rigorous theoretical justification of this innovative bootstrap technique to a future work since this paper focuses more on the exploration of the performance of LASSO fitted residuals. For the convenience of the readers we summarize the algorithm down below:

\begin{algorithm}[Bootstrap PIs ADJ]\label{meth:ADJ}
\begin{enumerate}
\item[i.] Estimate regression $y_t\sim \tran{\bm{x}}_t$, $t=1\ldots,n$ with LASSO.
\item[ii.] Using stationary bootstrap (See \citet{politis}), replicate residuals $\hat{e}_t=y_t-\hat{y}_t$, $B$ times obtaining $\hat{e}_t^b, t=1,\ldots,n$, $b=1,\ldots,B$.
\item[iii.] Compute $(\bar{e}_{t(m)}^b)=m^{-1} \sum_{i=1}^m e^b_{t-i+1}$, $t=m,\ldots n$ from every replicated series.
\item[iv.] Estimate the $\alpha/2$th and $(1-\alpha/2)$th quantile $\hat{Q}(\alpha/2)$ and $\hat{Q}(1-\alpha/2)$ using Gaussian kernel density estimator from $\bar{e}_{n(m)}^b$, $b=1,\ldots,B$. 
\item[v.] The PI for $\bar{y}_{+1:m}$ is $[L,U]=\bar{\hat{y}}_{n,1:m}+[\hat{Q}(\alpha/2),\hat{Q}(1-\alpha/2)]$, where $\bar{\hat{y}}_{n,1:m}$ is the average of $h$-step-ahead forecasts for $h=1,\ldots,m$.
\end{enumerate}
\end{algorithm}

Our theoretical results are concerned with the consistency of the usual quantiles from the original series. But we conjecture that the stationary bootstrap technique to obtain the replicated series retains the asymptotic dependence structure of the original series and thus the quantiles of the $m$-length average from the original series and that from the final $m$ of the replicated series are close to each other. Additionally, using the Gaussian kernel density to obtain the kernel quantile estimator (See \citet{sm90}) further improves the performance in prediction. These adjustments are supported by the empirical evidence given in \citet{chudyold} in a univariate setup.

Next, we exhibit some theoretical consistency results addressing various different dependency cases, tail-decay and high dimensional regression in the next two sections.

\ignore{ 
 It is important to note that there are other regression estimates in the scenario $p \gg n$ that can work here as well. However, we keep the focus on LASSO only. 
}

\section{Prediction interval for error process} \label{sec:noregress}
\noindent Before moving on to a more general discussion with a large number of covariates compared to sample size, we use this section to discuss a primer without covariates i.e. our response here is just a mean-zero error process.  This section also elaborately describes the model specifications on the error process i.e. whether the process is short/long-range dependent or has light/heavy tails. Under short-range dependence, if window size $m$ is long enough then the dependence of $y_{n+1}+\ldots+y_{n+m}$ on $y_1,y_2,\ldots, y_n$ diminishes and the conditional distribution of $(y_{n+1}+\ldots+y_{n+m})/\sqrt{m}$ given $y_1,\ldots,y_n$ is almost similar to the unconditional distribution and thus one can obtain a simple central limit theorem to quantify the uncertainty surrounding the prediction. Consider $(e_i)$ a mean-zero stationary process and let $S_m=e_1+\ldots+e_m$. \citet{Wu04} proved that for $q>5/2$, the condition 
\begin{eqnarray}\label{eq:sm|f0 cond}
\|\IE(S_m |\mathcal{F}_0)\|_2 = O \left( \frac{\sqrt{m}}{\log ^q m}\right),
\end{eqnarray} gives the a.s. convergence
\begin{equation}\label{eq:sm|f0}
\Delta(\IP(S_m/\sqrt{m} \leq \cdot |\mathcal{F}_0),N(0,\sigma^2))= 0 \text{ a.s.},
\end{equation}
 where $\|\cdot \|_2$ denotes the $L_2$-norm, $\Delta$ denotes the Levy distance, $\mathcal{F}_i$ is the $\sigma-$field $\sigma(\ldots, e_{i-1},e_i)$, $m \to \infty$ and $\sigma^2= \lim_{m \to \infty}\|S_m\|_2^2/m$ is the long-run variance. We start by collecting an asymptotic normality result from \citet{chudyold}. When the error process has possibly long-range dependence such asymptotic normality fails. Keeping the linear structure intact, we provide some empirical consistency result for the quantile-based methods from Section \ref{sec:methods}. These results depict all possible scenarios of heaviness of tail and range of dependence. Finally we extend these results to the more practically applicable non-linear case under the functional dependence framework. Throughout this section, the observed process $e_i$ are either direct linear sum of independent and identically distributed (i.i.d. henceforth) innovations that are unobserved or we assume $e_i$ to be a general non-linear function of these. For the latter, we define some tractable moment-based coupling measure which we call functional dependence measure.

\subsection{Linear error process: Theoretical results}
Assuming linearity of the mean-zero noise process $e_i$ in the following manner 

\begin{eqnarray}\label{eq:ei linear}
e_i= \sum_{j=0}^\infty a_j \epsilon_{i-j},\textrm{ with }\epsilon_i \text{ i.i.d., }\IE(\abs{\epsilon_1}^p)<\infty
\end{eqnarray}
for some $p>2$, it is easy to derive the following conditions on $a_i$ to ensure the convergence in (\ref{eq:sm|f0}). The proof can be found in the appendix of \citet{chudyold}.

\begin{theorem}[Theorem 1 from \citet{chudyold}]\label{th:lighttailed}
Assume the process $e_t$ admits the representation (\ref{eq:ei linear}) where $a_i$ satisfies
\begin{eqnarray}\label{eq:ai formulation}
a_i= O(i^{-\chi}(\log i)^{-A}), \quad \chi>1, A>0,
\end{eqnarray}
\noindent where larger $\chi$ and $A$ means fast decay rate of dependence. Further assume, $A>5/2$ if $1<\chi<3/2$.  Then the sufficient condition (\ref{eq:sm|f0 cond}) implies that the convergence (\ref{eq:sm|f0}) to the normal distribution holds.
\end{theorem}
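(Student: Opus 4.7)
The plan is to verify the sufficient condition~\eqref{eq:sm|f0 cond} under the prescribed decay on the $a_j$ and then invoke the Wu (2004) criterion already quoted to deduce~\eqref{eq:sm|f0}. First I would compute $\IE(S_m \mid \mathcal{F}_0)$ explicitly. Since $e_i = \sum_{j=0}^{\infty} a_j \epsilon_{i-j}$ is linear in i.i.d.\ innovations and $\mathcal{F}_0 = \sigma(\epsilon_0, \epsilon_{-1}, \ldots)$, conditioning annihilates every summand with $j < i$, so $\IE(e_i \mid \mathcal{F}_0) = \sum_{j \ge i} a_j \epsilon_{i-j}$. Summing over $i = 1, \ldots, m$ and re-indexing with $\ell = j - i \ge 0$ gives
\[
\IE(S_m \mid \mathcal{F}_0) \;=\; \sum_{\ell=0}^{\infty} c_\ell \,\epsilon_{-\ell}, \qquad c_\ell := \sum_{i=1}^{m} a_{i+\ell},
\]
and independence of the $\epsilon_{-\ell}$ reduces the whole task to the deterministic estimate
\[
\|\IE(S_m \mid \mathcal{F}_0)\|_2^2 \;=\; \mathrm{Var}(\epsilon_1)\sum_{\ell=0}^{\infty} c_\ell^2.
\]

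The bulk of the work is bounding $\sum_\ell c_\ell^2$ under $|a_j| \le C\, j^{-\chi}(\log j)^{-A}$. I would split at $\ell = m$: for $\ell \le m$ the tail of the power law dominates and an integral comparison yields $|c_\ell| \le C\, \ell^{1-\chi}(\log \ell)^{-A}$, while for $\ell > m$ the summation window is comparatively short so $|c_\ell| \le C\, m\, \ell^{-\chi}(\log \ell)^{-A}$. For $\chi \ge 3/2$ both tail sums are bounded or grow at most polylogarithmically, so~\eqref{eq:sm|f0 cond} is immediate for any $q > 5/2$. The delicate case is $1 < \chi < 3/2$: both halves contribute at order $m^{3-2\chi}(\log m)^{-2A}$, producing $\|\IE(S_m \mid \mathcal{F}_0)\|_2 \le C\, m^{(3-2\chi)/2}(\log m)^{-A}$. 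Comparing to $\sqrt{m}/(\log m)^q$, the ratio is $m^{1-\chi}(\log m)^{q-A}$; since $\chi > 1$ the polynomial factor decays, and the extra hypothesis $A > 5/2$ allows one to select some $q > 5/2$ with $q \le A$, keeping the logarithmic factor bounded and establishing~\eqref{eq:sm|f0 cond}.

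Once~\eqref{eq:sm|f0 cond} is verified, the a.s.\ L\'evy-distance convergence~\eqref{eq:sm|f0} with $\sigma^2 = \lim_m \|S_m\|_2^2/m$ follows directly from the Wu (2004) criterion cited above the statement. I expect the main obstacle to be the sharp integral approximation in the boundary regime $1 < \chi < 3/2$: keeping both the polynomial and the logarithmic orders in $\sum_\ell c_\ell^2$ accurate enough to justify the choice of $q > 5/2$ is precisely what forces the side condition $A > 5/2$, and pinning down the constants will require a careful integration by parts on $\int x^{-\chi}(\log x)^{-A}\,dx$ together with a separate check that the low-frequency piece $\sum_{\ell \le m}$ and the high-frequency piece $\sum_{\ell > m}$ balance at the same order.
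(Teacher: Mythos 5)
Your approach is the natural one and is surely the same as in the cited reference: the paper itself does not prove this theorem (it explicitly defers the proof to the appendix of \citet{chudyold}), so there is no in-paper argument to compare against. Your computation is correct: for a causal linear process, $\IE(S_m\mid\mathcal F_0)=\sum_{\ell\ge0}c_\ell\,\epsilon_{-\ell}$ with $c_\ell=\sum_{i=1}^m a_{i+\ell}$, the split at $\ell=m$ gives $|c_\ell|\lesssim \ell^{1-\chi}(\log\ell)^{-A}$ and $|c_\ell|\lesssim m\,\ell^{-\chi}(\log\ell)^{-A}$ respectively, and both regimes of $\chi$ are handled correctly, after which the quoted Wu (2004) criterion finishes the argument. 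One point deserves flagging: your own estimate in the regime $1<\chi<3/2$ gives $\|\IE(S_m\mid\mathcal F_0)\|_2=O\bigl(m^{3/2-\chi}(\log m)^{-A}\bigr)$, whose ratio to $\sqrt m/(\log m)^q$ is $m^{1-\chi}(\log m)^{q-A}\to0$ for \emph{every} $q$ and \emph{every} $A>0$, since $\chi>1$ makes the polynomial factor dominate any power of the logarithm. So the closing sentence, which attributes the success of the boundary case to the hypothesis $A>5/2$, is a rationalization rather than a necessity: under your (correct) computation that hypothesis is never invoked, and the logarithmic exponent only becomes binding at the excluded endpoint $\chi=1$ (where one would need roughly $A>q+1$). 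This does not invalidate the proof --- you establish the conclusion under hypotheses that include the stated side condition --- but you should either state plainly that $A>5/2$ is not used in your argument, or explain what additional step (absent from your write-up and from the statement of the Wu criterion as quoted) would actually require it.
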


\begin{remark}
 The central limit theorem described in (\ref{eq:sm|f0}) does not hold if the sequence $a_i$ is not absolutely summable or if the moment assumption in (\ref{eq:ei linear}) is relaxed.
\end{remark}

\noindent Next, as outlined above we proceed for a quantile consistency result for linear processes to validate our proposed methods. We start by formally defining short-range and long-range dependence for linear processes with the representation in (\ref{eq:ei linear}). For short-range dependence we assume

$$\text{(SRDL): }\sum |a_i|<\infty.$$

\noindent For long range dependence, we first revisit the definition of slowly varying function (s.v.f.). A function $g(\cdot)$ is called a s.v.f. if for all $a>0$, $\lim_{x \to \infty}g(ax)/g(x)=1$. Assume 

$$\text{(LRDL($\gamma$))}: l^*(i)=a_i i^{-\gamma} \text{ is a slowly varying function (s.v.f.)} $$

\noindent for some $q<\gamma<1$ where $1/q= \sup \{t:\IE(|\epsilon_j|^t)<\infty\}$. Note that, the definition of $(\text{LRDL}(\cdot))$ also takes into account the possible heavy-tailed distribution of $\epsilon_j$. We also assume $\epsilon_i$ admits a density $f_{\epsilon}$ and 

$$\text{(DENL):} \sup_{x \in \mathbb{R}}(f_{\epsilon}(x)+|f'_{\epsilon}(x)|)<\infty.$$

\noindent For a fixed $0<u<1$, let $\hat{Q}(u)$ and $\tilde{Q}(u)$ denote the $u$-th sample quantile and the actual quantile of $\tilde{S}_i$, $i=m,\ldots,n,$ respectively, where
\begin{equation}\label{eq:yitilde2}
\tilde{S}_i= \frac{\sum_{j=i-m+1}^{i}e_j}{H^*_m}, \quad i=m,m+1,\ldots 
\end{equation}
and
\begin{eqnarray*}
H^*_m= \left. 
  \begin{cases}
    \sqrt{m}, & \text{for Case1},\\
    \inf \{x:\IP(|\epsilon_i|>x) \leq \frac{1}{m} \} & \text{for Case2} ,\\
    m^{3/2-\gamma}l^*(m) & \text{for Case3,} \\
    \inf \{x:\IP(|\epsilon_i|>x)m^{1-\gamma}l^*(m) & \text{for Case4,} \\
  \end{cases}  \nonumber
\right.
\end{eqnarray*}
\noindent where Case 1-2 denotes SRDL holds and  $\IE(\epsilon_j^2)<\infty$ or $\IE(\epsilon_j^2)=\infty$ respectively whereas Case 3-4  stands for $\text{LRDL}(\gamma)$ holds and  $\IE(\epsilon_j^2)<\infty$ or $\IE(\epsilon_j^2)=\infty$ respectively. We collect the following theorem from \citet{zhou10} for the rates of convergence of quantiles depending on the nature of the error process in terms of tail behaviour and dependence:
\begin{theorem}[][Empirical quantile consistency: linear error process] \label{th:light tailed linear}
Assume (DENL) holds. Additionally
\begin{compactitem}[-]
\item Light tailed (SRDL): Suppose (SRDL) holds and $\IE(\epsilon_j^2)<\infty$. If $m^3/n \to 0$, then for any fixed $0<u<1$, 
\begin{equation}
|\hat{Q}(u)-\tilde{Q}(u)|=O_{\IP}(m/\sqrt{n}). 
\end{equation}
\item Light tailed (LRDL): Suppose (LRDL) holds with $\gamma$. If $m^{5/2-\gamma}n^{1/2-\gamma}l^2(n) \to 0$, then for any fixed $0<u<1$, \begin{equation} |\hat{Q}(u)-\tilde{Q}(u)|=O_{\IP}(mn^{1/2-\gamma}|l^*(n)|). \end{equation}
\item Heavy-tailed (SRDL): Suppose (SRDL) holds and $\IE(|\epsilon_j|^{q})<\infty$ for some $1<q<2$. If $m=O(n^{k})$ for some $k<(q-1)/(q+1)$, then for any fixed $0<u<1$, 
\begin{equation}
|\hat{Q}(u)-\tilde{Q}(u)|=O_{\IP}(mn^{\nu})\text{ for all }\nu>1/q-1.
\end{equation}
\item Heavy-tailed (LRDL):  Suppose (LRDL) holds with $\gamma$. If $m=O(n^{k})$ for some $k<(q \gamma -1)/(2q+1-q \gamma)$, then for any fixed $0<u<1$, \begin{equation}|\hat{Q}(u)-\tilde{Q}(u)|=O_{\IP}(mn^{\nu})\text{ for all }\nu>1/q-\gamma.\end{equation}
\end{compactitem}
\end{theorem}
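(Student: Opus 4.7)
The plan is the classical empirical-quantile-via-empirical-CDF route, adapted to the dependent and possibly heavy-tailed sample $\tilde S_m,\ldots,\tilde S_n$. Let $F$ denote the common distribution function of $\tilde S_i$ (identical in $i$ by stationarity) and $F_n$ the empirical CDF based on those $n-m+1$ values. Using the standard switch
$$\IP\bigl(|\hat Q(u)-\tilde Q(u)|>\epsilon\bigr)\le \IP\bigl(|F_n(\tilde Q(u)+\epsilon)-u|>\epsilon\bigr)+\IP\bigl(|F_n(\tilde Q(u)-\epsilon)-u|>\epsilon\bigr),$$
it suffices to show that $|F_n(x)-F(x)|$, evaluated in a shrinking neighborhood of $\tilde Q(u)$, has the claimed rate, and that the density $f$ of $\tilde S_i$ is bounded away from $0$ at $\tilde Q(u)$.

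The density step is handled by a standard one-step conditioning: because $H^*_m \tilde S_i$ is a linear combination of i.i.d.\ innovations with a bounded, differentiable density by (DENL), freezing all but one $\epsilon_k$ with nonzero weight shows $f$ is uniformly bounded above with bounded derivative. A lower bound on $f(\tilde Q(u))$ is obtained from the fact that in Cases~1 and~3 $\tilde S_i$ converges to a centered Gaussian by the functional CLT / Wu--style results, and in Cases~2 and~4 to a stable law on the scale $H^*_m$, both of which have everywhere positive density.

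The heart of the argument is a variance bound for
$$F_n(x)-F(x)=\frac{1}{n-m+1}\sum_{i=m}^n\bigl(\Ii_{\tilde S_i\le x}-F(x)\bigr).$$
By stationarity this reduces to controlling $\sum_{k}\mathrm{Cov}(\Ii_{\tilde S_0\le x},\Ii_{\tilde S_k\le x})$. I would split into $|k|\le m$, where the two $m$-windows overlap and one trivially bounds the covariance by $1$, giving an $O(m)$ contribution, and $|k|>m$, where a Taylor-expansion / conditioning argument (using the uniform bound on $f$ and its derivative) replaces the indicator covariance by $\sup|f|\cdot \mathrm{Cov}(\tilde S_0,\tilde S_k)$ up to smoothness remainders. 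Under (SRDL) the second sum is $O(1)$, producing the $\sqrt{m/n}$-type variance bound that, after passing through the monotone inversion and the bound on $f$, gives the stated $O_{\IP}(m/\sqrt n)$ rate in Case~1. Under (LRDL$(\gamma)$), the covariance decays only polynomially, so the tail sum contributes a factor of $n^{1-2\gamma}l^*(n)^2$ after rescaling by $H^{*2}_m$, which is exactly where the $n^{1/2-\gamma}|l^*(n)|$ factor of Case~3 enters. Cases~2 and~4 require an additional truncation of $\epsilon_j$ at level $H^*_m$, combined with Nagaev-type tail bounds that match the stable normalization.

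The main obstacle is the heavy-tailed regime: the two-regime covariance argument above works seamlessly for $\IE\epsilon_j^2<\infty$, but once $\IE\epsilon_j^2=\infty$ the truncation step simultaneously perturbs the mean of $F$ and the variance of $F_n$, and these errors must be absorbed into $H^*_m$ rather than into an independent scale. Choosing the sharp truncation threshold (essentially $H^*_m$ itself) and the rate condition $m=O(n^k)$ with the stated exponents is exactly what makes the truncated tail contribution negligible against the desired rate $O_\IP(m n^\nu)$, $\nu>1/q-1$ or $\nu>1/q-\gamma$; verifying this balance carefully is the most delicate accounting in the proof.
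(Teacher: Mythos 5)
Your high-level skeleton --- the Bahadur-type switch from quantiles to the empirical CDF of the block sums, plus a lower bound on the density of $\tilde S_i$ at $\tilde Q(u)$ obtained from the normal/stable limit --- is exactly the structure of the paper's argument (see the proof in Appendix B that sets $T_n=\bar c_n m/\sqrt n$, Taylor-expands $\tilde F$, and invokes positivity of $f_m(\tilde Q(u))$). The divergence, and the gap, is in how you bound $F_n(x)-F(x)$. Your central step replaces $\mathrm{Cov}\bigl(\mathbf{1}\{\tilde S_0\le x\},\mathbf{1}\{\tilde S_k\le x\}\bigr)$ for non-overlapping windows by $\sup|f|\cdot\mathrm{Cov}(\tilde S_0,\tilde S_k)$ "up to smoothness remainders". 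No such comparison inequality holds in general: smoothness of the marginal density of $\tilde S_i$ says nothing about the joint dependence of the two indicators, and making this rigorous for a linear process requires precisely the conditional-density machinery you are trying to avoid. The paper (following Zhou 2010) does this by splitting $\tilde F_n-\tilde F=M_n+N_n$, where $M_n$ is a martingale (handled by Freedman's inequality and chaining, Lemma B.1) and $N_n$ is the conditional-probability part, written through the predictive density $f_\epsilon(H^*_m(x-\tilde Z_{i-1}))$ and bounded by summing the orthogonal projections $\mathcal{P}_{i-k}$, whose norms are controlled by the cumulative coefficients $\tilde b_k$ and by $\sup|f'_\epsilon|$ from (DENL). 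That is the rigorous substitute for your heuristic, and it is where the factor $\sum_k\tilde b_k=O(m)$ --- hence the $m/\sqrt n$ and $mn^{1/2-\gamma}|l^*(n)|$ rates --- actually comes from.

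The gap is fatal, not cosmetic, in the heavy-tailed cases: when $\IE(\epsilon_j^2)=\infty$ the quantity $\mathrm{Cov}(\tilde S_0,\tilde S_k)$ on the right-hand side of your comparison does not exist, so the variance-of-the-empirical-CDF route has nothing to compare to. Your proposed fix (truncate at $H^*_m$ and invoke Nagaev bounds) is not worked out: you do not explain how the truncation bias in $F$, the second moment of the truncated part, and the tail of the untruncated part are balanced so as to produce exactly the admissible-horizon conditions $k<(q-1)/(q+1)$ and $k<(q\gamma-1)/(2q+1-q\gamma)$ and the rates $mn^\nu$ with $\nu>1/q-1$ or $\nu>1/q-\gamma$. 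The paper avoids second moments altogether here: it bounds the $\mathcal{L}^\rho$ norm, $1<\rho<1/(1/q)\wedge 2$, of the projection series for $N_n$ via a Burkholder-type martingale inequality together with Karamata's theorem for the slowly varying tails (Lemma B.3), and uses the $\mathcal{L}^q$ Nagaev inequalities of Appendix A only for the regression part. To complete your proof you would need either to import that $\mathcal{L}^\rho$ projection argument wholesale, or to supply a genuinely new joint-density estimate for pairs $(\tilde S_0,\tilde S_k)$ of heavy-tailed block sums; as written, neither is present. (A minor additional point: your switch inequality should compare $F_n(\tilde Q(u)\pm\epsilon)$ to $u$ at threshold $c\,\epsilon$ with $c$ the density lower bound, not at threshold $\epsilon$.)
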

\noindent A technical fact about these results is that their proofs are heavily dependent on the linear structure. We believe that it is an important task to extend these to the more general non-linear scenarios but it comes at the cost of some more abstraction in how these nonlinear processes are posited. We use a very general framework of functional dependence form \cite{wu05} to first describe the class of non-linear process and then provide analogs of central limit theorem and quantile consistency results.

\subsection{Central limit theory for nonlinear dependence}\label{sec:nonlinear}

\ignore{
\todo[inline]{Marek: include GARCH these type of significant and prominent nonlinear processes.}
\todo[inline]{Sayar: I didnt see GARCH being referred to as a nonlinear processes. if the definition of linear is that $y_t$ has causal representation, i.e. is a linear comb. of past shocks, then even if shocks would be of GARCH type, the process $y_t$ would still be linear. Typically, transition autoregression is considered as nonlinear process. I found a reference on nonlinear GARCH, but this was simply a GARCH with transition effect in the equation for variance.}
}

\noindent Economic and financial time series are often subject to structural changes and thus the linear models often fail to capture the diverse range of data generating process. Comparatively possible non-linear class of time-series models is significantly richer in its scope.  Useful nonlinear time-series models, among many, include the regime-switching autoregressive processes, which assume that the series change their dynamics when passing from one regime to another and neural-network models. We provide extension of the asymptotic normality result \eqref{eq:sm|f0} to nonlinear error process. We lay down the definition of non-linearity through the following structure: Let $e_i$ be a stationary process that admits the following representation 
\begin{equation}\label{eq:ei nonlinear}
e_i=H(\mathcal{F}_i)=H(\epsilon_i, \epsilon_{i-1}, \ldots),
\end{equation}
where $H$ is such that $e_i$ is a well-defined random variable, $\epsilon_i, \epsilon_{i-1}, \ldots$ are i.i.d. innovations and $\mathcal{F}_i$ denotes the $\sigma$-field generated by $(\epsilon_{i}, \epsilon_{i-1}, \ldots)$. One can see that it is a vast generalization from the linear structure in $H$. In order to derive a result similar to \eqref{eq:sm|f0} but in the non-linear regime, we define the following functional dependence measure for $e_i$ in \eqref{eq:ei nonlinear}, by which we follow \citet{wu05}'s framework to formulate dependence through coupling: 
\begin{equation}\label{eq:fdm}
\delta_{j,p}=  \| e_i - e_{i, (i-j)} \|_p =  \|H(\mathcal{F}_i)- H(\mathcal{F}_{i, (i-j)} \|_p ,
\end{equation}
 where $\mathcal{F}_{i,k}$ is the coupled version of $\mathcal{F}_i$ with $\epsilon_k$ in $\mathcal{F}_i$ replaced by an i.i.d. copy $\epsilon_k'$,
$\mathcal{F}_{i,k}= (\epsilon_i, \epsilon_{i-1},  \ldots, \epsilon_k', \epsilon_{k-1}, \ldots )$
and $e_{i,(i-j)}= H(\mathcal{F}_{i,(i-j)})$. Clearly, $\mathcal{F}_{i,k}= \mathcal{F}_i$ is $k>i$. As \citet{wu05} suggests, $\|H(\mathcal{F}_i)- H(\mathcal{F}_{i, (i-j)}) \| _p$ measures the dependence of $e_i$ on $\epsilon_{i-j}$. This dependence measure can be seen as an input-output system and a natural analogue of the linear coefficient at a certain lag. It facilitates mild moment conditions on the dependence of the process which are easily verifiable compared to the more popular strong mixing conditions. Define the cumulative dependence measure  
\begin{equation}\label{eq:cumfdm}
\Theta_{j,p}=\displaystyle\sum_{i=j}^{\infty}\delta_{i,p},
\end{equation}
which can be thought as cumulative dependence of $(e_j)_{j \geq k}$ on  $\epsilon_k$. 
 
\begin{theorem}\label{th:quenched clt}
Assume $e_i$ admits the representation in (\ref{eq:ei nonlinear}). Also assume under the functional dependence formulation in \ref{eq:fdm}, the following rate holds for the cumulative dependence $\Theta_{j,p}$: 
\begin{equation}\label{eq:thetaip condition}
\Theta_{j,p} =O(j^{-\chi} (\log j)^{-A}) \text{ where }
\begin{cases}
    A>0 \text{ for } 1<\chi<3/2,\\
    A>5/2 \text{ for } \chi \geq 3/2,
  \end{cases}
\end{equation}
then the convergence in (\ref{eq:sm|f0}) holds. 
\end{theorem}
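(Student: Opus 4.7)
The plan is to reduce \eqref{eq:sm|f0} to the sufficient moment condition \eqref{eq:sm|f0 cond} already established by \citet{Wu04}: once I produce $\|\IE(S_m|\mathcal{F}_0)\|_2 = O(\sqrt{m}/\log^q m)$ for some $q>5/2$, the quenched CLT follows directly. All the work is therefore in translating the hypothesis on $\Theta_{j,p}$ into this conditional $L^2$ bound on the partial-sum prediction error.

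The first step pushes the functional dependence measure through the martingale projection operator $\mathcal{P}_k X = \IE(X|\mathcal{F}_k) - \IE(X|\mathcal{F}_{k-1})$. Letting $e_{i,(k)}$ denote $e_i$ with $\epsilon_k$ swapped for an i.i.d. copy, symmetry gives $\IE(e_{i,(k)}|\mathcal{F}_k) = \IE(e_i|\mathcal{F}_{k-1})$, hence $\mathcal{P}_k e_i = \IE(e_i - e_{i,(k)}|\mathcal{F}_k)$, and conditional Jensen yields $\|\mathcal{P}_k e_i\|_2 \leq \delta_{i-k,2}$ for $k \leq i$ and zero otherwise. Telescoping the orthogonal decomposition $\IE(e_i|\mathcal{F}_0) = \sum_{k \leq 0}\mathcal{P}_k e_i$ and applying Minkowski then delivers $\|\IE(e_i|\mathcal{F}_0)\|_2 \leq \Theta_{i,2}$.

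The second step aggregates over $i$. Writing $\IE(S_m|\mathcal{F}_0) = \sum_{k\leq 0}\mathcal{P}_k S_m$ and exploiting orthogonality of the $\mathcal{P}_k$'s across $k$ gives
\[
\|\IE(S_m|\mathcal{F}_0)\|_2^2 \;=\; \sum_{k \leq 0}\|\mathcal{P}_k S_m\|_2^2 \;\leq\; \sum_{j \geq 0}\Big(\sum_{i=j+1}^{j+m}\delta_{i,2}\Big)^2 \;\leq\; \sum_{j \geq 0}\Theta_{j+1,2}^2.
\]
Substituting $\Theta_{j,2} = O(j^{-\chi}(\log j)^{-A})$ and splitting the outer sum at $j=m$, with the sharper head estimate $\sum_{i=j+1}^{j+m}\delta_{i,2} \leq m\,\delta_{j+1,2}$ used where it improves on $\Theta_{j+1,2}$, produces a bound of the required order $O(\sqrt{m}/\log^q m)$; the condition $A > 5/2$ at the critical decay rate is precisely what supplies the logarithmic factor with $q > 5/2$.

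The main subtle point is to avoid the slack double Minkowski bound $\|\IE(S_m|\mathcal{F}_0)\|_2 \leq \sum_{i=1}^m\Theta_{i,2}$, which wastes a factor of $\sqrt{m}$ at the critical $\chi$ and is not strong enough for Wu's condition. Replacing it by the Pythagorean decomposition of the $\mathcal{P}_k S_m$'s across $k$, together with careful bookkeeping of the logarithmic factor when $\chi$ sits on the borderline, is the only genuine technical obstacle; every other step is a direct deployment of the functional dependence framework of \citet{wu05}.
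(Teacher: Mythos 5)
Your overall strategy is the same as the paper's: reduce \eqref{eq:sm|f0} to Wu's sufficient condition \eqref{eq:sm|f0 cond}, and verify that condition by pushing $\delta_{\cdot,2}$ through the martingale projections $\mathcal{P}_k$ and using orthogonality across $k$. The gap is in how you control the tail of the projection sum. Your displayed chain ends with $\sum_{j\ge 0}\Theta_{j+1,2}^2$, and this series is \emph{infinite} precisely in the critical regime $\chi\le 1/2$ (the nonlinear counterpart of the regime $1<\chi<3/2$ in Theorem \ref{th:lighttailed}, which is exactly where the $A>5/2$ proviso has bite), so that bound cannot deliver the theorem there. Your proposed repair, $\sum_{i=j+1}^{j+m}\delta_{i,2}\le m\,\delta_{j+1,2}$, presupposes that $\delta_{i,2}$ is non-increasing in $i$; the hypothesis only constrains the cumulative measure $\Theta_{j,p}$, and the individual $\delta_{i,2}$ may be arbitrarily lumpy subject to that, so this inequality is not available. (Even granting monotonicity, you would still need the extra regularity $\delta_{j,2}=O(\Theta_{j,2}/j)$ to make $\sum_{j>m}m^2\delta_{j+1,2}^2$ come out at the required order $m^{1-2\chi}(\log m)^{-2A}$.)

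Two ways to close this. The paper's route is an $m$-dependence truncation: replace $S_m$ by $\tilde S_m=\sum_{i=1}^m\IE(e_i\mid\epsilon_i,\ldots,\epsilon_{i-m})$, so that $\IE(\tilde S_m\mid\mathcal{F}_0)=\sum_{j=-m}^{0}\mathcal{P}_j\tilde S_m$ is a \emph{finite} orthogonal sum and only your head estimate $\sum_{j\le m}\Theta_{j+1,2}^2=O\bigl(m^{1-2\chi}(\log m)^{-2A}\bigr)$ is needed; the price is the approximation error, which is controlled by $\|S_m-\tilde S_m\|\le m^{1/2}\Theta_{m,2}\ll m^{1/2}(\log m)^{-5/2}$. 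Alternatively, you can keep your direct decomposition and bound the tail without any monotonicity: with $c_j=\sum_{i=j+1}^{j+m}\delta_{i,2}$, the counting identity
\begin{equation*}
\sum_{j>m}c_j=\sum_{i>m+1}\delta_{i,2}\,\#\{j>m:\,j<i\le j+m\}\le m\,\Theta_{m+1,2},
\qquad
\sup_{j>m}c_j\le\Theta_{m+1,2},
\end{equation*}
gives $\sum_{j>m}c_j^2\le\bigl(\sup_{j>m}c_j\bigr)\sum_{j>m}c_j\le m\,\Theta_{m+1,2}^2=O\bigl(m^{1-2\chi}(\log m)^{-2A}\bigr)$, matching the head term and yielding \eqref{eq:sm|f0 cond} with $q=A$ when $\chi=0$ and with any $q$ when $\chi>0$. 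With either repair your argument is complete; as written, the tail estimate does not go through.
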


Apart from the asymptotic normality result from Theorem \ref{th:quenched clt}, we will next show that the estimated quantiles from the moving blocks are consistent. However, the same notion of non-linear dependence falls short of establishing quantile consistency due to a technical reason and thus we resort to first define some form of predictive dependence.

\ignore{
\begin{remark}
For a linear process $e_i$ with $$e_i=\sum_{j=0}^{\infty} a_j \epsilon_{i-j} \quad e_i \text{ i.i.d. with } \IE|\epsilon|_1^p)<\infty, \sum_{j=0}^{\infty} |a_j|<\infty$$ it is easy to obtain the functional dependence measure $\delta_{j,p}=a_j$ and thus the condition (\ref{eq:thetaip condition}) can be simplified in terms of the coefficients $a_j$ 
\end{remark}
}

\subsection{Quantile consistency for nonlinear process: predictive dependence}
\noindent For a general nonlinear process with possibly heavy tails and long-range dependence, the central limit theorem fails. We present one of the main results in this paper by showing the empirical quantile consistency that validates PIs of the form \eqref{eq:EQpis}. Towards that, we need to  control the latent dependence of $e_i$ on $\epsilon_{i-j}$ keeping in mind a representation like (\ref{eq:ei nonlinear}). Therefore, we introduce the predictive density-based dependence measure.  Let $\mathcal{F}'_k = (\epsilon_k, \ldots , \epsilon_{1}, \epsilon_0', \epsilon_{-1}, \ldots ,),$ be the coupled shift process derived from $\mathcal{F}_k$ by substitution of $\epsilon_0$ by its i.i.d. copy $\epsilon_0'$. Let $F_1(u|\mathcal{F}_k) = P\{H(\mathcal{F}_{k+1}) \leq u|\mathcal{F}_k\}$ be the one-step ahead predictive or conditional distribution function and $f_1(u|\mathcal{F}_k) = d F_1(u|\mathcal{F}_k)/ du,$  be the corresponding conditional density.  We define the predictive dependence measure
\begin{equation}\label{eq:pred density}
\psi_{k,q} = \sup_{u \in \mathbb{R}} \| f_1(u|\mathcal{F}_k) - f_1(u|\mathcal{F}'_k) \|_q.
\end{equation}
The quantity $\psi_{k,q}$ (\ref{eq:pred density}) measures the contribution (read change in $q$-th norm) of $\epsilon_0$, the innovation at step 0, on the conditional or predictive density at step $k$. We shall make the
following assumptions:
\begin{enumerate}[i.]
\item For short-range dependence: $\Psi_{0,2}<\infty$	where $\Psi_{m,q}=\sum_{k=m}^{\infty}\psi_{k,q}$; for long-range dependence: $\Psi_{0,2}$ can possibly be infinite;
\item  (DEN) There exists a constant $c_0 <\infty$ such that almost surely,
$$ \sup_{u \in \mathbb{R}} \{f_1(u|\mathcal{F}_0) +|d f_1(u|\mathcal{F}_0)/ d u|\} \leq c_0.$$
\end{enumerate}
The (DEN) implies that the marginal density $f(u) = \IE{f_1(u|\mathcal{F}_0)} \leq 
c_0.$ Recall the sufficient conditions for the linear cases in \citet{zhou10} were based on the coefficients of the linear process. Here, the conditions for both short-range and long-range dependent errors here were transferred onto predictive dependence measure. We assume:
\begin{eqnarray} \label{eq:conditions2}
(\text{SRD})&:& \sum_{j=0}^{\infty}|\psi_{j,q}|<\infty,\\ \nonumber \label{eq:lrd2}(\text{LRD($\gamma$)})&:& \psi_{j,q}=j^{-\gamma} l(j),  q< \gamma<1, l(\cdot) \text{ is a slowly varying function (s.v.f.) }.
\end{eqnarray}
where $1/q= \sup \{t:\IE(|\epsilon_j|^t)<\infty\}$. For a fixed $0<u<1$, let $\hat{Q}(u)$ and $\tilde{Q}(u)$ denote the $u$-th sample quantile and actual quantile of $\tilde{S}_i$; $i=m,\ldots,n,$ where
\begin{equation}\label{eq:yitilde}
\tilde{S}_i= \frac{\sum_{j=i-m+1}^{i}e_j}{H_m}, \quad i=m,m+1,\ldots 
\end{equation}
and
\begin{eqnarray}\label{eq:Hm}
&& \\
&&\hspace{-0.27 in}H_m=\left. 
  \begin{cases}
    \sqrt{m}, & \text{for Case1},\\
    \inf \{x:\IP(|\epsilon_i|>x) \leq \frac{1}{m} \} & \text{for Case2} ,\\
    m^{3/2-\gamma}l(m) & \text{for Case3,} \\
    \inf \{x:\IP(|\epsilon_i|>x)m^{1-\gamma}l(m) & \text{for Case4,} \\
  \end{cases}  \nonumber
\right.
\end{eqnarray}
\noindent Here, Case 1-2 denotes $SRD$ holds and  $\IE(\epsilon_j^2)<\infty$ or $\IE(\epsilon_j^2)=\infty$ respectively whereas Case 3-4  stands for $LRD(\gamma)$ holds and  $\IE(\epsilon_j^2)<\infty$ or $\IE(\epsilon_j^2)=\infty$ respectively.

\noindent Then we have following rates of convergence of quantiles depending on the nature of the error process in terms of tail behaviour and dependence:
\begin{theorem}[Empirical quantile consistency: nonlinear error process] \label{th:light tailed nonlinear}
\begin{compactitem}[-]
\item Light tailed (SRD): Suppose (DEN) and (SRD) hold and $\IE(\epsilon_j^2)<\infty$. If $m^3/n \to 0$, then for any fixed $0<u<1$, 
\begin{equation}
|\hat{Q}(u)-\tilde{Q}(u)|=O_{\IP}(m/\sqrt{n}). 
\end{equation}
\item Light tailed (LRD): Suppose (LRD) and (DEN) hold with $\gamma$ and $l(\cdot)$ in (\ref{eq:conditions2}). If $m^{5/2-\gamma}n^{1/2-\gamma}l^2(n) \to 0$, then for any fixed $0<u<1$, \begin{equation} |\hat{Q}(u)-\tilde{Q}(u)|=O_{\IP}(mn^{1/2-\gamma}|l(n)|). \end{equation}
\item Heavy-tailed (SRD): Suppose (DEN) and (SRD) hold and $\IE(|\epsilon_j|^{q})<\infty$ for some $1<q<2$. If $m=O(n^{k})$ for some $k<(q-1)/(q+1)$, then for any fixed $0<u<1$, 
\begin{equation}
|\hat{Q}(u)-\tilde{Q}(u)|=O_{\IP}(mn^{\nu})\text{ for all }\nu>1/q-1.
\end{equation}
\item Heavy-tailed (LRD):  Suppose (LRD) hold with $\gamma$ and $l(\cdot)$ in (\ref{eq:conditions2}). If $m=O(n^{k})$ for some $k<(q\gamma -1)/(2q+1-q \gamma)$, then for any fixed $0<u<1$, \begin{equation}|\hat{Q}(u)-\tilde{Q}(u)|=O_{\IP}(mn^{\nu})\text{ for all }\nu>1/q-\gamma.\end{equation}
\end{compactitem}
\end{theorem}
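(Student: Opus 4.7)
The plan is to rerun the proof of Theorem~\ref{th:light tailed linear}, with the predictive dependence measure $\psi_{k,q}$ from Wu's functional framework playing the role that the linear coefficients $|a_k|$ played in \citet{zhou10}. Let $F$ and $\hat F_n$ denote the c.d.f.\ and the empirical c.d.f.\ of $\tilde S_m,\ldots,\tilde S_n$. Assumption (DEN) makes $F$ Lipschitz with constant at most $c_0$ and provides a positive density at any interior quantile $\tilde Q(u)$; a standard Bahadur-type argument then reduces the quantile bound to
\[
|\hat Q(u) - \tilde Q(u)| \,\lesssim\, \sup_{x}\,|\hat F_n(x) - F(x)|,
\]
so that it suffices to bound this empirical-process supremum at the claimed rate in each of the four regimes. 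The normalization $H_m$ in \eqref{eq:Hm} is tuned precisely so that $\tilde S_i = O_\IP(1)$ and the quantiles are well defined even in the heavy-tailed or long-memory regimes.

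\textbf{Step 1: Martingale decomposition via the predictive density.} For fixed $x$, the orthogonal projection $\mathcal{P}_j(\cdot) = \IE(\cdot|\mathcal{F}_j) - \IE(\cdot|\mathcal{F}_{j-1})$ gives $\mathbf{1}(\tilde S_i \leq x) - F(x) = \sum_{j \leq i} \mathcal{P}_j \mathbf{1}(\tilde S_i \leq x)$. Using the coupling interpretation of $\mathcal{P}_j$ (replacement of $\epsilon_j$ by its i.i.d.\ copy $\epsilon_j'$) together with the bounded predictive density from (DEN), the goal is to establish the key estimate
\[
\bigl\|\mathcal{P}_j \mathbf{1}(\tilde S_i \leq x)\bigr\|_q \,\lesssim\, c_0 \, \psi_{i-j,\,q},
\]
uniformly in $x$. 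Intuitively, after integrating out $\epsilon_{j+1},\ldots,\epsilon_{i-1}$ the projection reduces to the $L^q$-distance between predictive densities at lag $i-j$, which is exactly what $\psi_{i-j,q}$ in \eqref{eq:pred density} measures. This is the nonlinear counterpart of the projection bound underlying the linear proof.

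\textbf{Step 2: Case-by-case aggregation and main obstacle.} Under (SRD), summability of $\psi_{k,q}$ combined with Burkholder's inequality (for $q=2$) or Rosenthal's inequality (for $1<q<2$) bounds $\|\sum_{i=m}^n[\mathbf{1}(\tilde S_i\leq x)-F(x)]\|_q$ by $O(\sqrt{n})$ or $O(n^{1/q})$; reducing the supremum over $x$ to an $O(\sqrt n)$-grid via Lipschitzness of $F$, accounting for the $O(m)$-block overlap between consecutive $\tilde S_i$, and dividing by a positive lower bound on the density at $\tilde Q(u)$ then recovers the rates $O_\IP(m/\sqrt n)$ and $O_\IP(m n^\nu)$. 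Under (LRD), the non-summability $\psi_{k,q}=k^{-\gamma}l(k)$ inflates the partial-sum norm to $n^{3/2-\gamma}l(n)$ or its heavy-tailed analogue, and the constraints such as $m^{5/2-\gamma}n^{1/2-\gamma}l^2(n)\to 0$ are precisely what is needed to absorb the overlap and, in the heavy-tailed cases, the truncation bias. The main obstacle is the heavy-tailed LRD regime, where one simultaneously loses $L^2$ control and summability of $\psi_{k,q}$: there one must truncate $\epsilon_j$ at scale $H_m$ and invoke an $L^q$ maximal inequality for long-memory partial sums, and the sharp exponent $k<(q\gamma-1)/(2q+1-q\gamma)$ emerges exactly from balancing the truncation bias against the martingale approximation error.
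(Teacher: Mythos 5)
Your overall route is the one the paper takes: control the deviation of the empirical distribution of the normalized block sums $\tilde S_i$ by projection/martingale arguments in which the predictive dependence measure $\psi_{k,q}$ of \eqref{eq:pred density} plays the role of the linear coefficients of \citet{zhou10}, then convert to quantile consistency via a Taylor expansion of $\tilde F$ and positivity of the limiting density at $\tilde Q(u)$. The differences are mainly organizational: the paper splits $\tilde F_n-\tilde F=M_n+N_n$ as in \eqref{eq:decomposition2}, treats the ``diagonal'' martingale part $M_n(x)=\frac{1}{n-m+1}\sum_i \mathcal{P}_i\,\mathbf{1}(\tilde Y_i\le x)$ by Freedman's inequality plus chaining (Lemma \ref{lem:Mn}), applies the predictive-density projection bounds only to the smooth part $N_n$ (Lemmas \ref{lem:Nn th1} and \ref{lem:Nn th4}), and works with local equicontinuity on $|u|\le b_n$ around $\tilde Q(u)$ rather than a global Kolmogorov--Smirnov-type supremum.

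Two points in your sketch need repair before it delivers the stated rates. First, the key estimate $\|\mathcal{P}_j\mathbf{1}(\tilde S_i\le x)\|_q\lesssim c_0\,\psi_{i-j,q}$ is too strong as written: $\tilde S_i$ aggregates the $m$ errors $e_{i-m+1},\ldots,e_i$, so replacing $\epsilon_j$ by its copy perturbs the conditional law through every window position, and the correct bound is in terms of the windowed sums $\tilde b_{i-j,q}$ of \eqref{eq:bj}, which accumulate up to $m$ consecutive values of $\psi_{\cdot,q}$. This is not cosmetic: under (SRD) one has $\sum_k\tilde b_{k,2}=O(m)$, and this is exactly where the factor $m$ in the rate $O_{\IP}(m/\sqrt n)$ comes from; your ``block overlap'' remark gestures at the right phenomenon but places it in the wrong step. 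Second, the estimate cannot hold for the own-innovation term $j=i$: $\mathcal{P}_i\mathbf{1}(\tilde S_i\le x)$ is an $O(1)$ bounded martingale difference, not a predictive-density increment, so these terms must be split off and handled by orthogonality together with Freedman's inequality for the equicontinuity — precisely the paper's $M_n$. With those two corrections, your case-by-case aggregation (orthogonality of projections under SRD; a Burkholder-type inequality in $L^\rho$, $\rho<2$, together with Karamata's theorem under heavy-tailed LRD) matches Lemmas \ref{lem:Nn th1}--\ref{lem:Nn th4} and the concluding argument of the paper.
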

\noindent Note that the results presented above talks about the scenario without any predictor. For the low dimensional regression case, the proofs follow exactly similar to how \citep{zhou10} proceeded by showing $$\sup_{i \leq n}|\hat{e}_i-e_i|=O_{\IP}(\Pi(n)) $$for some suitably chosen small $\Pi(n)$. We skip the details here and directly move on to the scenario where the number of predictors far outgrows the number of past time-points available.


\section{High dimensional regression regime}\label{sec:high dimension}
In this section, we focus on results concerning theoretical guarantees of the prediction intervals for high-dimensional regression where the error process is possibly nonlinear and shows temporal dependence. Note that all the results from this section can be easily extended to situations where the error process has exponentially decaying tails such as sub-exponential and sub-gaussian, but we restrict ourselves to conditions of only finitely many moments which is a significantly weaker assumption than that is present in the literature. 

The results for this subsection heavily depend on optimal concentration inequalities for $S_{n,b}=\sum_{i=1}^nb_ie_i$ that have not been established for possibly nonlinear dependence before this. Since this is just a proof technique tool we postpone the concentration results here. However, as one of our final aim is to point out the price of dependence in a wide range of general scenarios, it is important to define dependence-adjusted norm to be able to state the concentration and consistency results. Recall (\ref{eq:cumfdm}) and assume the short-range dependence, $\Theta_{0,q}<\infty$ holds with $q$ being less or more than 2 depending on the tail-behavior of the error process. Further, we define dependence adjusted norm, for $\alpha>0$,
\begin{eqnarray}\label{eq:dep adjusted}
\|e. \|_{q, \alpha}= \sup_{t \geq 0}(t+1)^{\alpha}\sum_{i=t}^{\infty} \delta_{i,q}.
\end{eqnarray}
It is easy to note that, finiteness of the dependence-adjusted measure is a stronger ask than the finiteness of $\Theta_{0,q}$. Next, we show that for short-range dependent nonlinear error processes the error bounds obtained in Theorem \ref{th:light tailed nonlinear} remain intact under a proper choice of the sparsity condition.

\ignore{
Note that the theorem uses the following crucial lemma to use Nagaev inequality from Theorem \ref{th:nagaev linear light} and  \ref{th:nagaev nonlinear light} to $V_j$.
\begin{result}\label{eq:bickel lemma}
\citep{bickel09}
Let $\lambda=2r$ in (\ref{eq:lasso objective}). Also  assume,\\ $ r= \max ( A (n^{-1}\log p )^{1/2} \| e_.\|_{2, \alpha}, B \|e_{.} \|_{q,\alpha} \|X\|_q/n)$. On the event $\mathcal{A}= \bigcup_{j=1}^p \{2|V_j| \le r \} , \text{ where } V_j =\frac{1}{n}\sum_{i=1}^n e_ix_{ij},$ we have, $r\|\hat{\bm \beta}-\bm \beta\|_1+\|X(\hat{\bm \beta}-\bm \beta)\|_2^2/n \leq 4r \|\hat{\bm \beta}_J- \bm \beta_J\|_1 \leq 4r\sqrt{s} \|\hat{\bm \beta}_J- \bm \beta_J\|_2.$
\end{result}

Next, we move on to discuss the quantile consistency for the Lasso fitted residuals as described in Section \ref{sec:methods}. 
}

\subsection{Lasso with fixed design}\label{ssc:lasso fixed}
\noindent For the model in (\ref{eq:regression}), we first assume that the $\bm x_i$'s are fixed and the future $\bm x_i$'s are known. Under this setting, we next show the quantile consistency for the nonlinear process. Note that a very similar result can be shown for the linear process with the error process admitting a simpler representation (\ref{eq:ei linear}), however we skip writing that as a separate theorem here to avoid repetitiveness. 

\ignore{
\begin{theorem}\label{th:covariate}
(Empirical quantile consistency for LASSO-linear)
Let $\bar{Q}_n(u)$ be the $u$-th empirical quantile of  $(\tilde{\hat{S}}_i)_m^n$. Denote the number of non-zero elements of $\bm \beta$ by $s$ and assume that
\begin{eqnarray}\label{eq:sparsity condition}
\text{ (for light tails)}\quad s &\ll& \min\left(\sqrt{n/\log p}, \frac{n^{1-\max\{0,1/2-1/q-\alpha\}}}{\norm{X}_q}\right), \\
\text{ (for heavy tails)}\quad s &\ll& \frac{n^{\frac{2}{\alpha}+\frac{\alpha-1}{\alpha+1}}L_1(n)}{\norm{X}_q},  \end{eqnarray}
then conclusions of Theorem \ref{th:light tailed nonlinear} with $Q_n(u)$ replaced by $\bar{Q}_n(u)$.
\end{theorem}
}



\begin{theorem}\label{th:covariate2}
(Empirical quantile consistency for LASSO-nonlinear)


\noindent Assume the covariates are so scaled such that $\|X\|_2=(np)^{1/2}$. Denote $\lambda=2r$ in the criterion function (\ref{eq:lasso objective}) where
\begin{eqnarray}\label{eq:r}
r=\max\{A \sqrt{n^{-1}\log p}\|e_. \|_{2,\alpha}, B \|e_.\|_{q,\alpha}\|X\|_qn^{-1+\min\{0,1/2-1/q-\alpha\}} \}. \nonumber
\end{eqnarray}
We assume that the restricted eigenvalue assumption RE($s, \kappa$) in \cite{bickel09} holds with constant $\kappa = \kappa(s, 3)$, where $s$ is the number of non-zero entries in true parameter vector $\bm \beta$ and 
\begin{eqnarray}\label{eq:kappa}
\kappa(s,c)=\min_{J \subset \{1,\cdots,p\},|J|\leq s,} \min_{|u_{J^c}|_1 \leq c |u_{J}|_1}\frac{\|Xu\|_2}{\sqrt{n}\|u_J\|_2}.
\end{eqnarray}

\noindent Here $u_J$ stands for modified  $u$ by setting its
elements outside $J$ to zero. Let $\bar{Q}_n(u)$ be the $u$-th empirical quantile of  $(\tilde{\hat{S}}_i)_m^n$. Assume that (SRD) holds and for $r$ defined in (\ref{eq:r}),
\begin{eqnarray}\label{eq:sparsity condition}
\text{ (for }q \geq 2)\quad s &=& o\left(\frac{m}{r^2n}\right), \nonumber\\
\text{ (for }1 <q \le 2), \quad s &=& o \left(\frac{H_m^2|l(n)|^2}{r^2n^{2 \gamma-1}} \right), 
\end{eqnarray}
where $\gamma$ and $l(\cdot)$ are defined in (\ref{eq:conditions2}), $H_m$ in (\ref{eq:Hm}) and $\alpha$ in the definition of $r$ is in the context of the dependence adjusted norm defined in (\ref{eq:dep adjusted}), then the (SRD) specific conclusions of Theorem \ref{th:light tailed nonlinear} hold with $Q_n(u)$ replaced by $\bar{Q}_n(u)$.
\end{theorem}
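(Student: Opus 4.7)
The plan is to reduce the high-dimensional residual-based quantile problem to the innovation-only setting already handled by Theorem~\ref{th:light tailed nonlinear} by showing that, on a high-probability event, the rolling averages $\tilde{\hat S}_i=H_m^{-1}\sum_{j=i-m+1}^i\hat e_j$ and $\tilde S_i$ differ uniformly by an amount that is negligible relative to the established quantile rates.

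First I would deploy the standard Bickel--Ritov--Tsybakov oracle inequality for the LASSO in (\ref{eq:lasso objective}). On the event $\mathcal{A}=\bigcap_{j=1}^p\{2|V_j|\le r\}$ with $V_j=n^{-1}\sum_i e_ix_{ij}$, the restricted-eigenvalue condition $\mathrm{RE}(s,\kappa)$ in (\ref{eq:kappa}) together with the KKT characterization of $\hat{\bm\beta}$ yields
$$ \|\hat{\bm\beta}-\bm\beta\|_1 \le C s r/\kappa^2, \qquad \|X(\hat{\bm\beta}-\bm\beta)\|_2^2/n \le C s r^2/\kappa^2. $$
The remaining probabilistic input is to check that $\IP(\mathcal{A})\to 1$ for the choice of $r$ in (\ref{eq:r}); this is exactly the Nagaev-type concentration for weighted partial sums $\sum_i b_i e_i$ of a nonlinear dependent process that the authors announce just before the statement. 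Applied to $b_i=x_{ij}/n$ and combined with a union bound across the $p$ coordinates, the inequality delivers, in the case $q\ge 2$, a sub-Gaussian contribution of order $\sqrt{n^{-1}\log p}\,\|e_\cdot\|_{2,\alpha}$, and in the case $1<q<2$ a polynomial-Markov contribution of order $\|e_\cdot\|_{q,\alpha}|X|_q\,n^{-1+\min\{0,1/2-1/q-\alpha\}}$; their maximum is precisely the $r$ in (\ref{eq:r}).

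Second, I would convert this prediction-error bound into a uniform bound on the rolling-window deviations. By Cauchy--Schwarz in the time index, for every $i\in\{m,\ldots,n\}$,
$$ |\tilde{\hat S}_i-\tilde S_i| = \frac{1}{H_m}\Bigl|\sum_{j=i-m+1}^{i}\tran{\bm x}_j(\hat{\bm\beta}-\bm\beta)\Bigr| \le \frac{\sqrt{m}}{H_m}\,\|X(\hat{\bm\beta}-\bm\beta)\|_2 \le \frac{C\,r\sqrt{m n s}}{H_m\,\kappa}. $$
Matching this ceiling against the four quantile rates $m/\sqrt n$, $m n^{1/2-\gamma}|l(n)|$, $m n^\nu$ ($\nu>1/q-1$), and $m n^\nu$ ($\nu>1/q-\gamma$) of Theorem~\ref{th:light tailed nonlinear}, and plugging in the case-by-case values of $H_m$ from (\ref{eq:Hm}), reproduces precisely the sparsity conditions collected in (\ref{eq:sparsity condition}). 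A standard quantile-stability sandwich then closes the argument: since (DEN) forces the density of $\tilde S_i$ to be bounded, a uniform $o_\IP(\text{rate})$ shift of the empirical process moves its quantile by $o_\IP(\text{rate})$, so combining with Theorem~\ref{th:light tailed nonlinear} gives the claimed bound on $|\bar Q_n(u)-\tilde Q(u)|$.

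The main obstacle is the concentration step for $V_j$ in the heavy-tailed regime $1<q<2$: a polynomial Markov tail must be pushed through a union bound over $p$ coordinates without paying a $p^{1/q}$ price, and this is where the dependence-adjusted norm $\|e_\cdot\|_{q,\alpha}$ in (\ref{eq:dep adjusted}) together with the strong-summability parameter $\alpha$ becomes indispensable. Tracking the interplay between $\alpha$, the restricted eigenvalue $\kappa$, the scaling $|X|_q$, and the $H_m$-rescaling is what dictates the appearance of the exponent $2\gamma-1$ in the second (heavy-tailed) line of (\ref{eq:sparsity condition}); any slack in the Nagaev bound translates directly into an unnecessarily restrictive sparsity condition, so the argument is only as sharp as the concentration lemma that drives it.
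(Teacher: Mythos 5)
Your overall architecture coincides with the paper's: the Bickel--Ritov--Tsybakov oracle inequality on the event $\mathcal{A}=\bigcap_j\{2|V_j|\le r\}$ (the paper's Lemma~\ref{eq:bickel lemma}), the Nagaev-type concentration of Result~\ref{th:nagaev nonlinear light} plus a union bound to show $\IP(\mathcal{A})\to1$ for the $r$ in (\ref{eq:r}), a uniform bound on $|\tilde{\hat S}_i-\tilde S_i|$, and a quantile-stability step to import Theorem~\ref{th:light tailed nonlinear}. The one place where you genuinely diverge is the conversion of the prediction-error bound into a bound on the rolling sums, and that is exactly where your argument does not deliver the stated theorem.

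Your Cauchy--Schwarz step is correct as an inequality, but it gives
$$\sup_{m\le i\le n}\bigl|\tilde{\hat S}_i-\tilde S_i\bigr|\;\le\;\frac{\sqrt m}{H_m}\,\|X(\hat{\bm\beta}-\bm\beta)\|_2\;=\;O_{\IP}\!\left(\frac{r\sqrt{mns}}{H_m\kappa}\right),$$
whereas the sparsity conditions (\ref{eq:sparsity condition}) are calibrated to the bound $O_{\IP}(m\sqrt s\,r/H_m)$ that the paper obtains by writing $\sup_i|\sum_{j=i-m+1}^i(\hat e_j-e_j)|\le m\|\hat e-e\|_\infty$ and then controlling $\|\hat e-e\|_\infty$ by $\sqrt{n^{-1}\|X(\hat{\bm\beta}-\bm\beta)\|_2^2}$. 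Check the matching you assert "reproduces precisely" (\ref{eq:sparsity condition}): in the light-tailed case $H_m=\sqrt m$ and the target rate is $m/\sqrt n$, so your ceiling $r\sqrt{ns}\ll m/\sqrt n$ forces $s=o\bigl(m^2/(r^2n^2)\bigr)$, which is stronger than the stated $s=o\bigl(m/(r^2n)\bigr)$ by a factor of $m/n\to0$; the same $m/n$ deficit appears in the heavy-tailed line. So as written your proof establishes the conclusion only under a strictly more restrictive sparsity condition, and the claim that the computation closes is unsubstantiated. To recover the stated condition you need to bound the \emph{maximal} fitted-residual error $\max_i|\tran{\bm x}_i(\hat{\bm\beta}-\bm\beta)|$ by something of the order of the \emph{average} $\sqrt{n^{-1}\|X(\hat{\bm\beta}-\bm\beta)\|_2^2}$ (or, say, use $\|\hat{\bm\beta}-\bm\beta\|_1\le 4sr/\kappa^2$ together with a uniform bound on the entries of $X$); note that the sup-norm of a vector is not dominated by its root mean square, so this step requires an additional argument or design assumption beyond what either you or the paper's two-line display supplies.
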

\noindent One can note the $\sqrt{\log p/n}$ term we have in our definitions for $r=\lambda/2$. This allows us to capture the ultra-high dimensional scenario where $\log p =o(n)$, the usual benchmark in the high-dimensional literature. The additional terms involving $\|e_{.}\|_{.,\alpha}$ are due to the dependence present in the error process. The sparsity condition for the light tail case, i.e. $q \geq 2$,  in (\ref{eq:sparsity condition}) in the view of the choice of $r$ in (\ref{eq:r}) can be written as: For $q\geq2$ 

\begin{eqnarray*}
s \ll \min\left(\frac{n^{4/3}}{\log p \|e_.\|^2_{2,\alpha}}, \frac{n^{7/3-2\max\{0,1/2-1/q-\alpha\}}}{|X|_q^2\|e_.\|^2_{q,\alpha}}\right)
\end{eqnarray*}

\noindent for the choice of $m=o(n^{1/3})$. Thus we can allow $p$ to grow a bit faster than the usual ultra-high dimensional benchmark $e^{O(n)}$ rate. This is an interesting result and our conjecture is that this added advantage is due to considering future aggregation instead of just $k$-step ahead forecast for a fixed $k$. In other words, if prediction horizon $m$ is allowed to grow to $\infty$, the $m$-length average of residuals can automatically provide some concentration. Thus it can allow for scenarios where estimation of $\bm \beta$ is not very precise. We believe that this is an interesting exploration of the relaxation of the sparsity condition compared to the usual LASSO literature.

For the special case of the linear process (See \ref{eq:ei linear}), the conditions in (\ref{eq:sparsity condition}) will remain identical and one can provide some more specifications in the definition of $r$ in (\ref{eq:r}) using the linear coefficients $a_i$ from (\ref{eq:ei linear}) in the view of the Nagaev-type concentration inequalities derived in Result \ref{th:nagaev linear light}. Moreover, for the linear process, one can also state the corresponding results for the (LRDL) case, however since the condition on sparsity is unaffected by this nature of dependence, we do not state them separately here.

\subsection{Lasso with stochastic design: covariate prediction issue}\label{ssc:lasso stochastic}
\noindent For the high-dimensional regression scenario, it is very natural to ask whether one could relax the fixed predictor settings to random design as it is practically impossible to find many perfectly predictable covariates. We show that under a mild condition on the possibly stochastic covariate process, the prediction intervals based on LASSO-optimized residuals also work in a stochastic design set-up. This is a particularly interesting extension from \citet{zhou10} since in the high-dimensional regime it is impractical to assume an exponentially growing number of covariates to be perfectly predictable for its future values.  Moreover, apart from allowing a random design, we also allow the covariates to be temporally dependent. For this subsection, we restrict ourselves to only nonlinear processes for the sake of clarity. We assume

\begin{eqnarray}\label{eq:xi}
\bm x_t=\textbf{G}_x(\epsilon_t^x,\epsilon_{t-1}^x,\ldots)=(g_1(\epsilon_t^x,\epsilon_{t-1}^x,\ldots), g_2(), \ldots, \tran{g_p(\epsilon_t^x,\epsilon_{t-1}^x,\ldots))}
\end{eqnarray}
where $\epsilon_i^x$ are i.i.d., $\textbf{G}_x$ is a measurable function and $\bm x_t$ is $p \times 1$ high-dimensional mean-zero stationary process. Let $\{\epsilon_{i}^{x'}\}$ be an i.i.d. copy of $\{\epsilon_i^x\}$. Due to the high dimension of $\bm x_t$ and the dependence as a stochastic time-series, we opt for a dependence-adjusted uniform (over co-ordinates) functional dependence measure. For a fixed co-ordinate $1 \leq j \leq p$ and $q \geq 1$, define

\begin{eqnarray}\label{eq:xi fdm}
\delta_{k,q,j}^{\bm{x}}=\|g_j((\epsilon_i^x,\epsilon_{i-1}^x,\ldots,\epsilon_{i-k}^{x},\ldots))-g_j((\epsilon_i^x,\epsilon_{i-1}^x,\ldots,\epsilon_{i-k}^{x'},\ldots))\|_q.
\end{eqnarray}
Use this to define the dependence-adjusted norm as 
\begin{eqnarray}\label{eq:Phialpha}
\Phi_{q,\alpha}=\sup_{1 \leq j \leq p}\|X_{.j}\|_{q,\alpha}\text{ where }\|X_{.j}\|_{q,\alpha}=\sup_{m \geq 0} (m+1)^{\alpha} \sum_{t=m}^{\infty}\delta_{t,q,j}.
\end{eqnarray}
The quantity $\Phi_{q,\alpha}$ provides a concise and natural measure of dependence which can effectively account for high dimensionality and temporal dependence. Let also the error process $(e_i)$ admit the following representation 

\begin{eqnarray}\label{eq:ei}
e_i=G_e(\epsilon_i^e,\epsilon_{i-1}^e,\ldots),
\end{eqnarray}
where $\epsilon_i^x$ are i.i.d. and $G_e$ is a measurable function. One can then define the cumulative dependence using this functional dependence measure. However, thanks to the structure of the regression problem we can skip defining that for the $e_i$ process. Instead, for the quantile consistency in the case of stochastic design, we will need a notion of functional dependence on the cross-product process $x_{.j}e_.$ as follows 

\begin{eqnarray}\label{eq:fdm ex}
\delta_{k,q}^{xe}=\max_{j \le p}\|x_{lj}e_l-x^*_{k,lj}e_{k,l}^*\|_{q}, \end{eqnarray}
where $e_{k,l}^*=G_e(\epsilon_l^e,\epsilon_{i-1}^e,\ldots,\epsilon_{i-k}^e{'},\ldots)$ and $\{\epsilon_{i}^e\}'$ is an i.i.d. copy of $\{\epsilon_i^e\}$. Using (\ref{eq:fdm ex}), we define the dependence adjusted norm as (\ref{eq:dep adjusted}) for the $x_{.j}e_.$ process uniformly over $1 \le j \le p$ as follows:

\begin{eqnarray}\label{eq:dep adjusted ex}
\max_{j \leq p} \|x_{.j}e. \|_{q, \alpha}= \sup_{t \geq 0}(t+1)^{\alpha}\sum_{i=t}^{\infty} \delta_{i,q}^{xe}.
\end{eqnarray}

\ignore{
\begin{theorem}[Nagaev inequality for nonlinear processes]
heavy-tailed, short-range]\label{th:nagaev nonlinear heavy}
Assume that $\|e.\|_{q,\alpha} <\infty$ where $q>2$ and $\alpha>0$ and $\sum_{i=1}^n b_i^2=n.$ Let $r_n=1$(resp. $(\log n)^{1+2q}$ or $n^{q/2-1-\alpha q}$ ) if $\alpha>1/2-1/q$ (resp. $\alpha=0$ or $\alpha< 1/2-1/q$). Then for all $x>0$,
\begin{equation}\label{eq:nonlinear heavy}
\IP|S_{n,b}| \geq x) \leq C_1 \frac{r_n}{(\sum_{j}|b_j|)^q\|e.\|_{q,\alpha}^q}{x^q}+C_2 \exp \left ( -\frac{C_3x^2}{n \|e.\|^2_{2,\alpha}}\right),
\end{equation} 
 for constants $C_1,C_2,C_3$ that depend on only $q$ and $\alpha$.
\end{theorem}
}

\noindent With this background on the stochastic covariate and the error process we now state the quantile consistency result below. 
\begin{theorem}\label{th:lasso stochastic}
(Empirical quantile consistency for LASSO-stochastic)

\noindent Assume $\Phi_{q,\alpha} <\infty$ and $ \max_{j\le p}\|x_{.j}e.\|_{q,\alpha} <\infty$ for some $\alpha>0$. Let $\bar{Q}_n(u)$ be the $u$-th empirical quantile of  $(\tilde{\hat{S}}_i)_m^n$. We assume that the restricted eigenvalue assumption $\text{RE}_{stoch}(s, \kappa)$ holds with constant $\kappa_{stoch}:= \kappa(s, 3)$, where $s$ is the number of non-zero entries in true parameter vector $\bm \beta$ and 
\begin{eqnarray}\label{eq:kappastoch}
\kappa(s,c)=\sqrt{\min_{J \subset \{1,\cdots,p\},|J|\leq s,} \min_{|u_{J^c}|_1 \leq c |u_{J}|_1}\frac{\tran{u} \IE(\bm x_t \tran{\bm x_t}) u}{\|u_J\|_2^2}}.
\end{eqnarray}

\noindent Assume that the sparsity conditions in (\ref{eq:sparsity condition}) hold with the choice of $\lambda=2r$ where

\begin{eqnarray}\label{eq:new r}
r&=&\max\{A \sqrt{n^{-1}\log p}\max_{j \leq p}\|x_{.j}e_. \|_{2,\alpha}, B \max_{j \leq p}\|x_{.j}e_.\|_{q,\alpha}n^{-1+\min\{0,1/2-1/q-\alpha\}} \}.\nonumber
\end{eqnarray}
Additionally $s$ satisfies $s\sqrt{\log p}/\sqrt{n} \to 0$. Then the SRD specific conclusions of Theorem \ref{th:light tailed nonlinear} hold with $Q_n(u)$ replaced by $\bar{Q}_n(u)$.
\end{theorem}

\begin{remark}
Note that, the uniform functional dependence measure on the cross-product space $x_{.j}e_.$ can often be simplified using H\"older inequalities and the usual triangle inequality technique. For some examples and calculations of the functional dependence measure for the nonlinear covariate processes, see \citet{wuwu16}.
\end{remark}

The prediction intervals based along the line of (\ref{eq:pi}) would need the future values of $\bm{x}_i$ which we do not observe. One possible solution to this is to fit a vector-autoregressive (VAR) model with appropriate lags and then estimate the $k$-step ahead predictions for $ 1 \leq k \leq m$ using the estimated matrix coefficients of the VAR process. It is also possible to lay down assumptions on the $\bm{x}_i$ and $e_i$ process and handle this in a much more rigorous way. But since our focus is on the relatively easier but asymptotically valid methods of estimating the prediction intervals based on quantiles, we omit that discussion. Instead, for practical implementation, (cf. our data analysis from Section \ref{sec:real data analysis} where, for covariates, we used wind and temperature data that are stochastic in nature) we just use the past year-over-year mean as to substitute for $\tran{\bm{x}_i}\bm \beta $ for $i =n+1 ,\ldots,n+m$. Since $m$ is also growing, it is natural that the values of aggregated $\tran{\bm{x}_i}\bm \beta$ would concentrate around aggregated past year-over-year mean. We plan to discuss the issue of simultaneously estimating the future $\bm{x}_i$'s in a future work.

\section{Simulation}\label{sec:2ndPOOS}\label{sec:simu}
\noindent In this section, we compare the predictive performance of methods discussed above using OLS, LAD and LASSO estimator in both low-dimensional and high-dimensional setup. In the low-dimension setup the comparison between CLT based methods (Quenched CLT as described in Section \ref{sec:methods}) and QTL- Quantile based methods will be evaluated. 

\subsection{Simulation set-up}\label{sec:simuSetup}
\noindent The focus here is on evaluation of PIs discussed in the previous section based on their coverage probability. We start by generating the error process $(e_t)$ as:

\begin{enumerate}
\item[(a)]  $e_i=\phi_1 e_{i-1}+\sigma\epsilon_i$, 
\item[(b)]  $e_i=\sigma\sum_{j=0}^\infty(j+1)^{\gamma}\epsilon_{i-j}$, 
\item[(c)]  $e_i=\phi_1e_{i-1}+G(e_{i-1};\delta,T)(\phi_2e_{i-1})+\sigma\epsilon_i$,
\end{enumerate}

\noindent with $\epsilon_i$ i.i.d. from an $\alpha^*$-stable distribution. The  heavy-tails index $\alpha^*=1.5$,  autocovariance decay parameter $\gamma=-0.8$, speed-of-transition parameter $\delta=0.05$, autoregressive coefficients $\phi_1=0.6$ and $\phi_2=-0.3$, the noise standard deviation $\sigma=54.1$, and threshold $T=0$, were all selected based on the autoregressive models fitted to the electricity prices used later in the empirical part. The logistic transition function is given by $G(e_{i-1};\delta,T)=(1+\exp(-\delta(e_{i-1}-T)))^{-1}$. These three specifications represent 

\begin{enumerate}[(a)]
    \item a heavy-tail and short-memory error-process,
    \item a heavy-tail and long-memory error-process, and
    \item a nonlinear error-process know as the logistic smooth transition autoregression (LSTAR) with heavy-tailed innovations.
\end{enumerate}
respectively. Eventually, we add a large number of exogenous covariates to the error process, obtaining $y_i= \tran{\bm{x}}_i\bm \beta +e_i, i=1,\ldots,n+m.$
We compute our PIs based on $(y_1,\bm x_1)\ldots,(y_n,\bm x_n)$ and evaluate them on $\bar{y}_{+1:m}=1/m\sum_{i=1}^my_{n+i}$. Note that we predict the averages instead of sums. This is motivated by easier comparison of predictive performance across different forecast horizons, and also turns out as more appropriate in the following empirical part. 

Regarding covariates, we consider two scenarios (i) $p<n$ and (ii) $p>n$. In scenario (i) (See Table \ref{tab:SIM_50_LD_uniform }), we compare PIs based on OLS, LAD, and LASSO estimators. We set $n=8736$ ($\approx$ 1 year of hourly data), $m=168,336,504,672$ (1,2,3,4 weeks of hourly data), and $p=319$ (151 weather variables and 168 periodic variables), similarly to our empirical application (except that the horizon there spans up to 17 weeks) described in Section \ref{sec:real data analysis}. In scenario (ii) (See Table \ref{tab:SIM_50_HD_uniform }), we only use the LASSO as the other two estimators are not uniquely identified.  We set\footnote{We reduce $n$ for computational convenience so that the $p>n$ does not have to be very large.} $n=336$ (2 weeks of hourly data), $m=24,48,72,96$ (1,2,3,4 days of hourly data) and $p=487$ (151 weather variables and 336 periodic variables). 

The elements of $\bm \beta\in\IR^p$ are i.i.d. from the uniform distribution\footnote{Previous version of this paper also contained results based on Cauchy distribution. Leading to the same general conclusions, we have omitted them for the sake of brevity.} $U[-1,1]$. Moreover, as properties of the LASSO estimator depend on the sparsity of $\bm \beta$, we assume $s=(1-\|\bm \beta\|_0/p)=50\%$. Supplementary online material contains results for sparsity $s=90\%$ and $20\%$, i.e., for high and low sparsity set-up. Throughout the experiment, we keep the (sparse) $\bm \beta$ fixed for all $1000$ repetitions. We compute PI for nominal coverage $(1-\alpha) = 60\%, 80\%, 90\%, 95\%$ and compare the 
quenched CLT method and QTL method proposed in Section \ref{sec:methods}, based on their coverage probabilities (CP hereafter with the plural being CPs)
\[
    (\widehat{1-\alpha}) = \frac{1}{1000} \sum_{j=1}^{1000} \mathbb{I}\left([L,U]_{j,\hat{\bm \beta}}\ni\bar{y}_{j,+1:m}\right),
\]
and based on the average Winkler loss \citep{winkler72}
\[
    \mathcal{L} = |[L,U]_{j,\hat{\bm \beta}}| + \frac{2}{\alpha} \inf_{z\in [L,U]_{j,\hat{\bm \beta}}} |\bar{y}_{j,+1:m}-z|, \quad j=1,\ldots, 1000,
\]
where $\mathbb{I}$ for the $j$-th trial is 1 when $\bar{y}_{j,+1:m}$ is covered by the interval $[L,U]_{j,\hat{\bm \beta}}$ and 0 otherwise. The Winkler loss is a commonly used quantile-type loss, which penalizes the width of the PI and the size of misses, thus being appropriate for comparing PIs based on known quantiles \citep[see][for a recent discussion]{askanazi18}. Notably, the higher is the nominal coverage $1-\alpha$ the more weight is assigned to the misses via the factor $2/\alpha$.

\subsection{Simulation results}\label{sec:simuResults}
Two general conclusions may be drawn from the experiment, independent of whether the dimensionality of covariates is high or low:
\begin{itemize}
  \item For all methods, their CP is always below the nominal coverage for which they were calibrated.
  \item Moreover, CP sinks with the growing forecast horizon. By contrast, the Winkler loss does not increase monotonically with the growing horizon, which indicates stability in terms of the trade-off between CP and \enquote{sharpness} across these forecast horizons.
\end{itemize}
Further results require the distinction of the low-dimensional and high-dimensional scenarios. In the low dimensional scenario (see Table \ref{tab:SIM_50_LD_uniform }), in general, the following holds:
\begin{itemize}
    \item PIs have the lowest CP when the underlying process has a long memory.
    \item QTL PIs dominate the quenched CLT across all series both in terms of CP and  Winkler loss. Hence, CLT cannot compensate out the CP by \enquote{sharpness}. The CP can become lower than half the nominal coverage for the long horizon. 
    \item QTL PIs perform best when based on LASSO and LAD estimators. While LASSO QTL dominates in terms of CP, the LAD CTL has better \enquote{sharpness}, leading to a slight preference of LAD over LASSO based on Winkler loss. The exception from this rule is when series exhibit long memory. Allowing higher sparsity in $\bm{\beta}$ would make LASSO the Winkler loss winner, while lower sparsity would empower the LAD. 
\end{itemize}
For the high-dimensional scenario, some of the previous statements do not hold. Similar to the design in \citet{chudyold}, the horizon/sample ratios become very high, i.e., $m/n>1/4$. Moreover, we must face the curse of dimensionality for which we use LASSO\footnote{Details concerning the selection of tuning parameter $\lambda$ for LASSO can be found in Appendix C}. Still, this set-up has a largely negative impact on QTL's performance (see Table \ref{tab:SIM_50_HD_uniform }), and therefore we exploit a data-driven adjustment based on replication of the residual $\hat{e}_i=y_i-\hat{y}_i$ using stationary bootstrap. We denote the adjusted QTL  by ADJ. Details concerning implementation are in  Section \ref{sec:real data analysis}, where ADJ is used under similar a set-up as here. The simulations provide us with the following results:
\begin{itemize}
    \item QTL PIs  do not dominate the quenched CLT across all series and horizons. In fact, for the shortest horizon, CLT wins in terms of CP of Winkler loss (at least for the large nominal coverage). Moreover, for long-memory series, CLT wins across all horizons.
    \item In general, the CP of QTL is worse than in the previous set-up, especially if the forecast horizon is long. 
    \item However, the bootstrap adjustment introduced in Section \ref{ssc:bootstrap} leads to major improvement across all series and all horizons (except the shortest one). In terms of Winkler loss, the improvement is most visible in the case of non-linear series. 
\end{itemize}
Additionally, we may wonder what the impact of the sparsity and the generating distribution of $\bm{\beta}$ is. In general, the CPs are slightly higher when $\bm{\beta}$ is very sparse. In turn, $\bm{\beta}$ drawn from the Cauchy distribution leads to slightly smaller CPss. Still, both alternative set-ups lead to the same general conclusions with differences between CPss of identical methods (for identical series and horizons) within the range of two percentage points. 
\clearpage
\renewcommand{\thesubtable}{\roman{subtable}}
\begin{sidewaystable}
\vspace{1cm}
\begin{subtable}{1\textwidth}
\scalebox{0.6}{
 \begin{tabular*}{\textwidth}{@{\extracolsep{\fill}} |llrrrr|rrrr|rrrr|R{0.8cm}R{0.8cm}R{0.8cm}R{0.8cm}|rrrr|R{0.8cm}R{0.8cm}R{0.8cm}R{0.6cm}|}
\cmidrule{1-26}
\multirow{3}{*}{\rotatebox[origin=c]{90}{nominal}}& & \multicolumn{12}{c|}{Coverage}& \multicolumn{12}{|c|}{Winkler loss}\\
&$(e_i)$& \multicolumn{4}{c}{short-heavy}& \multicolumn{4}{c}{long-heavy}&\multicolumn{4}{c|}{non-lin-heavy}& \multicolumn{4}{|c}{short-heavy}& \multicolumn{4}{c}{long-heavy}&\multicolumn{4}{c|}{non-lin-heavy}\\
&$m$-weeks& 1 & 2 & 3 & 4 & 1 & 2 & 3 & 4  & 1 & 2 & 3 & 4  & 1 & 2 & 3 & 4 & 1 & 2 & 3 & 4 & 1 & 2 & 3 & 4 \\
\cmidrule{2-26}
\multirow{2}{*}{\rotatebox[origin=c]{90}{$60\%$}}
&ols-qtl & 47.0 & 41.0 & 38.0 & 37.3   & 38.9 & 33.1 & 28.8 & 25.9   & 48.6 & 42.0 & 41.0 & 37.9   & 2.1 & 1.7 & 1.5 & 1.4   & 7.0 & 6.4 & 6.5 & 6.6   & 1.5 & 1.2 & 1.1 & 1.0 \\
&lad-qtl & 51.5 & \textbf{51.1} & 48.5 & \textbf{45.9}   & 38.0 & 33.4 & 32.1 & 29.2   & 50.7 & \textbf{51.8} & \textbf{49.1} & \textbf{50.0}   & \textbf{1.5} & \textbf{1.2} & \textbf{1.1} & \textbf{1.0}   & 6.0 & 5.5 & 5.6 & 5.6   & \textbf{1.0} & \textbf{0.8}& \textbf{0.8} & \textbf{0.7} \\
&lss-qtl & \textbf{53.4} & 50.3 & \textbf{49.8} & 36.7   & \textbf{55.4} & \textbf{53.0} & \textbf{54.6} & \textbf{48.8}   & \textbf{55.0} & 49.5 & 42.6 & 43.1   & 1.6 & 1.4 & 1.2 & 1.7   & \textbf{6.0} & \textbf{5.2} & \textbf{5.2} & \textbf{5.1}   & 1.2 & 1.0 & 1.2 & 1.1 \\
&lss-clt & 48.2 & 35.9 & 35.2 & 19.3   & 43.9 & 32.3 & 28.4 & 23.7   & 48.8 & 33.7 & 23.0 & 23.1   & 1.7 & 1.6 & 1.4 & 2.1   & 6.1 & 5.6 & 5.6 & 5.7   & 1.3 & 1.2 & 1.4 & 1.3 \\[.2cm]
\multirow{2}{*}{\rotatebox[origin=c]{90}{$80\%$}}
&ols-qtl & 65.9 & 60.5 & 55.6 & 52.2   & 56.5 & 50.1 & 44.6 & 39.3   & 67.4 & 62.5 & 59.5 & 54.2   & 3.0 & 2.5 & 2.2 & 2.0   & 10.3 & 9.7 & 10.0 & 10.4  & 2.2 & 1.8 & 1.6 & 1.5 \\
&lad-qtl & 74.5 & 71.5 & 71.0 & \textbf{68.7}   & 61.4 & 53.9 & 53.2 & 49.0   & 73.7 & 72.3 & \textbf{72.4} & \textbf{69.7}   & \textbf{2.1} & \textbf{1.7} & \textbf{1.6} & \textbf{1.5}   & 8.8 & 8.0 & 8.6 & 8.6     & \textbf{1.5} & \textbf{1.2} & \textbf{1.1} & \textbf{1.1} \\
&lss-qtl & \textbf{77.0} & \textbf{74.6} & \textbf{73.5} & 60.4   & \textbf{75.1} & \textbf{74.9} & \textbf{73.0} & \textbf{70.2}   & \textbf{75.3} &\textbf{74.3} & 68.3 & 66.6   & 2.3 & 1.9 & 1.8 & 2.2   & \textbf{8.5} & \textbf{7.4} & \textbf{7.7} & \textbf{7.6}     & 1.6 & 1.4 & 1.5 & 1.4 \\
&lss-clt & 70.1 & 54.1 & 51.4 & 29.3   & 60.5 & 47.5 & 43.7 & 35.4   & 65.1 & 51.0 & 36.6 & 35.9   & 2.4 & 2.2 & 2.0 & 3.2   & 8.9 & 8.3 & 8.7 & 9.0     & 1.8 & 1.7 & 2.1 & 1.9 \\[.2cm]
\multirow{2}{*}{\rotatebox[origin=c]{90}{$90\%$}}
&ols-qtl & 78.4 & 73.8 & 70.2 & 65.6   & 70.1 & 63.7 & 60.8 & 52.6   & 78.7 & 74.4 & 71.9 & 68.1   & 4.3 & 3.6 & 3.3 & 3.2   & 14.6 & 14.1 & 15.2 & 16.2   & 2.9 & 2.6 & 2.4 & 2.3 \\
&lad-qtl & 87.3 & 84.1 & 84.1 & \textbf{83.2}   & 76.9 & 73.0 & 71.9 & 65.6   & 87.4 & 83.9 & \textbf{85.8} & 81.5   & \textbf{3.1} & \textbf{2.5} & 2.9 & \textbf{2.5}   & 12.2 & 11.5 & 13.4 & 13.5   & \textbf{2.1} & \textbf{1.8} & \textbf{2.1} & \textbf{1.9} \\
&lss-qtl & \textbf{87.7} & \textbf{87.5} & \textbf{85.7} & 75.2   & \textbf{86.5} & \textbf{86.1} & \textbf{83.5} & \textbf{81.6}   & \textbf{88.3} & \textbf{85.4} & 82.9 & \textbf{81.9}   & 3.2 & 2.6 & 3.0 & 3.2   & \textbf{11.7} & \textbf{10.8} & \textbf{11.8} & \textbf{11.6}   & 2.2 & 1.9 & 2.4 & 2.2 \\
&lss-clt & 79.7 & 68.2 & 63.8 & 39.2   & 71.0 & 59.2 & 54.5 & 46.8   & 75.8 & 64.5 & 45.7 & 45.4   & 3.3 & 3.1 & 2.8 & 5.0   & 12.7 & 12.4 & 13.6 & 14.2   & 2.5 & 2.4 & 3.2 & 3.0 \\[.2cm]
\multirow{2}{*}{\rotatebox[origin=c]{90}{$95\%$}}
&ols-qtl & 86.3 & 82.4 & 76.4 & 72.1   & 81.9 & 75.1 & 69.1 & 60.7   & 86.8 & 82.2 & 78.6 & 73.3   & 6.1 & 5.5 & 5.0 & 4.9   & 20.5 & 21.2 & 24.1 & 26.2   & 3.9 & 4.1 & 3.7 & 3.6 \\
&lad-qtl & 92.7 & 91.2 & 88.1 & \textbf{87.1}   & 87.0 & 83.6 & 77.2 & 71.4   & 92.7 & 90.9 & \textbf{89.5} & \textbf{86.4}   & \textbf{4.5} & \textbf{4.5} & \textbf{3.9} & \textbf{3.5}   & 16.8 & 17.7 & 20.0 & 20.4   & \textbf{3.1} & \textbf{3.1} & \textbf{2.9} & \textbf{2.7} \\
&lss-qtl & \textbf{93.5} & \textbf{92.3} & \textbf{90.1} & 80.9   & \textbf{92.6} & \textbf{90.6} & \textbf{87.6} & \textbf{86.8}   & \textbf{92.8} & \textbf{91.7} & 87.8 & \textbf{86.4}   & 4.6 & 4.6 & 4.0 & 4.3   & \textbf{16.5} & \textbf{16.6} & \textbf{16.8} & \textbf{16.2}   & \textbf{3.1} & \textbf{3.1} & 3.2 & 3.0 \\
&lss-clt & 87.5 & 76.6 & 73.8 & 46.3   & 78.7 & 67.0 & 62.2 & 53.7   & 82.4 & 71.5 & 54.5 & 55.6   & 4.8 & 4.4 & 4.2 & 7.9   & 18.7 & 18.9 & 22.0 & 23.0   & 3.4 & 3.4 & 4.9 & 4.7 \\[.2cm]
\cmidrule{1-26}
\end{tabular*}
 }
\caption{\footnotesize{Scenario $n>p$. QTL implemented using each of the estimators OLS, LAD or LASSO and CLT using LASSO only.\vspace{0.1cm}}}\label{tab:SIM_50_LD_uniform }
\end{subtable}
\begin{subtable}{1\textwidth}
\scalebox{0.6}{
 \begin{tabular*}{\textwidth}{@{\extracolsep{\fill}} |llrrrr|rrrr|rrrr|rrrr|rrrr|rrrr|}
\cmidrule{1-26}
\multirow{3}{*}{\rotatebox[origin=c]{90}{nominal}}&$\beta$& \multicolumn{12}{c|}{Coverage}& \multicolumn{12}{|c|}{Winkler loss}\\
&$(e_i)$& \multicolumn{4}{c}{short-heavy}& \multicolumn{4}{c}{long-heavy}&\multicolumn{4}{c|}{non-lin-heavy}& \multicolumn{4}{|c}{short-heavy}& \multicolumn{4}{c}{long-heavy}&\multicolumn{4}{c|}{non-lin-heavy}\\
&$m$-days& 1 & 2 & 3 & 4 & 1 & 2 & 3 & 4  & 1 & 2 & 3 & 4  & 1 & 2 & 3 & 4 & 1 & 2 & 3 & 4 & 1 & 2 & 3 & 4 \\
\cmidrule{2-26}
\multirow{3}{*}{\rotatebox[origin=c]{90}{$60\%$}}
&lss-qtl & 50.5 & \textbf{49.4} & 47.0 & 41.3   & 52.7 & 47.1 & \textbf{43.9} & 38.7   & \textbf{51.3} & 47.7 & 47.2 & 40.2   & \textbf{4.8} & \textbf{3.8} & 3.8 & 3.4   & 8.6 & 7.7 & 7.8 & 7.8     & \textbf{3.1} & \textbf{2.5} & 2.5 & 2.3 \\
&lss-clt & \textbf{54.6} & 39.6 & 37.2 & 32.2   & \textbf{68.4} & \textbf{49.5} & 42.2 & 33.8   & 48.3 & 35.3 & 33.6 & 27.8   & 5.5 & 4.9 & 4.9 & 4.3   & \textbf{8.0} & \textbf{7.1} & 7.4 & 7.5     & 3.9 & 3.6 & 3.6 & 3.5 \\
&lss-adj & 50.2 & 48.0 & \textbf{48.0}& \textbf{47.9}   & 50.9 & 46.5 & 43.0 & \textbf{40.0}   & 51.1 & \textbf{48.6} & \textbf{48.9} & \textbf{48.6}   & 4.9 & 3.9 & \textbf{3.7} & \textbf{3.3}   & 8.4 & 7.2 & \textbf{6.9} & \textbf{6.8}   & \textbf{3.1} & \textbf{2.5} & \textbf{2.4} & \textbf{2.1} \\[.2cm]
\multirow{3}{*}{\rotatebox[origin=c]{90}{$80\%$}}
&lss-qtl & 71.8 & \textbf{66.6} & 61.2 & 54.1   & 70.0 & 64.2 & 56.6 & 50.0   & 71.2 & 65.7 & 62.4 & 55.0   & \textbf{7.4} & 6.2 & \textbf{5.4} & 5.1   & 13.5 & 12.0 & 11.8 & 11.7   & \textbf{4.6} & \textbf{4.0} & 3.6 & 3.3 \\
&lss-clt & \textbf{73.9} & 56.8 & 51.7 & 48.9   & \textbf{81.6} & \textbf{68.0} & \textbf{59.3} & 51.2   & 70.3 & 55.7 & 47.8 & 40.7   & 7.8 & 6.7 & 7.1 & 6.3   & \textbf{12.5} & \textbf{10.5} & 10.8 & 11.1   & 5.3 & 5.0 & 5.2 & 5.3 \\
&lss-adj & 71.4 & 65.5 & \textbf{63.5} & \textbf{62.6}   & 68.9 & 63.7 & 58.0 & \textbf{54.7}   & \textbf{72.0} & \textbf{67.7} & \textbf{67.4} & \textbf{64.9}   & 7.5 & \textbf{6.1} & 5.6 & \textbf{4.9}   & 12.6 & 10.9 & \textbf{10.7} & \textbf{10.5}   & 4.8 & \textbf{4.0} & \textbf{3.5} & \textbf{3.1} \\[.2cm]
\multirow{3}{*}{\rotatebox[origin=c]{90}{$90\%$}}
&lss-qtl & 82.9 & 74.2 & 67.7 & 60.6   & 80.6 & 73.1 & 62.5 & 55.2   & \textbf{82.1} & 73.9 & 68.1 & 62.0   & 12.4 & 9.0 & 8.0 & 7.9  & 20.7 & 17.4 & 17.9 & 18.7   & 7.9 & 5.9 & 5.3 & 5.0 \\
&lss-clt & \textbf{84.0} & 69.7 & 63.5 & 59.9   & \textbf{88.3} & \textbf{78.2} & \textbf{70.4} & 62.9   & 81.1 & 68.7 & 59.6 & 52.3   & \textbf{11.5} &\textbf{8.8} & 9.9 & 8.9  & \textbf{18.8} & \textbf{15.7} & \textbf{16.1} & 16.7   & 7.4 & 6.7 & 7.4 & 7.8 \\
&lss-adj & 80.8 & \textbf{75.8} & \textbf{74.2} & \textbf{72.5}   & 79.3 & 73.7 & 67.6 & \textbf{64.5}   & 81.0 & \textbf{77.9} & \textbf{76.2} & \textbf{73.9}   & 12.0 & \textbf{8.8} & \textbf{8.0} & \textbf{7.3}  & 19.5 & 16.5 & 16.4 & \textbf{16.5}   & \textbf{7.3} & \textbf{5.7} & \textbf{5.0} & \textbf{4.5} \\[.2cm]
\multirow{3}{*}{\rotatebox[origin=c]{90}{$95\%$}}
&lss-qtl & 86.2 & 77.7 & 72.4 & 64.6   & 85.1 & 76.3 & 66.6 & 57.7   & 85.2 & 78.1 & 72.5 & 66.6   & 19.3 & 14.0 & 12.9 & 13.2   & 32.9 & 27.9 & 30.0 & 32.0   & 11.3 & 9.1 & 8.4 & 8.2 \\
&lss-clt & \textbf{90.1} & 80.1 & 74.0 & 70.4   & \textbf{91.8} & \textbf{83.5} & \textbf{76.9} & \textbf{71.3}   & 85.8 & 77.8 & 70.6 & 62.6   & \textbf{17.4} & \textbf{12.4} & 13.6 & 12.5   & \textbf{29.2} & \textbf{24.2} & \textbf{25.0} & \textbf{26.0}   & \textbf{10.5} & 9.0 & 10.4 & 11.5 \\
&lss-adj & 86.3 & \textbf{81.9} & \textbf{79.8} & \textbf{78.1}   & 85.8 & 79.2 & 74.5 & 68.5   & \textbf{86.0} & \textbf{84.2} & \textbf{81.6} & \textbf{80.7}   & 18.7 & 13.0 & \textbf{11.8} & \textbf{11.3}   & 30.8 & 25.9 & 26.3 & 26.8   & 10.9 & \textbf{8.4} & \textbf{7.5} & \textbf{6.8} \\[.2cm]
\cmidrule{1-26}
\end{tabular*}
 }
\caption{ \footnotesize{Scenario $p>n$. QTL and CLT as above. ADJ is a bootstrap version QTL for better performance under short-sample.}}\label{tab:SIM_50_HD_uniform }
\end{subtable}
\caption{\footnotesize{Simulated out-of-sample forecasting experiment. The reported values are coverage probabilities, i.e., relative (\%) counts of out-of-sample values covered in 1000 trials (left part) and average Winkler loss values (right part). The nominal coverage (the first column) ranks from is $95\%$ to $60\%$. Simulated error processes heavy tails and either short memory, long memory or are nonlinear. The elements of regression coefficient $\bm \beta$ are drawn independently from uniform distribution $U[-1,1]$. The sparsity of $\bm \beta$'s is fixed to 50\%. For convenience, the best value for each nominal coverage and horizon is marked in bold.}  }\label{tab:SIM_50_uniform }
\end{sidewaystable}
\clearpage

\section{Real data: EPEX Spot electricity prices}\label{sec:real data analysis}
\noindent Next we compare and contrast forecasts obtained by our methods with the existing ones through POOS in a real-life data. Following is a list of competing methods we will explore:
\begin{itemize}
\itemsep0em
\item[(ADJ)] Adjusted QTL-LASSO method described below in the Methods subsection \ref{ssc:bootstrap}.
\item[(RBS)] Robust Bayes (\citet{mw16}),
\item[(ARX)] Bootstrap path simulation from ARMAX models,
\item[(ETS)] Exponential smoothing state-space model \citep{hkos08},
\item[(NAR)] Neural network autoregression \citep[][sec. 9.3]{ ha13}.
\end{itemize}

\noindent We first give a quick overview of the last three methods above so that they are comparable to the ADJ as specified in Section \ref{ssc:bootstrap}.

\noindent \textit{Robust Bayes PIs RBS}:\\
For this sophisticated univariate approach, we focus on intuition and refer to the supplementary Appendix of \citet{mw16} for more details about the implementation. The \textit{robust Bayes} PIs are specifically designed for long-horizon predictions, e.g., when $m/n\approx1/2$. First, the high-frequency noise is extracted out from $y_t$ using low-frequency cosine transformation.  Projecting $\bar{y}_{+1:m}$ on the space spanned by the first $q$ frequencies is the key to obtaining the conditional distribution of 
$\bar{y}_{+1:m}$. In order to expand the class of processes for which this method can be used while keeping track of parameter uncertainty, \citet{mw16} employed a Bayesian approach. In addition, the resulting PIs are further enhanced to attain the frequentist coverage using the least favorable distribution. This requires advanced algorithmic search for quantiles of non-standard distributions, which is its main drawback in terms of implementation. On the other hand, their supporting online materials provide some pre-computed inputs which make the computation faster.
\begin{enumerate}[(i)]
\item For $q$ small, compute the cosine transformations $ \tran{\bm{x}}=(x_1,\ldots,x_q)$  of series $y_t$.
\item Approximate the covariance matrix of $(\bar{y}_{+1:m},\tran{\bm{x}})$.
\item Solve the minimization problem $(14)$ in \citep[][page 1721]{mw16} to get robust quantiles having uniform coverage.
\item The PIs are given by $[L,U]=\bar{y}+[Q_q^{\textrm{robust}}(\alpha/2),Q_q^{\textrm{robust}}(1-\alpha/2)]$.\\
\end{enumerate}
\textit{Bootstrap PIs for ARX, ETS and NAR}:
\begin{enumerate}[(i)]
\item Adjust $y_t$ for weekly periodicity  using, e.g., seasonal and trend decomposition method proposed by \citet{cca90}. 
\item Perform automatic model selection based on AIC and fit the respective model to adjusted $y_t$. For ARX and NAR, we also use aggregated weather data defined as $\bar{w}_t=\sum_{k=1}^{73}w_{k,t}$, $\bar{\tau}_t=\sum_{l=1}^{78}\tau_{l,t}$ and the weekend-dummy variables as exogenous covariates (see the supplementary Appendix C for details).
\item Simulate $b=1,\ldots,B$ future paths $\hat{y}^b_{n,t}$ of length $m$ from the estimated model.
\item Obtain respective quantiles from set of averages $\bar{\hat{y}}^b_{+,1:m}$,$b=1,\ldots,B$.
\end{enumerate}

\subsection{Data description and goal} We forecast $\bar{y}_{+1:m}=1/m\sum_{t=1}^m y_{n+t}$, i.e. the average of $m$ future  hourly day-ahead spot electricity prices for Germany and Austria - the largest market in the European Power Exchange (EPEX SPOT). One of the reasons why we decided to forecast future averages was that the Bayes approach of \citet{mw16} is designed specifically for the means. Since all other methods are flexible, we used the means as a common basis for the comparison.

The prices arise from day-ahead hourly auctions where traders trade for specific hours of the next day. With the market operating 24 hours a day, we have $11 640$ observations between 01/01/2013 00:00:00 UTC\footnote{Coordinated Universal Time.} and 04/30/2014 23:00:00 UTC. We split the data into a training period spanning from 01/01/2013 00:00:00 UTC to 12/31/2013 23:00:00 UTC and an evaluation period spanning from 01/01/2014 00:00:00 UTC to 04/30/2014 23:00:00 UTC (see Figure \ref{fig:prices}A). The forecasting horizon is $m=1,\ldots,17$ weeks ($168,\ldots,2856$ hours). 

\begin{figure}[!htb]
\includegraphics[scale=0.55, trim={0 8cm  0 8cm },clip]{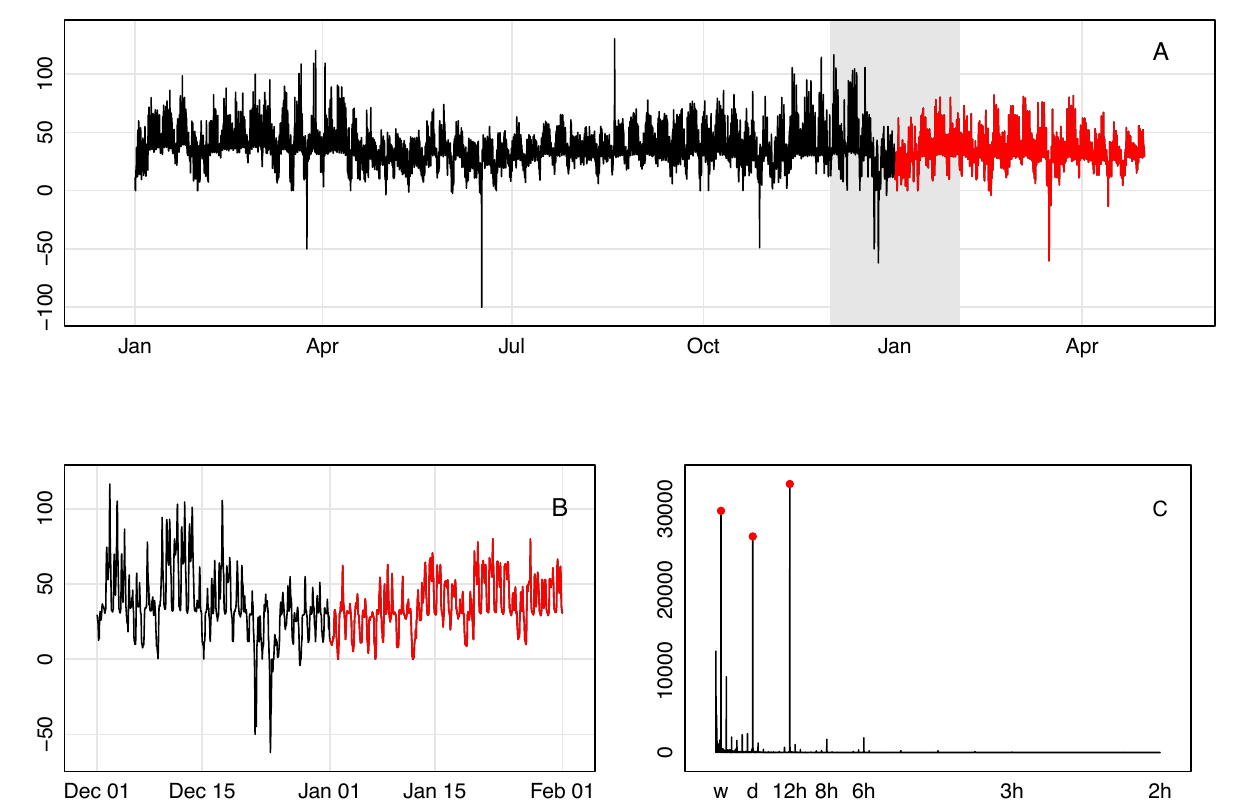}
\centering
\caption{Electricity spot prices, A) Full sample, B) Drop in price level, C) Periodogram with peaks at periods 1 week, 1 day and 12 hours. 
}\label{fig:prices}
\end{figure}

Inspection of the periodogram for the prices in Figure \ref{fig:prices}C reveals peaks at periods  1 week, 1 day and $1/2$ day. The mixed  seasonality is difficult to model by SARIMA or ETS models which are suitable for monthly and quarterly data or by dummy variables. Instead, we use sums of sinusoids with seasonal Fourier frequencies at $\omega_k=2\pi k/168$, $k=1,2,\ldots,\frac{168}{2}$ corresponding to periods 1 week, $1/2$ week, $\ldots$ , 2 hours \citep[see][]{bmrt07,wm08,cf05}. The coefficients of linear combination $\bm \beta_k^{(s)}, \bm \beta_k^{(c)}$ can be estimated by least squares. In addition, we use 2 dummy variables as indicators for all weekends.

As mentioned in Section \ref{sec:intro}, the  local weather variables are also used as covariates. The weather conditions implicitly capture seasonal patterns longer than a week, which is very important when forecasting long horizons. Local weather is represented by 151 hourly wind speed, and temperature series is observed throughout 5 years (2009-2013), i.e., including the training period but not the evaluation period (see above). In order to approximate some missing in-sample data and unobserved values for the evaluation period, we take hourly-specific-averages\footnote{See alternative approximation of future values by bootstrap \citep[in][]{hf10}} of each weather series over these 5 years. In total, we have 168 trigonometric covariates, 151 weather covariates and 2 dummies which gives a full set of 321 covariates.

\subsection{Discussion on the performance of different methods} 
Before we compare the ADJ with the other competitors, we will address a few issues that are usual with analysis of any real-life datasets. In Figure \ref{fig:prices}B, we see a drop in the price level during December 2013. The forecasts based on the whole training period would therefore suffer from bias. By contrast, using only the post-break December data would mean a loss of potentially valuable information. An optimal trade-off in such situations can be achieved by down-weighing older observations \citep[see][]{ppp13}, also called exponentially weighted regression \citep{t10}. In order to achieve better forecasting performance, we use the exponentially weighted regression with standardized exponential weights  $v_{n-t+1}= \delta^{t-1}((1-\delta))/(1-\delta^t )$, $t=1,\ldots,n$ and with $\delta=0.8$. This applies to ADJ and NAR methods. The ETS and ARX models provide exponential down-weighing implicitly, but with optimally selected weights. \citet{mw16} showed that the RBS is robust to structural changes. We would also like to see if there are actual benefits from using disaggregated weather data instead of weather data aggregated across the weather stations. Therefore, we compute the ADJ PIs using  no regressors as in Figure \ref{fig:PI1}A, using  only deterministic regressors as in Figure \ref{fig:PI1}B, using deterministic regressors and aggregated weather variables defined as $\bar{w}_t=\sum_{k=1}^{73}w_{k,t}$, $\bar{\tau}_t=\sum_{l=1}^{78}\tau_{l,t}$ as in Figure \ref{fig:PI1}C and finally, using all 321 covariates as in Figure \ref{fig:PI1}D. As we can see, there is only very little difference between the first three plots, which means that using  only deterministic regressors with or without the aggregated weather data does not prevent the bias at the end of the evaluation period. On the other hand, if we use the disentangled local weather data, significant improvement is achieved. 

\begin{figure}[!htb]
\includegraphics[scale=0.45, trim={0 6cm  0 6cm },clip]{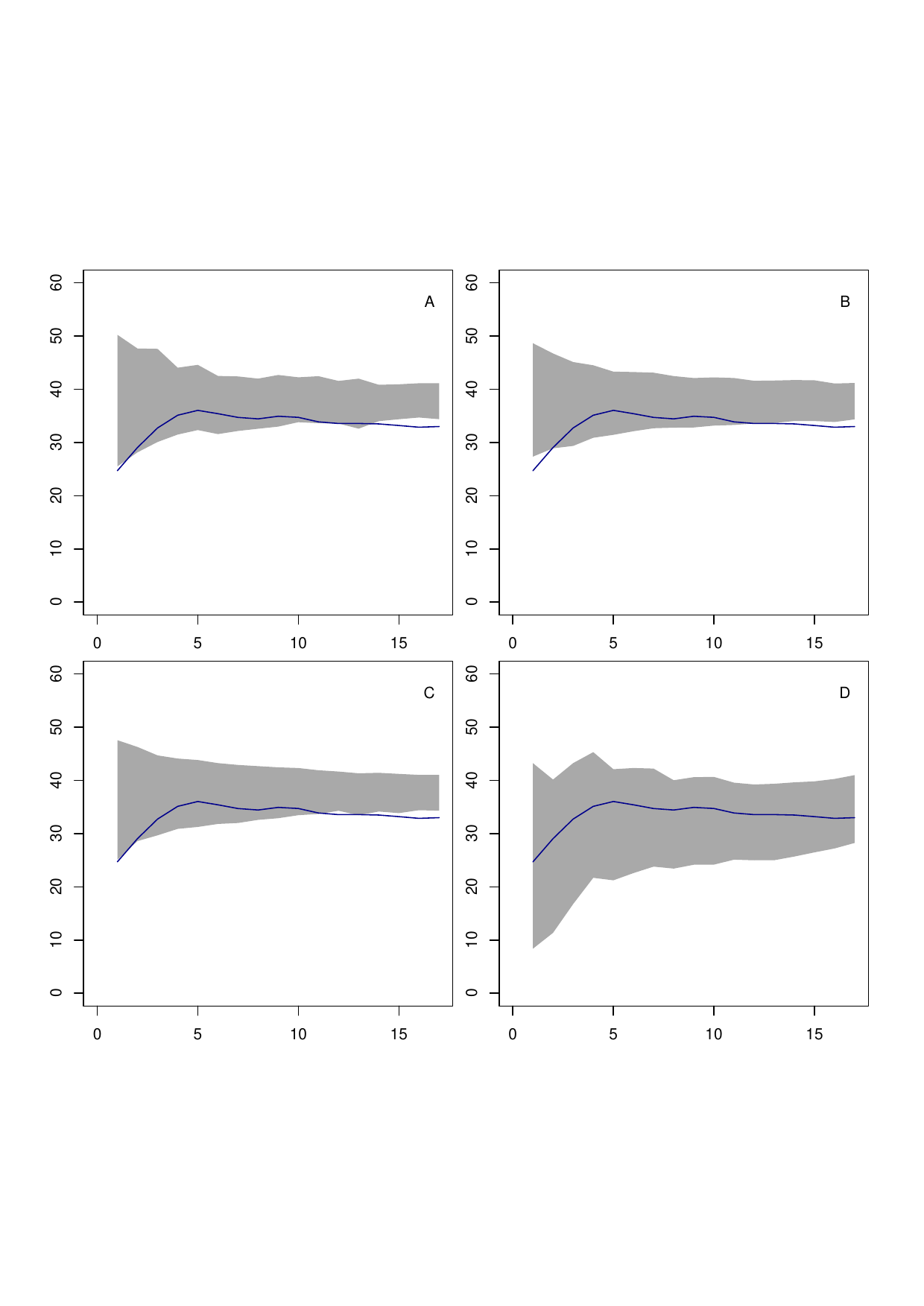}
\centering
\caption{A) ADJ (without covariates), B) ADJ with 170 deterministic covariates, C) ADJ with 170 deterministic covariates \& 2 aggregated (across stations) weather series, D) ADJ with 170 deterministic covariates\& 151 disaggregated weather series.}\label{fig:PI1}
\end{figure}

\begin{figure}[!htb]
\includegraphics[scale=0.45, trim={0 6cm  0 6cm },clip]{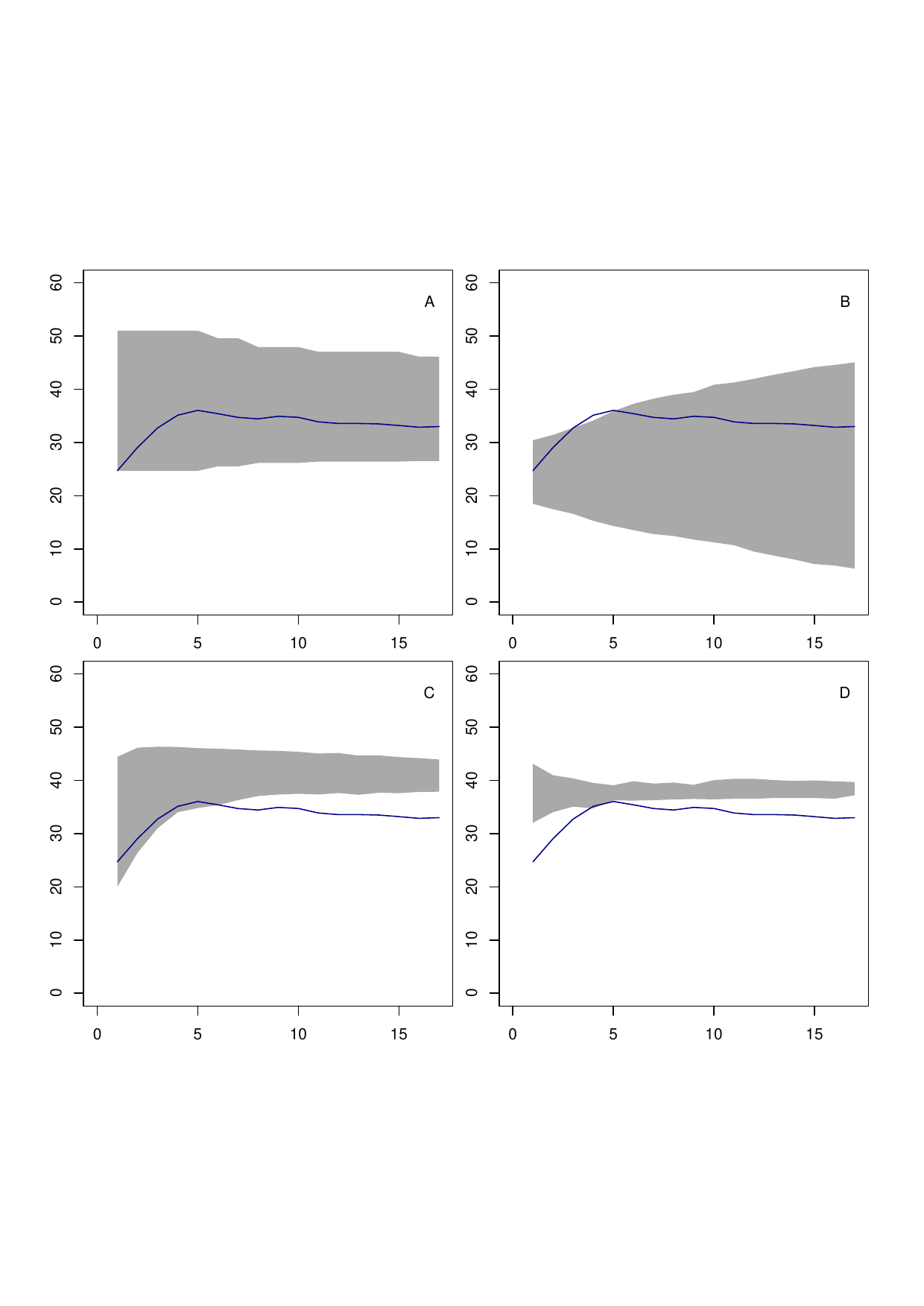}
\centering
\caption{ A) robust Bayes method, B) exponential smoothing state space model (A,N,N) with tuning parameter $0.0446$, C) neural network autoregression (38,22) with one hidden layer D) ARMAX. For C) and D) weekend dummies \& aggregated weather are used as exogenous covariates.}\label{fig:PI2}
\end{figure}

Finally, we get to the comparison with the alternative PIs denoted as RBS, ETS, NAR, and ARX. All these PIs are given in Figure \ref{fig:PI2}. Of the four methods, only RBS gives sensible PIs. RBS works consistently well over the whole $17$-weeks-long evaluation period (Figure \ref{fig:PI2}A). However, when compared to the ADJ, the p.i's seem too conservative. Hence the ADJ provides more precision on top of decent coverage. Prediction intervals by ETS get too conservative as the horizon grows and do not provide a valid alternative to ADJ. The NAR is even more biased than the ADJ without covariates, especially for large $m$. Not so surprisingly, the ARX perform worst of all methods, presumably because the exponential down-weighing implied by the simple autoregression is too mild. Besides, the narrow PIs are the result of ignoring the parameter (among other types of) uncertainty. 


\section{Conclusion}\label{sec:discussion}
\noindent We constructed quantile-based prediction intervals in a regression framework. From a theoretical perspective, we have extended the results of \citet{zhou10} to high-dimensional set-up and also to the case of the nonlinear error process. We showed the consistency of fitted residuals for the ultra-high dimensional case $\log p=o(n)$. Under some mild conditions on the fixed or stochastic covariate process, we were able to establish quantile consistency for the normalized average of fitted residuals and thus provide a significant extension to theoretical validity of the non-parametric and simple quantile-based prediction intervals. The quantile method has been additionally adjusted for short sample and long horizon and was successfully applied to predict spot electricity prices for Germany and Austria using a large set of local weather time series. The results have shown the superiority of the adjusted method over selected conventional methods and approaches such as exponential smoothing, neural networks as well as the recently proposed low-frequency approach of \citet{mw16}.

Regarding future work, some interesting extensions can include multivariate target series and subsequent construction of simultaneous prediction intervals. Applications of such simultaneous intervals could include the prediction of spot electricity prices for each hour simultaneously in the spirit of \citet{rbd15}.



\section{Data Availability Statement}
We are thankful to Stefan Feuerriegel for providing data from their paper \cite{lfn15} for the analysis in a direct communication. 

\section{Acknowledgement}
We are thankful to two anonymous referees and the editor for their helpful comments and corrections that helped improve the presentation of this paper significantly. The first and third authors are partially funded by NSF-DMS 2124222
NSF-DMS 1405410 respectively. 

\ignore{
\renewcommand{\thesubfigure}{\Roman{subfigure}}
\begin{sidewaysfigure}
\vspace{5cm}
\hspace{-2cm} 
\captionsetup[subfigure]{oneside,margin={1cm,-1cm}}
\begin{subfigure}{0.49\hsize}
    \includegraphics[scale=0.5, trim={0 5cm  0 5cm },clip]{Figures/PI_1.pdf}
\caption{A) ADJ (without covariates), B) ADJ with 170 deterministic covariates, C) ADJ with 170 deterministic covariates \& 2 aggregated (across stations) weather time series, D) ADJ with 170 deterministic covariates \& 151 disaggregated weather series.}\label{fig:PI1}
\end{subfigure}%
\hfill\hspace*{0.5cm}  
\begin{subfigure}{0.49\hsize}
    \includegraphics[scale=0.5, trim={0 5cm  0 5cm },clip]{Figures/PI_2.pdf}
\caption{ A) robust Bayes method, B) exponential smoothing state space model (A,N,N) with tuning parameter $0.0446$, C) neural network autoregression (38,22) with one hidden layer D) ARMAX. For C) and D) weekend dummies \& aggregated weather are used as exogenous covariates.}\label{fig:PI2}
\end{subfigure}
\caption{Prediction intervals (gray) for average spot electricity prices (blue) over forecasting horizon $m=1,\ldots,17$ weeks.}\label{fig:PI}
\end{sidewaysfigure}

\clearpage
}

\bibliography{lphdregold}
\newpage
\section*{Appendix A:-Some useful concentration inequalities}\label{sc: linear process results collection}

Throughout, we use properties of the following projection operator  $\mathcal{P_i}$ defined for a sequence of random variable $Y_i$ adapted to a $\sigma$-field filtration $\mathcal{F}_i$ 
\begin{eqnarray}\label{eq:proj}
\mathcal{P}_i Y = \IE(Y |\mathcal{F}_i ) - \IE(Y |\mathcal{F}_{ i-1} ), \quad Y \in \mathcal{L}_1 .
\end{eqnarray}
In particular, one particular inequality that allows an easy transfer from a linear case to a corresponding non-linear case is $$ \|\mathcal{P}_i Y_j\|_p \leq \delta_{i,p}^Y $$
with $\delta_{i,p}^Y$ standing for the dependence measure as defined in (\ref{eq:fdm}) for the mean-zero stationary process $Y_j$. Similar inequality holds for other functional dependence measures we defined in the paper. For a proof, the interested readers can look at the seminal text by \cite{wu05} where functional dependence measure was formally introduced. Apart from this, we have also repeatedly used the orthogonality of the projection operator $\mathcal{P}_i$ defined in (\ref{eq:proj}) i.e. $\IE(\mathcal{P}_i(X_k)\mathcal{P}_j(X_k))=0$ for $ i \neq j$.  With these tools in hand, we proceed to describe the proofs in this paper.

Let $S_{n,b}=\sum_{i=1}^n b_i e_i$.  Only the light-tailed versions of the Results \ref{th:nagaev linear light} and \ref{th:nagaev nonlinear light} were proposed and proved in \citet{wuwu16}. In this exposition, we clearly point out the corresponding results for the long-range and short-range dependence along with the new addition of what happens for light-tailed or heavy-tailed scenarios. These results can be of independent interest since this is first such in the literature that takes care of heaviness of tails under dependence.

\begin{result}\label{th:nagaev linear light}
(Nagaev inequality for linear processes)
Assume that the error process $e_i$ admits the representation (\ref{eq:ei linear}). Then we have the following concentration results for $S_{n,b}=\sum_{i=1}^n b_ie_i$,
\begin{compactitem}[-]
\item  Light-tailed SRD: If $\sum_{j}|a_j|<\infty$ and $\epsilon_j \in \mathcal{L}^q$ for some $q>2$, then, for some constant $c_q$,
\begin{eqnarray*}\label{eq:linear light}
\IP(|S_{n,b}| \geq x) &\leq& (1+2/q)^q \frac{\|b\|_q^q (\sum_{j}|a_j|)^q\|\epsilon_0\|_q^q}{x^q}\nonumber+ 2 \exp \left(- \frac{c_qx^2}{n(\sum_{j}|a_j|)^2 \|\epsilon_0 \|_2^2} \right),
\end{eqnarray*}
\item Light-tailed LRD: If $K=\sum_{j}|a_j|(1+j)^{\bm \beta}<\infty$ for $0<\bm \beta<1$ and $\epsilon_j \in \mathcal{L}^q$ for some $q>2$, then, for some constant $C_1,C_2$ depending on only $q$ and $\bm \beta$,
\begin{eqnarray}\label{eq:linear light long}
\IP(|S_{n,b}| \geq x) &\leq& C_1 \frac{K^q \|b\|_q^q\|n^{q(1-\bm \beta)}\epsilon_0\|_q^q}{x^q} 
+ 2 \exp \left(- \frac{C_2x^2}{n^{3-2\bm \beta}\|\epsilon_0 \|_2^2K^2} \right),
\end{eqnarray}
\item Heavy-tailed SRD: If $\sum_{j}|a_j|<\infty$ and $\epsilon_j \in \mathcal{L}^q$ for some $1<q \leq2$, then, for some constant $c_q$
\begin{equation}\label{eq:linear heavy}
\IP(|S_{n,b}| \geq x) \leq c_q \frac{\|b\|_q^q (\sum_{j}|a_j|)^q\|\epsilon_0\|_q^q}{x^q}, 
\end{equation}
\item Heavy-tailed LRD: If $K=\sum_{j}|a_j|(1+j)^{\bm \beta}<\infty$ for $0<\bm \beta<1$ and $\epsilon_j \in \mathcal{L}^q$ for some $q>2$, then, for some constants $C_1,C_2$ depending only on $q$ and $\bm \beta$,
\begin{equation}\label{eq:linear heavy long}
\IP(|S_{n,b}| \geq x) \leq C_1 \frac{K^q \|b\|_q^q\|n^{q(1-\bm \beta)}\epsilon_0\|_q^q}{x^q}.
\end{equation}
\end{compactitem}
\end{result}

\noindent The proofs of the Theorem \ref{th:lighttailed} and Theorem \ref{th:light tailed linear} heavily rely on the representation in (\ref{eq:ei linear}). Therefore it is important to explore a new approach to prove the analogous result for nonlinear processes.

\ignore{We use the following result for the case of finitely many covariates
\begin{theorem}[Residual consistency for regression]
$$\sum_i |\hat{e}_i-e_i|= o_{P}(\Pi(n))$$
The error bound $\Pi(n)$ will be different for the different behavior of the error process $e_i$.
\end{theorem}
}

\ignore{Before we present quantile consistency results for the case where the number of predictors grows much faster than the sample size, we  state key optimal tail-probability inequalities needed for the two cases, linear and nonlinear error process. }

\begin{result} \label{th:nagaev nonlinear light}
(Nagaev inequality for nonlinear processes)
Assume $\|e. \|_{q, \alpha}<\infty$ for some $\alpha>0$,
\begin{compactitem}[-]

\item Light-tailed SRD:- Assume that $\|e.\|_{q,\alpha} <\infty$ where $q>2$ and $\alpha>0$ and $\sum_{i=1}^n b_i^2=n.$ Let $r_n=1$(resp. $(\log n)^{1+2q}$ or $n^{q/2-1-\alpha q}$ ) if $\alpha>1/2-1/q$ (resp. $\alpha=0$ or $\alpha< 1/2-1/q$). Then for all $x>0$, for constants $C_1,C_2,C_3$ that depend on only $q$ and $\alpha$,
\begin{eqnarray}\label{eq:nonlinear light}
\IP(|S_{n,b}| \geq x) \leq C_1 \frac{r_n}{(\sum_{j}|b_j|)^q\|e.\|_{q,\alpha}^q}{x^q}+C_2 \exp \left ( -\frac{C_3x^2}{n \|e.\|^2_{2,\alpha}}\right) ,
\end{eqnarray} 

\item Heavy-tailed SRD:- Assume that $\|e.\|_{q,\alpha} <\infty$ where $1<q<2$ and $\alpha>0$ and $\sum_{i=1}^n b_i^2=n.$ Let $r_n=1$(resp. $(\log n)^{1+2q}$ or $n^{q/2-1-\alpha q}$ ) if $\alpha>1/2-1/q$ (resp. $\alpha=0$ or $\alpha< 1/2-1/q$). Then for all $x>0$, for constants $C_1$ that depend on only $q$ and $\alpha$,
\begin{equation}\label{eq:nonlinear heavy}
\IP(|S_{n,b}| \geq x) \leq C_1 \frac{r_n}{(\sum_{j}|b_j|)^q\|e.\|_{q,\alpha}^q}{x^q}.
\end{equation} 
\end{compactitem}
\end{result}

\begin{proof}[Sketch for the proof of Result \ref{th:nagaev linear light} and \ref{th:nagaev nonlinear light}]
The details are omitted for these proofs since the steps are similar to that of Theorem 1 and Theorem 2 in \citet{wuwu16}. Very briefly, the basic strategy lies in writing $S_{n,b}$ with the following decomposition

\begin{eqnarray}
S_{n,b}=S_{n,b,0}+(S_{n,b}-S_{n,b,n})+\sum_{l=1}^{\lceil \log n/ \log 2)\rceil} \sum_{k=1}^{n}b_k(e_{k,\min (n,2^l)}-e_{k,2^{l-1}})\nonumber,
\end{eqnarray}
where $e_{i,\tau}= \IE(e_i|\epsilon_{i-\tau},\ldots, \epsilon_i)$ and $S_{n,b,m}=\sum_{k=1}^n b_k e_{k,m}.$ 

With this decomposition, the summands in the first term are independent and hence Nagaev inequality in Corollary 1.7 from \citet{MR542129} can be applied. For the second term Burkholder and H\"older inequalities are used whereas for the third term due to $2^l$-dependence we divide the summands in independent odd and even parts and proceed with Corollary 1.7 from \citet{MR542129} again. For the heavy-tailed scenarios, note that the exponential terms from the light-tailed are not presented separately for the interest of space since one just needs to use corollary 1.6 instead of corollary 1.7 from \cite{MR542129}.  

\end{proof}

\noindent The proof for the quantile consistency results for the lasso-fitted residuals in Section \ref{sec:high dimension} requires the following important result.

\begin{lemma}\label{eq:bickel lemma}
Assume the the restricted eigenvalue condition (\ref{eq:kappa}) with $\kappa$ as presented in Theorem \ref{th:covariate2}. Then on the event $\mathcal{A}= \bigcap_{j=1}^p \{2|V_j| \le r \} , \text{ where } V_j =\frac{1}{n}\sum_{i=1}^n e_ix_{ij},$ we have,

\begin{eqnarray}\label{eq:toshow}
\|X(\hat{\bm \beta}-\bm \beta)\|_2^2/n \leq 16 sr^2/\kappa^2,\quad \|\hat{\bm \beta}-\bm \beta\|_1\leq 4 sr/\kappa^2.
\end{eqnarray}

\end{lemma}

\begin{proof}[Proof of Lemma \ref{eq:bickel lemma}]
Since $\hat{\bm \beta}$ minimizes the LASSO-objective in (\ref{eq:lasso objective}), plugging in $y_i=\tran{\bm{x}}_i\bm \beta+e_i$ and $\lambda=2r$, we have

\begin{eqnarray*}
2r\|\hat{\bm \beta}\|_1 +\frac{1}{n}\sum_{i=1}^{n}(\tran{\bm{x}}_i\bm \beta+e_i-\tran{x}_i \hat{\bm \beta})^2
&\leq& 2r\|\bm \beta\|_1+\frac{1}{n}\sum_{i=1}^{n}e_i^2 \\
2r\|\hat{\bm \beta}\|_1+\|X(\hat{\bm \beta}-\bm \beta)\|_2^2/n 
&\leq& 2r\|\bm \beta\|_1+2\tran{e}X(\hat{\bm \beta}-\bm \beta)/n \\
r\|\hat{\bm \beta}-\bm \beta\|_1+\|X(\hat{\bm \beta}-\bm \beta)\|_2^2/n 
&\leq& 2r(\|\bm \beta\|_1-|\hat{\bm \beta}\|_1)+r\|\hat{\bm \beta}-\bm \beta\|_1 +2\tran{e}X(\hat{\bm \beta}-\bm \beta)/n.
\end{eqnarray*}

\noindent These entail, on the event $\{|V_j|= |(\tran{e}X)_j|/n \leq r/2\text{ for all }j \}$ 
\begin{eqnarray}\label{eq:final1}
&&r\|\hat{\bm \beta}-\bm \beta\|_1+\|X(\hat{\bm \beta}-\bm \beta)\|_2^2/n \leq 4r \|\hat{\bm \beta}_J- \bm \beta_J\|_1\leq 4r\sqrt{s} \|\hat{\bm \beta}_J- \bm \beta_J\|_2,
\end{eqnarray}
where $J=\{j:\beta_j\neq 0\}$ as for $j \in J^{c}$, $|\hat{\beta}_j-\beta_j|+| \beta|_j-|\hat{\beta}|_j =0$.
However, also note that from the first inequality in (\ref{eq:final1}), we have $\|\hat{\bm \beta}_{J^c}- \bm \beta_{J^c}\|_1 \leq 3\|\hat{\bm \beta}_J- \bm \beta_J\|_1.$ Thus by (\ref{eq:kappa}), $ \kappa\|\hat{\bm \beta}_J- \bm \beta_J\|_2 \leq \|X(\hat{\bm \beta}-\bm \beta)\|_2/\sqrt{n}$. Plugging this in (\ref{eq:final1}) immediately yields the first inequality of (\ref{eq:toshow}) whereas the second is derived by square completion.

\end{proof}

\begin{lemma}\label{lem:x'x}
Recall the dependence adjusted and dimension incorporated functional dependence measure as defined in the context of Theorem \ref{th:lasso stochastic}. Assume $\Phi_{q,\alpha} <\infty$ for some $q \geq 1$ and $\alpha>0$. For a fixed $\delta>0$, assume $p,n \to \infty$ with $2C_{\alpha}\Phi_{4,\alpha}^4s^2 \log p < n\delta^2$. Then,

$$\IP(|\tran{X}X/n-\IE(\bm x_t \tran{\bm x_t})|_{\infty} >\delta/s) \to 0.$$
\end{lemma}

\begin{proof}[Proof of Lemma \ref{lem:x'x}]
We prove this by applying Hansen-Wright inequality as presented in \cite{zhangwu} on quadratic forms. Although they proved it for a more general locally stationary process and defined their functional dependence measure accordingly, it also holds for the simpler (not accounting for local stationarity) functional dependence measure defined in the context of Theorem \ref{th:lasso stochastic}. Apply Theorem 6.4 in \cite{zhangwu} for $B=0$, $x=n\delta/s$, $a_{t}=1$ to obtain

\begin{eqnarray*}
\IP(|\tran{X}X- n\IE(\bm x_t \tran{\bm x_t})|_{\infty}>n\delta/s )&\leq& Cp^2 \exp \left(- \frac{n \delta^2}{s^2 C_{\alpha}\Phi_{4,\alpha}^4} \right)
\end{eqnarray*}
which goes to $0$ due to our assumption on how $p,s$ and $n$ scale with each other.

\end{proof}


\section{Appendix B:- Proofs of Theorems}\label{sec:proofs}
 
\begin{proof}[Proof of Theorem \ref{th:quenched clt}]

The $m$-dependence approximation is the key idea for the proof for the nonlinear case, 
\begin{eqnarray*}
\|\IE(\tilde{S}_m|\mathcal{F}_0)-\IE (S_m|\mathcal{F}_0) \| \leq \|S_m-\tilde{S}_m\| \leq m^{1/2}\Theta_{m,p} \ll m^{1/2}/(\log m)^{5/2},
\end{eqnarray*}
where $\tilde{S}_m=\sum_{i=1}^m \tilde{e}_i=\sum_{i=1}^m \IE(e_i|\epsilon_{i}, \ldots \epsilon_{i-m}).$ Note the following decomposition 
\begin{eqnarray*}
\IE(\tilde{S}_m|\mathcal{F}_0)&=& \sum_{j=-m}^0 \mathcal{P}_j \tilde{S}_m
=\sum_{j=-\infty}^0 (\IE(\tilde{S}_m|\mathcal{F}_j)- \IE(\tilde{S}_m|\mathcal{F}_{j-1})).
\end{eqnarray*}
The proof of the sufficient condition for convergence (\ref{eq:sm|f0 cond}) then follows from $\|\mathcal{P}_j \tilde{e}_i\|_2 \leq \delta_{i-j,2}$ and the condition on $\Theta_{m,p}$.

\end{proof}

\noindent For the rest of the proofs we introduce some notations. Define 
\begin{eqnarray}\label{eq:tilde Y}
\tilde{Y}_i= H_m^{-1}\sum_{j=i-m+1}^i e_j \text{ for }i=m,m+1,\ldots,
\end{eqnarray}
and let $\tilde{Z}_i= \tilde{Y}_i- \IE(\tilde{Y}_i|\mathcal{F}_{i-1})$.
\ignore{
Note that one can write $\tilde{Z}_i$ as follows

\begin{equation}\label{eq:zi}
\tilde{Z}_{i-1}= \frac{\sum_{j=1}^{\infty} \tilde{b}_j \epsilon_{i-j}}{H_m}
\end{equation}

\noindent where $\tilde{b}_j= a_0+ a_1+ \ldots+ a_j$ if $1 \leq j \leq m-1$ and $\tilde{b}_j= a_{j-m+1}+a_{j-m+2}+ \ldots+ a_j$ if $j \geq m$. 
}
\noindent Define $$ \tilde{F}_n^*(x)= \frac{1}{n-m+1}\sum_{i=m}^n \IP(\tilde Y_i \leq x).$$

\noindent Let $\tilde{F}(x)=\IP(\tilde{Y}_i \leq x).$ Let $\tilde{F}_n(x)$ denote the empirical distribution function of $\tilde{Y}_i, i=m,\ldots n$. We use the following decomposition:

\begin{eqnarray}\label{eq:decomposition2}
\tilde{F}_n(x)-\tilde{F}(x)=M_n(x)+N_n(x)
\end{eqnarray}
\noindent where $M_n(x)=\tilde{F}_n(x)-\tilde{F}_n^*(x)$ and $N_n(x)=\tilde{F}_n^*(x)-\tilde{F}(x)$. In the next series of lemma, we prove some consistency results involving $M_n(\cdot)$ and $N_n(\cdot)$. These are fundamental in the eventual proof of quantile consistency results from Section \ref{sec:nonlinear}. The proof techniques are similar to that used in \citet{wuzhou09}, \cite{zhou10} adapted to the nonlinear setting and specifically exploiting the predictive density dependence defined in (\ref{eq:pred density}). Using the projection operator defined in (\ref{eq:proj}), one can write $M_n(x)$ as follows
\begin{equation}\label{eq:define Mn}
M_n(x)=\frac{1}{n-m+1} \sum_{i=m}^n \mathcal{P}_i I (\tilde Y_i \leq x).
\end{equation}
Next we present two important lemmas concerning local equicontinuity of the two terms $M_n(\cdot)$ and $N_n(\cdot)$. Let $f_{\epsilon}$ is the density of the conditional distribution of $\tilde{Y}_i$ given $\mathcal{F}_{i-1}$.

\begin{lemma}\label{lem:Mn} Under conditions of Theorem \ref{th:nagaev linear light} and Theorem \ref{th:nagaev nonlinear light},
\begin{eqnarray}\label{eq:Mn bound}
\sup_{|u|\leq b_n} &&|M_n(x+u)-M_n(x)|
=O_{\IP} \left( \sqrt{\frac{H_m b_n}{n}} \log^{1/2}n +n^{-3} \right)\nonumber,
\end{eqnarray}
\noindent where $b_n$ is a positive bounded sequence with $\log n=o(H_mnb_n).$ 
\end{lemma}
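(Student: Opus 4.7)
The plan is to exploit the martingale-difference structure of $M_n(x+u)-M_n(x)$ and combine a Freedman-type concentration with a chaining argument that uses the monotonicity of the empirical distribution function. Writing $D_i(x,u) = P_i\bigl(I(x < \tilde Y_i \leq x+u)\bigr)$ for $u>0$ (the case $u<0$ is symmetric), I would first observe that $\tilde Y_i = A_{i-1} + H_m^{-1} e_i$ with $A_{i-1} \in \mathcal{F}_{i-1}$, so conditional on $\mathcal{F}_{i-1}$ the variable $\tilde Y_i$ has a density that equals $H_m \cdot f_1(H_m(\cdot - A_{i-1}) \mid \mathcal{F}_{i-1})$ and is therefore bounded by $c_0 H_m$ under (DEN). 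Consequently
\begin{equation*}
\IE\bigl[D_i(x,u)^2 \mid \mathcal{F}_{i-1}\bigr] \leq \IP(x < \tilde Y_i \leq x+u \mid \mathcal{F}_{i-1}) \leq c_0 H_m |u|.
\end{equation*}
Since $|D_i(x,u)| \leq 1$ and $\{D_i\}$ are $\mathcal{F}_i$-martingale differences, the predictable quadratic variation of $\sum_{i=m}^{n} D_i(x,u)$ is at most $c_0 H_m |u| (n-m+1)$.

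Next, I would apply Freedman's martingale inequality to obtain a pointwise bound: for fixed $u$ with $|u|\le b_n$ and $t = C\sqrt{n H_m b_n \log n}$,
\begin{equation*}
\IP\!\left(\Bigl|\sum_{i=m}^{n} D_i(x,u)\Bigr| \geq t\right) \leq 2 \exp\!\left( -\frac{t^2/2}{c_0 n H_m b_n + t/3}\right) \leq 2 n^{-C'},
\end{equation*}
which is where the assumption $\log n = o(H_m n b_n)$ is used to keep $t$ in the subgaussian regime. Discretising $[-b_n,b_n]$ with a grid $u_k = -b_n + k\delta$ for $\delta = n^{-K}$ and a sufficiently large $K$ (so that $H_m \delta = O(n^{-3})$, which is possible because $H_m$ is polynomial in $m\le n$), the union bound over the $O(b_n n^K)$ grid points yields
\begin{equation*}
\max_k |M_n(x+u_k) - M_n(x)| = O_{\IP}\!\left( \sqrt{H_m b_n n^{-1} \log n}\right).
\end{equation*}

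Finally, the intra-grid fluctuation is controlled by monotonicity. For $u_k \leq u \leq u_{k+1}$, $\tilde F_n$ is increasing and $\tilde F_n^*$ satisfies $\tilde F_n^*(x+u_{k+1}) - \tilde F_n^*(x+u_k) \leq c_0 H_m \delta$ (by the same density bound, via $f \leq c_0$), hence
\begin{equation*}
|M_n(x+u) - M_n(x+u_k)| \leq (\tilde F_n(x+u_{k+1}) - \tilde F_n(x+u_k)) + c_0 H_m \delta \leq |M_n(x+u_{k+1}) - M_n(x+u_k)| + 2 c_0 H_m \delta,
\end{equation*}
so the supremum over $|u|\le b_n$ is dominated by the grid maximum plus an $O(H_m\delta) = O(n^{-3})$ deterministic error, which matches the stated rate. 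The main obstacle I anticipate is the correct conditional-variance bound: one must carefully separate the $\mathcal{F}_{i-1}$-measurable part of $\tilde Y_i$ from the genuinely random innovation $e_i$ and apply (DEN) to the one-step-ahead predictive density, with the scaling factor $H_m$ properly accounted for — this is exactly why the bound features $H_m b_n$ rather than $b_n$ alone, and why the assumption $\log n = o(H_m n b_n)$ appears.
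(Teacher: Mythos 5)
Your proposal is correct and follows essentially the same route as the paper's own (much terser) argument: the conditional-variance bound $\IP(x\le \tilde Y_i\le x+u\mid\mathcal F_{i-1})\le c_0 H_m|u|$ from (DEN), Freedman's martingale inequality, and a chaining/discretization step, with the $n^{-3}$ term absorbing the grid error. You have simply filled in the details that the paper delegates to the cited lemmas of Wu and Zhou, including the correct identification of where $\log n=o(H_m n b_n)$ enters.
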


\begin{proof}
Note that, $\IP(x \le \tilde{Y}_i \le x+u|\mathcal{F}_{i-1})\le H_mc_0u$ for all $u>0$ where $c_0=\sup_x|f_{\epsilon}(x)|<\infty$. Therefore, the result in (\ref{eq:Mn bound}) follows by Freedman's martingale inequality and a chaining argument as for any $u \in [-b_n,b_n]$, we have 

\begin{equation}\label{eq:}
\sum_{i=m}^{n}[\IE(V_i)-\IE(V_i)^2] \le c_0(n-m+1)H_mb_n 
\end{equation}

\noindent where $ V_i=I(x\leq \tilde{Y}_i \leq x+u|\mathcal{F}_{i-1})$. We skip the details and direct interested readers to Lemma 5 in \cite{wubahadur05}, Lemma 4 in \cite{wumest07} and Lemma 6 in \cite{wuzhou09}. 

\end{proof}

\begin{lemma}\label{lem:Nn th1}
Under conditions of SRD, DEN and light-tailed
\begin{equation}\label{eq:lemma Nn Th1 equation}
\| \sup_{|u|\leq b_n} |N_n(x+u)-N_n(x) |\| =O \left(  \frac{b_n m^{3/2}}{\sqrt{n}} \right).
\end{equation}

\end{lemma}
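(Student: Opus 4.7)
The plan is to follow the template of Lemma~6 in \citet{wuzhou09} and its adaptation in \citet{zhou10}, but replacing their coefficient-based calculations by predictive-density calculations through $\psi_{\cdot,q}$. First I would clarify that $\tilde{F}_n^*$ must be read as the average of the one-step conditional cdfs, $\tilde{F}_n^*(x) = (n-m+1)^{-1}\sum_{i=m}^n \IP(\tilde{Y}_i \le x \mid \mathcal{F}_{i-1})$, since by stationarity the marginal $\IP(\tilde{Y}_i \le x) = \tilde{F}(x)$ is constant in $i$ and otherwise $N_n \equiv 0$. Exploiting the block-sum structure $\tilde{Y}_i = H_m^{-1}(e_i + S_{i-1}^{(m)})$ with $S_{i-1}^{(m)} := \sum_{j=i-m+1}^{i-1} e_j$, I would use the pivotal identity $\IP(\tilde{Y}_i \le x \mid \mathcal{F}_{i-1}) = F_1(H_m x - S_{i-1}^{(m)} \mid \mathcal{F}_{i-1})$ to write, for
\[
g_i(x,u) := \IP(x < \tilde{Y}_i \le x{+}u \mid \mathcal{F}_{i-1}) - [\tilde{F}(x{+}u) - \tilde{F}(x)],
\]
the representation $g_i(x,u) = H_m \int_x^{x+u}[f_1(H_m s - S_{i-1}^{(m)}\mid \mathcal{F}_{i-1}) - \IE f_1(\cdots)]\,ds$, so that $N_n(x+u) - N_n(x) = (n-m+1)^{-1}\sum_{i=m}^n g_i(x,u)$; each $g_i$ is $\mathcal{F}_{i-1}$-measurable, mean zero, of sup-norm at most $2c_0 H_m|u|$, and Lipschitz in $u$ with constant $2c_0 H_m$ by (DEN).

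For a fixed $u$, I would bound $\|\sum_i g_i(x,u)\|_2$ via the $L^2$-orthogonal martingale-difference decomposition $g_i = \sum_{k \le i-1} P_k(g_i)$ already introduced around \eqref{eq:define Mn}. By the coupling $\epsilon_k \leftrightarrow \epsilon_k'$ combined with (DEN), one obtains
\[
\|P_k(g_i(x,u))\|_2 \le 2 H_m |u| \bigl[c_0\,\|\Delta_{i,k}\|_2 + \psi_{i-1-k,2}\bigr],
\]
where $\Delta_{i,k}$ is the change in $S_{i-1}^{(m)}$ under the coupling, bounded by partial sums of $\delta_{\cdot,2}$. Under (SRD) the $\psi$-channel is summable in the lag $i-k$, while the $\delta$-channel contributes non-trivially only for the $m$ indices $k \in [i-m+1,\,i-1]$ and decays outside this window. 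Collecting contributions via orthogonality of the $P_k$'s and using $H_m^2 = m$ in Case~1 gives $\|N_n(x+u) - N_n(x)\|_2 = O(|u|\, m^{3/2}/\sqrt{n})$; the factor $m^{3/2}$ is the product of the density scaling $\sqrt{m}$ and the overlap-window contribution $\sqrt{m}$, while the $1/\sqrt{n}$ is the usual averaging factor.

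To lift this pointwise $L^2$ bound to a uniform one, I would invoke the deterministic Lipschitz property $|g_i(x,u_1) - g_i(x,u_2)| \le 2 c_0 H_m |u_1 - u_2|$: discretize $[-b_n, b_n]$ on a grid of mesh $(n H_m)^{-1}$ (admissible thanks to the hypothesis $\log n = o(n H_m b_n)$), apply the pointwise $L^2$ bound at each grid point together with a straightforward maximal inequality, and absorb the chaining remainder using the deterministic Lipschitz constant. This delivers the claimed uniform rate $O(b_n\, m^{3/2}/\sqrt{n})$.

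The main obstacle I anticipate is the careful bookkeeping inside the $P_k$-bound, i.e., separating the two dependence channels (through the block sum $S_{i-1}^{(m)}$, controlled by $\delta_{\cdot,2}$; and through the conditional density $f_1(\cdot\mid \mathcal{F}_{i-1})$, controlled by $\psi_{\cdot,2}$), checking that only the $m$ overlap indices in the first channel contribute the correct $\sqrt{m}$ inflation rather than a spurious $m$, and keeping the supremum chaining step from producing a logarithmic overhead. The key technical convenience that ties both channels together is the (DEN) bound on $f_1$ and its derivative, which also supplies the deterministic Lipschitz control needed in the uniformization step.
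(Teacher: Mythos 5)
Your core computation is essentially the paper's: the same integral representation of $N_n(x+u)-N_n(x)$ through the one-step conditional density, the same martingale-projection decomposition with coupling bounds controlled by (DEN) together with the dependence measures, orthogonality of the projections, and SRD summability, yielding the pointwise rate $O(|u|\,m^{3/2}/\sqrt{n})$ (your explicit separation of the two channels --- the shift of the block sum $S^{(m)}_{i-1}$ versus the perturbation of $f_1(\cdot\mid\mathcal{F}_{i-1})$ itself --- is if anything a cleaner bookkeeping than the paper's $\tilde b_k$ coefficients). Your reading of $\tilde F_n^*$ as the average of the conditional c.d.f.'s is also the correct one.

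The gap is in your final uniformization step. You propose to pass from the pointwise $L^2$ bound to the supremum over $u\in[-b_n,b_n]$ by discretizing on a mesh of order $(nH_m)^{-1}$ and applying ``a straightforward maximal inequality.'' With only an $L^2$ bound at each of the $N\asymp b_n nH_m$ grid points (there is no exponential concentration available for $R_n$, which is a sum of projections, not a martingale to which Freedman applies), the best generic bound on $\|\max_j|Z_j|\|_2$ is $(\sum_j\|Z_j\|_2^2)^{1/2}\le\sqrt{N}\max_j\|Z_j\|_2$, which inflates the rate by $\sqrt{b_n nH_m}$ --- far worse than the logarithmic overhead you flag as the anticipated obstacle, and fatal to the claimed rate. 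The repair is already contained in your own display: since $N_n(x+u)-N_n(x)=(n-m+1)^{-1}H_m\int_x^{x+u}\bigl[\sum_i\bigl(f_1(H_m s-S^{(m)}_{i-1}\mid\mathcal{F}_{i-1})-\IE f_1(\cdots)\bigr)\bigr]ds$, one has deterministically $\sup_{|u|\le b_n}|N_n(x+u)-N_n(x)|\le (n-m+1)^{-1}H_m\int_{x-b_n}^{x+b_n}|R_n(s)|\,ds$ with $R_n$ the bracketed sum, and Minkowski's integral inequality converts the pointwise bound $\|R_n(s)\|=O(m\sqrt{n})$ into the uniform bound $O(b_n m^{3/2}/\sqrt{n})$ with no chaining at all. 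This is exactly how the paper handles the supremum; replace your discretization step by this observation and the proof is complete.
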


\begin{proof}
\noindent From the definition of $N_n(x)$ in (\ref{eq:decomposition2}), we have 

\begin{eqnarray}\label{eq:Nn}
N_n(x+u)-N_n(x)= \sqrt{m} \frac{\int_0^u R_n(x+t)dt}{n-m+1},
\end{eqnarray}   where, for $x \in \mathbb{R}$, 
\begin{eqnarray}\label{eq:Nn2}
R_n(x)=\sum_{i=m}^n [f_{\epsilon}(H_m(x-\tilde{Z}_{i-1}))-\IE(f_{\epsilon}(H_m(x-\tilde{Z}_{i-1})))].
\end{eqnarray}

\noindent Let $(\epsilon_i')_{-\infty}^{\infty}$ be an i.i.d. copy of $(\epsilon_i)_{-\infty}^{\infty}$. Let  $Z^{*}_{i-1,k}=H(\epsilon_i,\epsilon_{i-1},\ldots)$. Denote  $\tilde{Z}^{*}_{i-1,k}=H(\epsilon_i,\epsilon_{i-1},\ldots,\epsilon_{i-k}',\ldots)$. Also, introduce the coefficients $\tilde{b}_{j,q}$ as follows

\begin{equation}\label{eq:bj}
\tilde{b}_{j,q}=  \begin{cases}
            & \psi_{0,q}+\psi_{1,q}+\ldots+\psi_{j,q} \quad \text{if } 1\le j \le m-1 \\
            & \psi_{j-m+1,q}+\psi_{j-m+2,q}+\ldots+\psi_{j,q} \quad \text{if }j \ge m.
            \end{cases}
\end{equation}
Write $\tilde{b}_j$ for $\tilde b_{j,2}$. Then

\begin{eqnarray}\label{eq:36}
\|\mathcal{P}_{i-k}f_{\epsilon}(\sqrt{m}(x+u-\tilde{Z}_{i-1})) \| &\le& \|f_{\epsilon}(\sqrt{m}(x+u-\tilde{Z}_{i-1}))-f_{\epsilon}(\sqrt{m}(x+u-\tilde{Z}^*_{i-1,k})) \| \nonumber  \\
& \le & \sup_{v \in \mathbb{R}} |f_{\epsilon}'(v)| \| \sqrt{m}(\tilde{Z}_{i-1}-\tilde{Z}^*_{i-1.k}) \| \le c_1 \tilde{b}_k,
\end{eqnarray}

\noindent for some $c_1<\infty.$ Further note that

$$R_n(x+u)=\sum_{k=1}^{\infty}\sum_{i=m}^n \mathcal{P}_{i-k} f_{\epsilon}(\sqrt{m}(x+u-\tilde{Z}_{i-1}))$$

\noindent and by the orthogonality of $\mathcal{P}_{i-k}, i=m,\ldots, n$

\begin{eqnarray*}\label{eq:projection}
\vspace{-0.2 in}\|\sum_{i=m}^n \mathcal{P}_{i-k} f_{\epsilon}(\sqrt{m}(x+u-\tilde{Z}_{i-1})) \|^2 &=& \sum_{i=m}^n \|\mathcal{P}_{i-k} f_{\epsilon}(\sqrt{m}(x+u-\tilde{Z}_{i-1})) \|^2 \nonumber \\
&\leq& c_1^2 (n-m+1)\tilde{b}_k^2. 
\end{eqnarray*}

\noindent Therefore, for all $u \in [-b_n,b_n]$,  
\begin{eqnarray*}
\|R_n(x+u)\| &\le& \sum_{k=1}^{\infty} \| \sum_{i=m}^n \mathcal{P}_{i-k}f_{\epsilon}(\sqrt{m}(x+u-\tilde{Z}_{i-1})) \| \\
&\le& c_1 \sqrt{n} \sum_{k=1}^{\infty} |\tilde{b}_k| \le c_1 m\sqrt{n} \sum_{j=0}^{\infty}|\psi_{j,2}|=O(m\sqrt{n})
\end{eqnarray*}
in the view of the SRD condition in (\ref{eq:conditions2}). This concludes the proof of (\ref{eq:lemma Nn Th1 equation}) in the light of (\ref{eq:Nn}) and (\ref{eq:Nn2}).

\end{proof}

\begin{lemma}\label{lem:Nn th4}
\noindent Under conditions of LRD, DEN and heavy-tailed, we have for any $\rho \in (1/\gamma, \alpha)$
\begin{equation}\label{eq:lemma Mn Th1 equation}
\| \sup_{|u|\leq b_n} |N_n(x+u)-N_n(x) |\|_{\rho} =O \left(  H_mb_nmn^{1/\rho-\gamma}|l(n)| \right).
\end{equation}

\end{lemma}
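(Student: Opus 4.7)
Following the same scaffolding as Lemma \ref{lem:Nn th1}, differentiation of $\tilde{F}_n^*$ in $x$ (using that, conditional on $\mathcal{F}_{i-1}$, only $e_i$ in $\tilde{Y}_i$ is random, with density $f_1(\cdot|\mathcal{F}_{i-1})$) yields the representation
\begin{equation*}
N_n(x+u) - N_n(x) = \frac{H_m}{n-m+1}\int_0^u R_n(x+t)\,dt,
\end{equation*}
with $R_n(x) = \sum_{i=m}^n [f_1(H_m(x-\tilde{Z}_{i-1})|\mathcal{F}_{i-1}) - \IE(f_1(H_m(x-\tilde{Z}_{i-1})|\mathcal{F}_{i-1}))]$. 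The problem reduces to bounding $\|R_n(x)\|_\rho$ uniformly in $x$, and then lifting to a supremum over $u\in[-b_n,b_n]$.

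For the moment bound on $R_n$, I would replace the $L^2$ orthogonality step of Lemma \ref{lem:Nn th1} by an $L^\rho$ martingale moment inequality. Using the orthogonal projections of Wu's framework, write $R_n(x) = \sum_{j\le n-1} W_j(x)$ with $W_j(x) = \mathcal{P}_j\bigl[\sum_{i=j+1}^n f_1(H_m(x-\tilde{Z}_{i-1})|\mathcal{F}_{i-1})\bigr]$; the $W_j$ form a martingale difference sequence, so the von Bahr--Esseen inequality (valid for $1<\rho\le 2$) gives $\|R_n(x)\|_\rho^\rho \le C_\rho \sum_{j=1}^{n-1} \|W_j(x)\|_\rho^\rho$. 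Each projection is then controlled by a coupling at $\epsilon_j$ together with (DEN) and the triangle inequality: splitting the increment of $f_1(H_m(x-\tilde{Z}_{i-1})|\mathcal{F}_{i-1})$ under the coupling into a predictive density perturbation (contributing $\psi_{i-j,\rho}$) and a shift in the argument $H_m(\tilde{Z}_{i-1}-\tilde{Z}_{i-1}^*)$ (controlled by the derivative bound in (DEN)), I would obtain $\|W_j(x)\|_\rho \le c_1 \sum_{k=1}^{n-j} \tilde{b}_{k,\rho}$ with $\tilde{b}_{k,\rho}$ the analogue of (\ref{eq:bj}) for the exponent $\rho$. Because $f_1$ is bounded by (DEN), Lyapunov's inequality transfers (LRD($\gamma$)) to $\psi_{j,\rho}\lesssim j^{-\gamma}|l(j)|$ on the relevant range, so $\sum_{k=1}^{n-j}\tilde{b}_{k,\rho}\lesssim m\,(n-j)^{1-\gamma}|l(n-j)|$.

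Summing, $\|R_n(x)\|_\rho^\rho \lesssim m^\rho \sum_{j=1}^{n-1}(n-j)^{\rho(1-\gamma)}|l(n-j)|^\rho \lesssim m^\rho\,n^{1+\rho(1-\gamma)}|l(n)|^\rho$, i.e., $\|R_n(x)\|_\rho = O(m\,n^{1/\rho+1-\gamma}|l(n)|)$. Plugging back into the integral representation gives the pointwise-in-$u$ estimate $O(H_m b_n m\,n^{1/\rho-\gamma}|l(n)|)$; a standard chaining/grid argument over a polynomial grid in $[-b_n,b_n]$, combined with the trivial Lipschitz control $|\partial_u N_n(u)|\le H_m c_0$ from (DEN) to handle jumps between grid points, then upgrades the pointwise estimate to the supremum with only a logarithmic blow-up that is absorbed into the $O$.

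The main obstacle is exactly the combination of heavy tails (which rules out the clean $L^2$ orthogonality step used for the short-memory case and forces the weaker von Bahr--Esseen bound for $\rho<2$) with long-range dependence (which makes $\sum_k\tilde{b}_{k,\rho}$ diverge polynomially): the decomposition must be organized to exploit orthogonality across shock indices $j$ rather than naive triangle inequality across lags $k$, and the balance between the $n^{1/\rho}$ factor picked up from von Bahr--Esseen and the polynomial LRD tail $n^{1-\gamma}|l(n)|$ is delicate. The hypothesis $\rho>1/\gamma$ is precisely what makes the resulting exponent $1/\rho-\gamma$ negative and hence produces a nontrivial bound.
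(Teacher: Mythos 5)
Your strategy is the same as the paper's: the integral representation $N_n(x+u)-N_n(x)=\frac{H_m}{n-m+1}\int_0^u R_n(x+t)\,dt$, a martingale decomposition of $R_n$ into the projections $\mathcal{P}_j$, an $L^\rho$ Burkholder/von Bahr--Esseen bound for $1<\rho\le 2$ (the paper cites Rio's inequality with $C_\rho=(\rho-1)^{-1}$, which plays the identical role), the coupling bound $\|\mathcal{P}_{i-k}f_1(H_m(x-\tilde Z_{i-1})|\mathcal{F}_{i-1})\|_\rho\le c_1\tilde b_{i-k,\rho}$ via (DEN), and Karamata summation. Two remarks. First, the chaining over a grid in $u$ is unnecessary: since $\sup_{|u|\le b_n}|\int_0^u R_n(x+t)\,dt|\le\int_{-b_n}^{b_n}|R_n(x+t)|\,dt$, Minkowski's integral inequality converts a bound on $\sup_t\|R_n(x+t)\|_\rho$ directly into the bound on the $\rho$-norm of the supremum, which is how the paper proceeds.

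Second, and this is the one genuine gap: your martingale decomposition correctly runs over all $j\le n-1$, but your final summation $\sum_{j=1}^{n-1}(n-j)^{\rho(1-\gamma)}|l(n-j)|^\rho$ silently drops the projections onto shocks with $j\le 0$. Under long-range dependence these are not negligible: for $j\le 0$ one has $\|W_j(x)\|_\rho\le c_1 m\sum_{i=1}^{n}\psi_{|j|+i,\rho}$, and the sum over all $j\le -n$ of the $\rho$-th powers is an infinite series whose convergence requires precisely $\rho\gamma>1$ (via $\sum_{k\ge n}k^{-\rho\gamma}|l(k)|^\rho=O(n^{1-\rho\gamma}|l(n)|^\rho)$ by Karamata's theorem). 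The paper splits the index range into $k\le -n$, $-n<k\le 0$, and $1\le k\le n-1$ and verifies each piece is $O(m^\rho n^{1+\rho(1-\gamma)}|l(n)|^\rho)$; your write-up attributes the role of the hypothesis $\rho>1/\gamma$ only to the sign of the final exponent $1/\rho-\gamma$, whereas its essential use is in taming these far-past projection terms. The omission is repairable with the same tools you already invoke, but as written the bound on $\|R_n(x)\|_\rho^\rho$ is not established.
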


\begin{proof}
\noindent Similar to the proof of Lemma \ref{lem:Nn th1}, it suffices to prove, for some $0<C<\infty$,

\begin{equation}\label{eq:lemma Nn th4 suff}
\|R_n(x+u) \|_{\rho} \le Cmn^{1/\rho+1-\gamma}|l(n)| \text{ for all } u \in [-b_n,1-b_n]
\end{equation}

\noindent Since $1<\rho<2$, by (\citet{rio}) Burkholder type inequality of martingales, we have, with $C_{\rho}=(\rho-1)^{-1}$. 

\begin{eqnarray}\label{eq:Nn lemma 2}
\|R_n(x+u)\|_{\rho}^{\rho} 
&=& \|\sum_{k=-\infty}^{n-1} \mathcal{P}_k \sum_{i=m}^n f_{\epsilon}(H_m(x-\tilde{Z}_{i-1})) \|_{\rho}^{\rho} \nonumber \\ \nonumber
&\le& C_{\rho} \sum_{k=-\infty}^{n-1}\| \mathcal{P}_k \sum_{i=m}^n f_{\epsilon}(H_m(x-\tilde{Z}_{i-1})) \|_{\rho}^{\rho} \\ \nonumber
&\le&C_{\rho} \sum_{k=-\infty}^{n-1} (\sum_{i=m}^n \|\mathcal{P}_k  f_{\epsilon}(H_m(x-\tilde{Z}_{i-1})) \|_{\rho})^{\rho} \\ \nonumber
&\le&C_{\rho}  \sum_{k=-\infty}^{-n}+\sum_{k=-n+1}^{0}+\sum_{k=1}^{n-1}(\sum_{i=m}^n \|\mathcal{P}_k  f_{\epsilon}(H_m(x-\tilde{Z}_{i-1})) \|_{\rho})^{\rho} \\ \nonumber
&\le&C_{\rho}(I+II+III) \nonumber.
\end{eqnarray}

\noindent Since $\IE|\epsilon_i|^{\rho}<\infty$, similarly as (\ref{eq:36}), we have for $k \le i-1$ that 

\begin{equation}\label{eq:Nn lemma 2 second}
\|\mathcal{P}_kf_{\epsilon}(H_m(x-Z_{i-1})) \|_{\rho} \le c_1|\tilde{b}_{i-k}|,
\end{equation} 
 for some $c_1<\infty.$ Thus using Karamata's theorem for the term $I$, we have 

\begin{eqnarray}\label{eq:term 1}
I&\le& c_1^{\rho}\sum_{k=-\infty}^{-n} (\sum_{i=m}^n|\tilde{b}_{i-k}|)^{\rho} \leq c_1^{\rho}\sum_{k=n}^{\infty} (m \sum_{i=1}^n |\psi_{k+i,\rho}|)^{\rho} \\ \nonumber 
&\le& c_1^{\rho}m^{\rho}n^{\rho-1}\sum_{k=n}^{\infty}\sum_{i=1}^n |\psi_{k+i,\rho}|^{\rho} \\ \nonumber 
&=&O[m^{\rho}n^{1+\rho(1-\gamma)}|l(n)|^{\rho}].
\end{eqnarray}

\noindent Since $\rho>1$ and $\rho \gamma>1$, we use H{\"o}lder inequality to manipulate term $III$ as follows:

\begin{eqnarray}\label{eq:term 3}
III &\le& c_1^{\rho} \sum_{k=1}^{n-1} (\sum_{i=\max(m,k+1)}^n |\tilde{b}_{i-k}|)^{\rho} 
\le c_1^{\rho} \sum_{k=1}^{n-1} (m \sum_{i=0}^{n-k}|\psi_{i,\rho}|)^{\rho} \\ \nonumber
&=& m^{\rho}\sum_{k=1}^{n-1} O[(n-k)^{1-\gamma}|l(n-k)|]^{\rho} \\ \nonumber
&=& O[m^{\rho}n^{1+\rho(1-\gamma)}|l(n)|^{\rho}].
\end{eqnarray}

\noindent Similarly for term $II$ we have, $II=O[m^{\rho}n^{1+\rho(1-\gamma)}|l(n)|^{\rho}]$. Combining this with (\ref{eq:term 1}) and (\ref{eq:term 3}), we finish the proof of the lemma.
\end{proof}

\begin{proof}[Proof of Theorem \ref{th:light tailed nonlinear}]
\noindent Recall the definition of $\tilde{Y}_i$ from (\ref{eq:tilde Y}). As $m \to \infty$, $\tilde{Q}(u)$ is well-defined and it converges to $u$th quantile of a $N(0,\sigma^2)$ distribution since the central limit theorem of \cite{ref22zxw} entails $\tilde{Y}_i \stackrel{D}{\rightarrow} N(0,\sigma^2),$ where $\sigma=\|\sum_{i=0}^{\infty}\mathcal{P}_0e_i \|<\infty.$  A standard characteristic function argument yields  

\begin{equation}\label{eq:37}
\sup_x|\sigma f_m(x)- \phi(x/\sigma)| \to 0,
\end{equation}

\noindent where $f_m(\cdot)$ is the density of $\tilde{Y}_i$ and $\phi(x)$ is the density of a standard normal random variable. Let $(c_n)$ be an arbitrary sequence of positive numbers that goes to infinity. Let $\bar{c}_n= \min (c_n, n^{1/4}/m^{3/4}).$ Then $\bar{c}_n \to \infty.$ For $T_n=\bar{c}_nm/\sqrt{n}$, Lemma \ref{lem:Mn} and Lemma \ref{lem:Nn th1} imply that 

\begin{eqnarray}
&&|\tilde{F}_n(\tilde{Q}(u)+T_n)-\tilde{F}(\tilde{Q}(u)+T_n) -[F_n(\tilde{Q}(u))-\tilde{F}(\tilde{Q}(u))]| \nonumber \\ && \quad\quad=O_{\IP}\left(\frac{T_nm^{3/2}}{\sqrt{n}}+m^{1/4}\sqrt{\frac{T_n}{n}}(\log n)^{1/2} \right)=o_{\IP}(T_n).\label{eq:38}
\end{eqnarray}

\noindent Similar arguments as those in Lemma \ref{lem:Mn} and Lemma \ref{lem:Nn th1} imply 

\begin{equation}\label{eq:39}
|\tilde{F}_n(\tilde{Q}(u))-\tilde{F}(\tilde{Q}(u))|=O_{\IP}(\frac{m}{\sqrt{n}})=o_{\IP}(T_n).
\end{equation}

\noindent  Using Taylor's expansion of $\tilde{F}(\cdot)$, we have 

\begin{equation}\label{eq:40}
\tilde{F}(\tilde{Q}(u)+T_n)-\tilde{F}(\tilde{Q}(u))=T_n f_m(\tilde{Q}(u))+O(T_n)^2.
\end{equation}

\noindent  By (\ref{eq:37}), $f_m(\tilde{Q}(u))>0$ for sufficiently large $n$. Plugging in (\ref{eq:39}) and (\ref{eq:40}) into (\ref{eq:38}), we have $\IP(\tilde{F}_n(\tilde{Q}(u)+T_n)>u) \to 1$. Hence $\IP(\hat{Q}_n(u)>\tilde{Q}(u)+T_n) \to 0$ by the monotonicity of $\tilde{F}_n(\cdot).$ Similar arguments yield $\IP(\hat{Q}_n(u)<\tilde{Q}(u)-T_n) \to 0$. Since $f_m(\tilde{Q}(u))>0$ and $T_n \to 0$ arbitrarily slow, Theorem \ref{th:light tailed nonlinear} is proved.

\end{proof}

\noindent Next we use Lemma \ref{eq:bickel lemma} along with Nagaev inequalities from Section \ref{sec:noregress} to prove Theorem \ref{th:covariate2} and Theorem \ref{th:lasso stochastic}.

\begin{proof} [Proof of Theorem \ref{th:covariate2}]
In the view of Lemma \ref{eq:bickel lemma} and appropriate Nagaev inequality from  Result \ref{th:nagaev nonlinear light}, we have, 

\begin{eqnarray}
\vspace{-0.5 in}&&\IP\left(\frac{1}{n}\|X(\hat{\bm \beta}-\bm \beta)\|^2_2 \geq 16sr^2/\kappa^2\right) 
\leq \sum_{j=1}^p \IP(2|V_j|>r)=o(1) \nonumber
\end{eqnarray}

\noindent due to the choice of $r$ in (\ref{eq:r}). Thus, 
\begin{eqnarray}\label{eq:eihat-ei}
\sup_{m \leq i \leq n}|\sum_{k=i-m+1}^{i}(\hat{e}_i-e_i)|\leq m \sup_{1 \leq i \leq n}|\hat{e}_i-e_i| 
\leq m \sqrt{\frac{1}{n}\|X(\hat{\bm \beta}-\bm \beta)\|^2_2} =O_{\IP}(m\sqrt{s}r) \nonumber.
\end{eqnarray}
\noindent The rest of the proof follows from the following observation; for any fixed $0 \leq u \leq 1,$

\begin{equation}\label{eq:newbound}
|\bar{Q}_n(u)-\hat{Q}_n(u)|=O_{\IP} \left( \frac{m\sqrt{s}r}{H_m} \right),
\end{equation}
where $H_m$ is properly chosen for the heavy or light tails as described in (\ref{eq:Hm}). Then the right hand side of (\ref{eq:newbound}) by the choice of $s$ as specified in (\ref{eq:sparsity condition}) are smaller than the upper bounds of the SRD specific cases mentioned in Theorem \ref{th:light tailed nonlinear}.


\end{proof}

\begin{proof} [Proof of Theorem \ref{th:lasso stochastic}]
Consider the events $$B_1=\{n^{-1}|\tran{e}X|_{\infty}<r/2\}\text{ and }B_2=\{|\tran{X}X/n-\IE(\bm x_t \tran{\bm x_t})|_{\infty} <\kappa_{stoch}^2/(32s)\}.$$
Under $B_2$, the following holds for a vector $v$ with   $|v_{J^c}|_1 \leq 3 |v_{J}|_1$,
\begin{eqnarray*}
\frac{\tran{v} \tran{X}X v}{ n\|v_J\|_2^2}&=& \frac{\tran{v} (\tran{X}X/n - \IE(\bm x_t \tran{\bm x_t})) v}{ \|v_J\|_2^2}+ \frac{\tran{v} \IE(\bm x_t \tran{\bm x_t}) v}{ \|v_J\|_2^2} \\
&\geq&  -\frac{|v|_1|(\tran{X}X/n - \IE(\bm x_t \tran{\bm x_t})) v|_{\infty}}{ \|v_J\|_2^2}++ \frac{\tran{v} \IE(\bm x_t \tran{\bm x_t}) v}{ \|v_J\|_2^2} \\
&\geq& -\frac{|v|_1^2|\tran{X}X/n - \IE(\bm x_t \tran{\bm x_t})|_{\infty}}{ \|v_J\|_2^2}++ \frac{\tran{v} \IE(\bm x_t \tran{\bm x_t}) v}{ \|v_J\|_2^2} \\
&\geq& -\frac{16 |v_{J}|_1^2|\tran{X}X/n - \IE(\bm x_t \tran{\bm x_t})|_{\infty}}{ \|v_J\|_2^2}++ \frac{\tran{v} \IE(\bm x_t \tran{\bm x_t}) v}{ \|v_J\|_2^2} \\
&\geq& -\frac{16 \left(\frac{\kappa_{stoch}^2}{32s} \right) |v_{J}|_1^2 }{ \|v_J\|_2^2}+ \frac{\tran{v} \IE(\bm x_t \tran{\bm x_t}) v}{ \|v_J\|_2^2} \\
&\geq&- \frac{\kappa_{stoch}^2}{2}+\frac{\tran{v} \IE(\bm x_t \tran{\bm x_t}) v}{ \|v_J\|_2^2}.
\end{eqnarray*}
In the light of  (\ref{eq:kappastoch}), this implies $$ \min_{J \subset \{1,\cdots,p\},|J|\leq s,} \min_{|u_{J^c}|_1 \leq 3 |u_{J}|_1}\frac{\tran{u} \tran{X}X u}{n \|u_J\|_2^2}  > \frac{\kappa_{stoch}^2}{2}.$$ 
Following the (\ref{eq:final1}) and the derivation leading onto it, we have under $B_1 \cap B_2$, 
\begin{eqnarray}\label{eq:restocheq}
\kappa_{stoch}\|\hat{\bm \beta}_J- \bm \beta_J\|_2 \leq \frac{\sqrt{2}}{\sqrt{n}}\|X(\hat{\bm \beta}-\bm \beta)\|_2.
\end{eqnarray}
which in turn means
\begin{eqnarray}
\vspace{-0.5 in}&&\IP\left(\frac{1}{n}\|X(\hat{\bm \beta}-\bm \beta)\|^2_2 \geq 32 sr^2/\kappa_{stoch}^2\right) 
\leq \IP(B_1^c \cup B_2^c). \nonumber
\end{eqnarray}
Thus the proof is concluded by estimating the probabilities of $B_1^c$ and $B_2^c$ to be small. For $B_1$ the proof consists essentially of same steps as Theorem \ref{th:covariate2}. In particular, note that we no longer have the additional constraint on $X$ that diagonals of $\tran{X}X/n$ is 1 and thus we need to apply the Nagaev type concentration inequalities from Result \ref{th:nagaev nonlinear light} on the mean-zero process $\sum_{i=1}^n x_{i,j}e_i$ directly. Note that, in this scenario, while applying the Nagaev-type inequalities on $S_{n,b}$, we choose $b_k=1$ for $1 \leq k \leq n$ and thus $\|b\|^2_2=n$, $\|b\|_q^q=n$. 

Thus, in the view of the choice of $r$ (refined for the stochastic case) in (\ref{eq:new r}), $\IP(B_1)\to 1$. Lemma \ref{lem:x'x} ensures $\IP(B_2)\to 1$ under the condition on the dependence adjusted functional dependence measure. This in turn yields $\IP(B_1 \cap B_2) \to 1$. This finishes the proof of the Theorem \ref{th:lasso stochastic}.

\end{proof}

\section{Appendix C:- Additional notes on implementation of methods}
\subsection*{Additional notes on implementation
of QTL-LASSO}
We use LASSO implementation in R-package \texttt{glmnet} with tuning parameter $\lambda$ chosen by cross validation and with weights argument $(v_1\ldots,v_n)= ((1-\delta) \delta^{(T-1)}/(1-\delta^T),\ldots,1)$ to account for the structural change in coefficients. $\delta=0.8$.

\subsection*{Additional notes and outputs on implementation of ets, nnar and armax with software output} The
 ETS$(\textrm{A,N,N})$ with tuning parameter $=0.0446$, NNAR$(38,22)$ with one hidden layer and ARMA$(2,1)$ were selected by AIC and estimated by R-package \texttt{forecast}. NNAR  and ARMAX allow for exogenous covariates, therefore we include aggregated weather series $\bar{w}_t=\sum_{k=1}^{73}w_{k,t}$, $\bar{\tau}_t=\sum_{l=1}^{78}\tau_{l,t}$, and weekend   dummies as well. For NNAR, we can provide weights for the covariate observations. We use the same exponential down-weighing scheme as for the QTL-LASSO, but with $\alpha=0.98$, which gave better results. 
 
 First the price series $y_t$ is seasonally adjusted using STL decomposition (R-core function). The seasonally adjusted prices $z_t$ is used as input for the models implemented in R-package \texttt{forecast}. The models are specified as follows:\\

\textbf{ETS} The model is selected according to AIC criterion. We restrict the model without trend component, because the prices do not show any trend pattern (see \ref{fig:prices}). However, probably due to breaks in price-level, the AIC would select a trend component. This results in future paths that vary a lot. For optimization criterion, for, we use Average MSFE, over maximal possible horizon=30 hours. This results into model with tuning parameter  $0.0446$ selected by AIC. This gives better forecasting results than minimizing in-sample MSE which would result in tuning parameter $0.99$ and huge PIs.

\begin{Verbatim}[fontsize=\footnotesize]
ETS (A, N, N)
# means additive model, without 
trend and seasonal components.
Call: ets(y = y, model = "ZNZ", opt.crit = "amse", nmse = 30)

Smoothing parameters:
 delta = 0.0446

Initial states:
 l = 34.1139

 sigma: 8.4748

  AIC     AICc    BIC
116970.9 116970.9 116992.1
\end{Verbatim}

\textbf{NNAR} The model is selected according to AIC criterion. The model is restricted in that it allows only one hidden layer. The number of nodes in this layer is by default given as $(\# \textrm{AR lags} + \# \textrm{exogenous covariates})/2$. In this case, we use aggregated wind speed and temperatures,  and dummies for weekends so the number of exogenous covariates is 4. In order to get fair comparison with the QTL-LASSO, we also use exponential downweighting on the exogenous covariates, this time with tuning parameter $0.95$.

\begin{Verbatim}[fontsize=\footnotesize]
NNAR (38,22) 
# means AR order is 38 and there are 22 nodes 
in the hidden layer
Call:  nnetar(y = y, xreg =  cbind(Weather_agg, dummy_12)
,weights = expWeights(alpha=0.95)))

Average of 20 networks, each of which is a 42-22-1 network with 969 weights
options were - linear output units

sigma^2 estimated as 15.34
\end{Verbatim}

\textbf{ARMA} The model is selected according to AIC criterion.  we use aggregated wind speed and temperatures,  and dummies for the weekend.
\begin{Verbatim}[fontsize=\footnotesize]
Regression with ARIMA(2,0,1) errors

Coefficients:
    ar1   ar2   ma1   intc.   
    0.506 0.328 0.491 59.378 
s.e.0.060 0.055 0.057  1.644  
    xreg1  xreg2 xreg3 xreg4
    -4.551 -0.489 0.481 0.133
s.e.0.404  0.060 0.538 0.538
sigma^2 estimated as 24.35:  
log likelihood=-26410.34
AIC=52838.68   AICc=52838.7   BIC=52902.38
\end{Verbatim}

\section{Appendix D:- Additional simulation results}
\renewcommand{\thesubtable}{\roman{subtable}}
\begin{sidewaystable}
\vspace{1cm}
\begin{subtable}{1\textwidth}
\scalebox{0.6}{
 \begin{tabular*}{\textwidth}{@{\extracolsep{\fill}} |llrrrr|rrrr|rrrr|R{0.7cm}R{0.7cm}R{0.7cm}R{0.7cm}|rrrr|R{0.7cm}R{0.7cm}R{0.7cm}R{0.7cm}|}
\cmidrule{1-26}
\multirow{3}{*}{\rotatebox[origin=c]{90}{nominal}}& & \multicolumn{12}{c|}{Coverage}& \multicolumn{12}{|c|}{Winkler loss}\\
&$(e_i)$& \multicolumn{4}{c}{short-heavy}& \multicolumn{4}{c}{long-heavy}&\multicolumn{4}{c|}{non-lin-heavy}& \multicolumn{4}{|c}{short-heavy}& \multicolumn{4}{c}{long-heavy}&\multicolumn{4}{c|}{non-lin-heavy}\\
&$m$-weeks& 1 & 2 & 3 & 4 & 1 & 2 & 3 & 4  & 1 & 2 & 3 & 4  & 1 & 2 & 3 & 4 & 1 & 2 & 3 & 4 & 1 & 2 & 3 & 4 \\
\cmidrule{2-26}
\multirow{2}{*}{\rotatebox[origin=c]{90}{$60\%$}}
&ols-qtl & 47.0 & 41.0 & 38.0 & 37.3   & 38.9 & 33.1 & 28.8 & 25.9   & 48.6 & 42.0 & 41.0 & 37.9   & 2.1 & 1.7 & 1.5 & 1.4   & 7.0 & 6.4 & 6.5 & 6.6   & 1.5 & 1.2 & 1.1 & 1.0 \\
&lad-qtl & 51.5 & 51.1 & 48.5 & 45.9   & 38.0 & 33.4 & 32.1 & 29.2   & 50.7 & 51.8 & 49.1 & 50.0   & 1.5 & 1.2 & 1.1 & 1.0   & 6.0 & 5.5 & 5.6 & 5.6   & 1.0 & 0.8 & 0.8 & 0.7 \\
&lss-qtl & 56.8 & 55.6 & 54.2 & 49.7   & 54.9 & 53.6 & 53.0 & 50.0   & 57.4 & 55.8 & 51.8 & 49.6   & 1.5 & 1.2 & 1.1 & 1.1   & 5.9 & 5.2 & 5.2 & 5.1   & 1.1 & 0.9 & 0.9 & 0.8 \\
&lss-clt & 54.3 & 46.5 & 43.1 & 37.5   & 43.3 & 30.8 & 28.0 & 23.6   & 51.8 & 42.4 & 35.0 & 33.2   & 1.5 & 1.3 & 1.2 & 1.2   & 6.0 & 5.5 & 5.6 & 5.6   & 1.1 & 1.0 & 0.9 & 0.9 \\[.2cm]
\multirow{2}{*}{\rotatebox[origin=c]{90}{$80\%$}}
&ols-qtl & 65.9 & 60.5 & 55.6 & 52.2   & 56.5 & 50.1 & 44.6 & 39.3   & 67.4 & 62.5 & 59.5 & 54.2   & 3.0 & 2.5 & 2.2 & 2.0   & 10.3 & 9.7 & 10.0 & 10.4  & 2.2 & 1.8 & 1.6 & 1.5 \\
&lad-qtl & 74.5 & 71.5 & 71.0 & 68.7   & 61.4 & 53.9 & 53.2 & 49.0   & 73.7 & 72.3 & 72.4 & 69.7   & 2.1 & 1.7 & 1.6 & 1.5   & 8.8 & 8.0 & 8.6 & 8.6     & 1.5 & 1.2 & 1.1 & 1.1 \\
&lss-qtl & 80.1 & 77.9 & 74.1 & 71.2   & 74.8 & 75.4 & 73.5 & 70.8   & 77.2 & 75.4 & 73.7 & 71.4   & 2.1 & 1.7 & 1.6 & 1.6   & 8.4 & 7.3 & 7.7 & 7.6     & 1.5 & 1.3 & 1.2 & 1.2 \\
&lss-clt & 74.1 & 65.8 & 62.1 & 53.2   & 62.3 & 49.9 & 42.7 & 36.8   & 71.1 & 61.3 & 54.1 & 50.6   & 2.2 & 1.7 & 1.7 & 1.8   & 8.7 & 8.2 & 8.7 & 8.8     & 1.6 & 1.4 & 1.4 & 1.3 \\[.2cm]
\multirow{2}{*}{\rotatebox[origin=c]{90}{$90\%$}}
&ols-qtl & 78.4 & 73.8 & 70.2 & 65.6   & 70.1 & 63.7 & 60.8 & 52.6   & 78.7 & 74.4 & 71.9 & 68.1   & 4.3 & 3.6 & 3.3 & 3.2   & 14.6 & 14.1 & 15.2 & 16.2   & 2.9 & 2.6 & 2.4 & 2.3 \\
&lad-qtl & 87.3 & 84.1 & 84.1 & 83.2   & 76.9 & 73.0 & 71.9 & 65.6   & 87.4 & 83.9 & 85.8 & 81.5   & 3.1 & 2.5 & 2.9 & 2.5   & 12.2 & 11.5 & 13.4 & 13.5   & 2.1 & 1.8 & 2.1 & 1.9 \\
&lss-qtl & 88.9 & 87.8 & 87.0 & 83.9   & 87.5 & 86.4 & 83.1 & 81.2   & 88.9 & 85.9 & 85.9 & 83.1   & 3.1 & 2.5 & 2.8 & 2.6   & 11.7 & 10.8 & 11.8 & 11.6   & 2.1 & 1.8 & 2.1 & 1.9 \\
&lss-clt & 85.1 & 78.7 & 73.7 & 64.7   & 72.1 & 59.4 & 54.5 & 47.5   & 81.3 & 72.7 & 65.4 & 63.4   & 3.1 & 2.5 & 2.5 & 2.6   & 12.5 & 12.2 & 13.5 & 14.0   & 2.2 & 1.9 & 2.0 & 2.0 \\[.2cm]
\multirow{2}{*}{\rotatebox[origin=c]{90}{$95\%$}}
&ols-qtl & 86.3 & 82.4 & 76.4 & 72.1   & 81.9 & 75.1 & 69.1 & 60.7   & 86.8 & 82.2 & 78.6 & 73.3   & 6.1 & 5.5 & 5.0 & 4.9   & 20.5 & 21.2 & 24.1 & 26.2   & 3.9 & 4.1 & 3.7 & 3.6 \\
&lad-qtl & 92.7 & 91.2 & 88.1 & 87.1   & 87.0 & 83.6 & 77.2 & 71.4   & 92.7 & 90.9 & 89.5 & 86.4   & 4.5 & 4.5 & 3.9 & 3.5   & 16.8 & 17.7 & 20.0 & 20.4   & 3.1 & 3.1 & 2.9 & 2.7 \\
&lss-qtl & 93.9 & 93.3 & 90.0 & 87.5   & 92.4 & 91.3 & 88.2 & 86.7   & 93.5 & 92.5 & 90.7 & 88.0   & 4.5 & 4.4 & 3.8 & 3.5   & 16.5 & 16.5 & 16.7 & 16.3   & 3.0 & 3.1 & 2.9 & 2.7 \\
&lss-clt & 89.0 & 85.7 & 80.0 & 74.1   & 78.4 & 67.6 & 62.1 & 55.1   & 87.0 & 79.5 & 75.0 & 72.2   & 4.5 & 3.6 & 3.7 & 3.8   & 18.5 & 18.6 & 21.8 & 22.8   & 3.1 & 2.8 & 3.0 & 3.0 \\[.2cm]
\cmidrule{1-26}
\end{tabular*}
 }
\caption{\footnotesize{Case $n>p$. QTL implemented using each of the estimators OLS, LAD or LASSO and CLT using LASSO only.\vspace{0.1cm}}}\label{tab:SIM_90_LD_uniform }
\end{subtable}
\begin{subtable}{1\textwidth}
\scalebox{0.6}{
 \begin{tabular*}{\textwidth}{@{\extracolsep{\fill}} |llrrrr|rrrr|rrrr|rrrr|rrrr|rrrr|}
\cmidrule{1-26}
\multirow{3}{*}{\rotatebox[origin=c]{90}{nominal}}&$\beta$& \multicolumn{12}{c|}{Coverage}& \multicolumn{12}{|c|}{Winkler loss}\\
&$(e_i)$& \multicolumn{4}{c}{short-heavy}& \multicolumn{4}{c}{long-heavy}&\multicolumn{4}{c|}{non-lin-heavy}& \multicolumn{4}{|c}{short-heavy}& \multicolumn{4}{c}{long-heavy}&\multicolumn{4}{c|}{non-lin-heavy}\\
&$m$-days& 1 & 2 & 3 & 4 & 1 & 2 & 3 & 4  & 1 & 2 & 3 & 4  & 1 & 2 & 3 & 4 & 1 & 2 & 3 & 4 & 1 & 2 & 3 & 4 \\
\cmidrule{2-26}
\multirow{3}{*}{\rotatebox[origin=c]{90}{$60\%$}}
&lss-qtl & 50.9 & 48.4 & 47.2 & 41.7   & 53.0 & 46.6 & 44.0 & 38.6   & 49.9 & 48.5 & 46.3 & 39.4   & 4.8 & 3.8 & 3.8 & 3.5   & 8.6 & 7.7 & 7.8 & 7.8   & 3.1 & 2.5 & 2.5 & 2.2 \\
&lss-clt & 54.6 & 39.3 & 37.4 & 32.7   & 68.8 & 48.9 & 41.8 & 33.4   & 48.2 & 33.9 & 34.0 & 27.7   & 5.5 & 4.9 & 4.9 & 4.3   & 8.0 & 7.1 & 7.4 & 7.5   & 3.9 & 3.6 & 3.6 & 3.6 \\
&lss-adj & 50.8 & 47.5 & 48.1 & 47.1   & 51.3 & 46.9 & 43.4 & 40.3   & 51.2 & 48.8 & 49.0 & 48.6   & 4.8 & 3.9 & 3.7 & 3.3   & 8.4 & 7.2 & 7.0 & 6.8   & 3.1 & 2.5 & 2.3 & 2.1 \\[.2cm]
\multirow{3}{*}{\rotatebox[origin=c]{90}{$80\%$}}
&lss-qtl & 71.7 & 66.3 & 61.2 & 53.4   & 70.5 & 64.3 & 56.0 & 50.6   & 69.7 & 64.8 & 62.2 & 54.4   & 7.4 & 6.2 & 5.4 & 5.1   & 13.5 & 11.9 & 11.9 & 11.7   & 4.6 & 4.1 & 3.6 & 3.3 \\
&lss-clt & 73.4 & 57.6 & 51.8 & 49.1   & 82.2 & 68.2 & 59.7 & 51.3   & 69.6 & 54.4 & 47.7 & 40.5   & 7.9 & 6.7 & 7.1 & 6.3   & 12.5 & 10.5 & 10.8 & 11.2   & 5.3 & 5.0 & 5.2 & 5.3 \\
&lss-adj & 71.7 & 65.1 & 65.0 & 63.0   & 69.2 & 64.7 & 57.9 & 54.5   & 71.6 & 67.1 & 66.9 & 65.3   & 7.5 & 6.1 & 5.6 & 5.0   & 12.5 & 10.8 & 10.7 & 10.5   & 4.9 & 4.0 & 3.5 & 3.1 \\[.2cm]
\multirow{3}{*}{\rotatebox[origin=c]{90}{$90\%$}}
&lss-qtl & 82.7 & 74.0 & 68.6 & 61.3   & 81.2 & 73.1 & 62.5 & 55.2   & 81.8 & 73.4 & 67.9 & 62.2   & 12.5 & 9.0 & 8.1 & 7.9  & 20.7 & 17.4 & 18.0 & 18.7   & 7.9 & 6.0 & 5.3 & 5.0 \\
&lss-clt & 83.7 & 69.3 & 63.6 & 59.3   & 88.9 & 78.0 & 70.2 & 62.9   & 80.9 & 68.7 & 58.8 & 52.4   & 11.5 & 8.8 & 9.8 & 8.9  & 18.8 & 15.7 & 16.0 & 16.7   & 7.4 & 6.7 & 7.3 & 7.8 \\
&lss-adj & 81.4 & 75.3 & 73.9 & 72.1   & 79.3 & 73.3 & 67.7 & 63.8   & 81.5 & 77.8 & 74.9 & 74.4   & 12.0 & 8.8 & 8.1 & 7.4  & 19.4 & 16.5 & 16.4 & 16.5   & 7.5 & 5.7 & 5.0 & 4.5 \\[.2cm]
\multirow{3}{*}{\rotatebox[origin=c]{90}{$95\%$}}
&lss-qtl & 86.1 & 77.6 & 73.3 & 65.8   & 85.3 & 76.8 & 66.4 & 57.2   & 85.5 & 77.9 & 71.7 & 66.0   & 19.4 & 13.9 & 13.0 & 13.3   & 32.9 & 27.9 & 30.0 & 32.0   & 11.3 & 9.2 & 8.4 & 8.3 \\
&lss-clt & 90.3 & 80.0 & 74.3 & 70.4   & 91.6 & 83.7 & 76.5 & 71.3   & 86.1 & 77.9 & 70.6 & 62.5   & 17.5 & 12.4 & 13.6 & 12.5   & 29.2 & 24.1 & 24.9 & 26.0   & 10.6 & 9.1 & 10.2 & 11.5 \\
&lss-adj & 86.2 & 82.1 & 80.6 & 78.4   & 85.8 & 79.3 & 75.0 & 68.3   & 86.4 & 83.0 & 81.7 & 79.6   & 18.6 & 13.0 & 11.9 & 11.4   & 31.1 & 25.8 & 26.3 & 26.9   & 11.0 & 8.5 & 7.4 & 6.9 \\[.2cm]
\cmidrule{1-26}
\end{tabular*}
 }
\caption{ \footnotesize{Case $p>n$. QTL and CLT as above. ADJ is a bootstrap version QTL for better performance under short-sample.}}\label{tab:SIM_90_HD_uniform }
\end{subtable}
\caption{\footnotesize{Simulated out-of-sample forecasting experiment. The reported values are coverage probabilities, i.e., relative (\%) counts of out-of-sample values covered in 1000 trials (left part) and average Winkler loss values (right part). The nominal coverage (the first column) ranks from is $95\%$ to $60\%$. Simulated error processes heavy tails and either short memory, long memory or are nonlinear. The elements of regression coefficient $\bm \beta$ are drawn independently from uniform distribution $U[-1,1]$. The sparsity of $\bm \beta$'s is fixed to 90\%.} }\label{tab:SIM_90_uniform }
\end{sidewaystable}

\renewcommand{\thesubtable}{\roman{subtable}}
\begin{sidewaystable}
\vspace{1cm}
\begin{subtable}{1\textwidth}
\scalebox{0.6}{
 \begin{tabular*}{\textwidth}{@{\extracolsep{\fill}} |llrrrr|rrrr|rrrr|R{0.7cm}R{0.7cm}R{0.7cm}R{0.7cm}|rrrr|R{0.7cm}R{0.7cm}R{0.7cm}R{0.7cm}|}
\cmidrule{1-26}
\multirow{3}{*}{\rotatebox[origin=c]{90}{nominal}}& & \multicolumn{12}{c|}{Coverage}& \multicolumn{12}{|c|}{Winkler loss}\\
&$(e_i)$& \multicolumn{4}{c}{short-heavy}& \multicolumn{4}{c}{long-heavy}&\multicolumn{4}{c|}{non-lin-heavy}& \multicolumn{4}{|c}{short-heavy}& \multicolumn{4}{c}{long-heavy}&\multicolumn{4}{c|}{non-lin-heavy}\\
&$m$-weeks& 1 & 2 & 3 & 4 & 1 & 2 & 3 & 4  & 1 & 2 & 3 & 4  & 1 & 2 & 3 & 4 & 1 & 2 & 3 & 4 & 1 & 2 & 3 & 4 \\
\cmidrule{2-26}
\multirow{2}{*}{\rotatebox[origin=c]{90}{$60\%$}}
&ols-qtl & 47.0 & 41.0 & 38.0 & 37.3   & 38.9 & 33.1 & 28.8 & 25.9   & 48.6 & 42.0 & 41.0 & 37.9   & 2.1 & 1.7 & 1.5 & 1.4   & 7.0 & 6.4 & 6.5 & 6.6   & 1.5 & 1.2 & 1.1 & 1.0 \\
&lad-qtl & 51.5 & 51.1 & 48.5 & 45.9   & 38.0 & 33.4 & 32.1 & 29.2   & 50.7 & 51.8 & 49.1 & 50.0   & 1.5 & 1.2 & 1.1 & 1.0   & 6.0 & 5.5 & 5.6 & 5.6   & 1.0 & 0.8 & 0.8 & 0.7 \\
&lss-qtl & 54.8 & 47.5 & 49.9 & 28.2   & 54.7 & 52.5 & 53.4 & 50.3   & 55.7 & 52.1 & 40.1 & 38.8   & 1.7 & 1.6 & 1.3 & 2.2   & 6.2 & 5.3 & 5.3 & 5.1   & 1.1 & 1.0 & 1.3 & 1.2 \\
&lss-clt & 50.1 & 28.8 & 34.6 & 9.9    & 43.0 & 32.0 & 26.1 & 22.8   & 49.6 & 36.5 & 17.3 & 17.7   & 1.8 & 1.8 & 1.5 & 2.7   & 6.4 & 5.7 & 5.7 & 5.6   & 1.2 & 1.2 & 1.6 & 1.5 \\[.2cm]
\multirow{2}{*}{\rotatebox[origin=c]{90}{$80\%$}}
&ols-qtl & 65.9 & 60.5 & 55.6 & 52.2   & 56.5 & 50.1 & 44.6 & 39.3   & 67.4 & 62.5 & 59.5 & 54.2   & 3.0 & 2.5 & 2.2 & 2.0   & 10.3 & 9.7 & 10.0 & 10.4  & 2.2 & 1.8 & 1.6 & 1.5 \\
&lad-qtl & 74.5 & 71.5 & 71.0 & 68.7   & 61.4 & 53.9 & 53.2 & 49.0   & 73.7 & 72.3 & 72.4 & 69.7   & 2.1 & 1.7 & 1.6 & 1.5   & 8.8 & 8.0 & 8.6 & 8.6     & 1.5 & 1.2 & 1.1 & 1.1 \\
&lss-qtl & 75.6 & 71.2 & 72.2 & 55.0   & 74.4 & 73.8 & 74.0 & 70.4   & 77.9 & 74.0 & 66.3 & 65.5   & 2.3 & 2.1 & 1.8 & 2.6   & 8.7 & 7.6 & 7.7 & 7.6     & 1.6 & 1.4 & 1.7 & 1.5 \\
&lss-clt & 65.7 & 46.9 & 50.1 & 16.6   & 59.0 & 46.6 & 40.9 & 36.8   & 70.3 & 53.9 & 28.8 & 26.9   & 2.5 & 2.6 & 2.1 & 4.3   & 9.2 & 8.5 & 8.9 & 8.9     & 1.6 & 1.7 & 2.5 & 2.4 \\[.2cm]
\multirow{2}{*}{\rotatebox[origin=c]{90}{$90\%$}}
&ols-qtl & 78.4 & 73.8 & 70.2 & 65.6   & 70.1 & 63.7 & 60.8 & 52.6   & 78.7 & 74.4 & 71.9 & 68.1   & 4.3 & 3.6 & 3.3 & 3.2   & 14.6 & 14.1 & 15.2 & 16.2   & 2.9 & 2.6 & 2.4 & 2.3 \\
&lad-qtl & 87.3 & 84.1 & 84.1 & 83.2   & 76.9 & 73.0 & 71.9 & 65.6   & 87.4 & 83.9 & 85.8 & 81.5   & 3.1 & 2.5 & 2.9 & 2.5   & 12.2 & 11.5 & 13.4 & 13.5   & 2.1 & 1.8 & 2.1 & 1.9 \\
&lss-qtl & 88.4 & 85.2 & 85.6 & 72.3   & 86.2 & 85.0 & 83.9 & 81.6   & 89.2 & 86.4 & 80.4 & 80.3   & 3.3 & 2.9 & 3.0 & 3.5   & 11.9 & 10.9 & 11.8 & 11.7   & 2.2 & 1.9 & 2.5 & 2.3 \\
&lss-clt & 78.2 & 60.5 & 61.9 & 23.2   & 67.7 & 58.0 & 52.2 & 47.7   & 80.4 & 64.6 & 38.9 & 36.6   & 3.4 & 3.6 & 3.1 & 6.9   & 13.2 & 12.7 & 13.8 & 14.0   & 2.3 & 2.4 & 3.9 & 3.7 \\[.2cm]
\multirow{2}{*}{\rotatebox[origin=c]{90}{$95\%$}}
&ols-qtl & 86.3 & 82.4 & 76.4 & 72.1   & 81.9 & 75.1 & 69.1 & 60.7   & 86.8 & 82.2 & 78.6 & 73.3   & 6.1 & 5.5 & 5.0 & 4.9   & 20.5 & 21.2 & 24.1 & 26.2   & 3.9 & 4.1 & 3.7 & 3.6 \\
&lad-qtl & 92.7 & 91.2 & 88.1 & 87.1   & 87.0 & 83.6 & 77.2 & 71.4   & 92.7 & 90.9 & 89.5 & 86.4   & 4.5 & 4.5 & 3.9 & 3.5   & 16.8 & 17.7 & 20.0 & 20.4   & 3.1 & 3.1 & 2.9 & 2.7 \\
&lss-qtl & 92.6 & 91.3 & 89.5 & 79.1   & 92.3 & 90.1 & 88.0 & 86.4   & 92.9 & 92.2 & 86.7 & 85.7   & 4.7 & 4.8 & 4.1 & 4.6   & 16.7 & 16.5 & 16.8 & 16.3   & 3.1 & 3.2 & 3.2 & 3.0 \\
&lss-clt & 86.1 & 71.4 & 72.8 & 29.3   & 77.7 & 66.6 & 59.9 & 55.2   & 85.6 & 72.8 & 47.5 & 44.9   & 4.9 & 5.2 & 4.5 & 11.1  & 19.3 & 19.3 & 22.2 & 22.9   & 3.3 & 3.5 & 6.2 & 5.9 \\[.2cm]
\cmidrule{1-26}
\end{tabular*}
 }
\caption{\footnotesize{Case $n>p$. QTL implemented using each of the estimators OLS, LAD or LASSO and CLT using LASSO only.\vspace{0.1cm}}}\label{tab:SIM_20_LD_uniform }
\end{subtable}
\begin{subtable}{1\textwidth}
\scalebox{0.6}{
 \begin{tabular*}{\textwidth}{@{\extracolsep{\fill}} |llrrrr|rrrr|rrrr|rrrr|rrrr|rrrr|}
\cmidrule{1-26}
\multirow{3}{*}{\rotatebox[origin=c]{90}{nominal}}&$\beta$& \multicolumn{12}{c|}{Coverage}& \multicolumn{12}{|c|}{Winkler loss}\\
&$(e_i)$& \multicolumn{4}{c}{short-heavy}& \multicolumn{4}{c}{long-heavy}&\multicolumn{4}{c|}{non-lin-heavy}& \multicolumn{4}{|c}{short-heavy}& \multicolumn{4}{c}{long-heavy}&\multicolumn{4}{c|}{non-lin-heavy}\\
&$m$-days& 1 & 2 & 3 & 4 & 1 & 2 & 3 & 4  & 1 & 2 & 3 & 4  & 1 & 2 & 3 & 4 & 1 & 2 & 3 & 4 & 1 & 2 & 3 & 4 \\
\cmidrule{2-26}
\multirow{3}{*}{\rotatebox[origin=c]{90}{$60\%$}}
&lss-qtl & 51.7 & 48.9 & 46.8 & 42.2   & 51.9 & 47.3 & 44.9 & 39.3   & 50.4 & 48.6 & 46.3 & 40.7   & 4.8 & 3.8 & 3.9 & 3.4   & 8.6 & 7.7 & 7.8 & 7.8   & 3.1 & 2.5 & 2.5 & 2.3 \\
&lss-clt & 54.9 & 39.5 & 37.0 & 32.5   & 68.6 & 49.1 & 42.0 & 34.7   & 48.6 & 34.8 & 33.4 & 27.6   & 5.5 & 4.9 & 4.9 & 4.3   & 8.0 & 7.1 & 7.3 & 7.5   & 4.0 & 3.6 & 3.6 & 3.5 \\
&lss-adj & 51.3 & 47.0 & 47.9 & 47.2   & 50.4 & 46.0 & 44.2 & 40.6   & 49.5 & 48.9 & 48.8 & 48.7   & 4.8 & 3.9 & 3.8 & 3.2   & 8.4 & 7.2 & 6.9 & 6.8   & 3.2 & 2.5 & 2.3 & 2.1 \\[.2cm]
\multirow{3}{*}{\rotatebox[origin=c]{90}{$80\%$}}
&lss-qtl & 72.1 & 66.2 & 61.8 & 54.3   & 70.7 & 64.2 & 56.9 & 50.5   & 70.7 & 66.3 & 62.1 & 55.1   & 7.4 & 6.2 & 5.5 & 5.0   & 13.5 & 12.0 & 11.8 & 11.7   & 4.7 & 4.1 & 3.6 & 3.3 \\
&lss-clt & 74.5 & 57.7 & 51.8 & 49.2   & 82.1 & 67.7 & 59.5 & 51.2   & 69.4 & 53.8 & 46.6 & 40.5   & 7.8 & 6.7 & 7.1 & 6.3   & 12.5 & 10.5 & 10.8 & 11.2   & 5.3 & 5.0 & 5.2 & 5.3 \\
&lss-adj & 71.0 & 64.6 & 64.6 & 64.6   & 69.7 & 64.2 & 57.7 & 55.2   & 70.4 & 67.7 & 67.0 & 64.1   & 7.5 & 6.1 & 5.7 & 4.8   & 12.4 & 10.9 & 10.7 & 10.5   & 4.9 & 4.0 & 3.5 & 3.1 \\[.2cm]
\multirow{3}{*}{\rotatebox[origin=c]{90}{$90\%$}}
&lss-qtl & 82.8 & 74.3 & 68.5 & 61.0   & 81.1 & 73.1 & 62.6 & 55.4   & 81.5 & 74.2 & 67.9 & 62.9   & 12.4 & 9.0 & 8.2 & 7.8  & 20.7 & 17.4 & 17.9 & 18.7   & 8.0 & 6.0 & 5.2 & 5.0 \\
&lss-clt & 84.1 & 70.0 & 64.6 & 59.8   & 88.5 & 78.3 & 70.6 & 62.8   & 80.7 & 67.2 & 59.7 & 52.3   & 11.5 & 8.8 & 9.9 & 8.9  & 18.8 & 15.7 & 16.0 & 16.7   & 7.4 & 6.7 & 7.3 & 7.8 \\
&lss-adj & 81.3 & 76.8 & 74.2 & 71.7   & 80.4 & 73.7 & 68.6 & 65.0   & 80.4 & 78.1 & 76.3 & 73.7   & 12.0 & 8.7 & 8.1 & 7.2  & 19.4 & 16.6 & 16.3 & 16.5   & 7.5 & 5.8 & 5.0 & 4.5 \\[.2cm]
\multirow{3}{*}{\rotatebox[origin=c]{90}{$95\%$}}
&lss-qtl & 86.4 & 77.4 & 73.0 & 64.9   & 85.4 & 76.5 & 66.7 & 57.6   & 85.1 & 77.3 & 72.6 & 66.6   & 19.3 & 14.0 & 13.2 & 13.0   & 32.9 & 28.0 & 30.0 & 32.0   & 11.4 & 9.3 & 8.3 & 8.3 \\
&lss-clt & 90.4 & 80.0 & 74.0 & 70.2   & 92.1 & 83.2 & 77.3 & 70.3   & 86.2 & 77.2 & 69.6 & 62.8   & 17.5 & 12.4 & 13.6 & 12.4   & 29.1 & 24.2 & 24.9 & 26.1   & 10.6 & 9.1 & 10.3 & 11.5 \\
&lss-adj & 86.1 & 82.9 & 80.7 & 78.8   & 85.9 & 78.6 & 75.3 & 68.7   & 86.2 & 83.1 & 82.3 & 80.6   & 18.8 & 13.0 & 12.1 & 11.2   & 31.0 & 25.9 & 26.2 & 26.9   & 11.2 & 8.6 & 7.5 & 6.9 \\[.2cm]
\cmidrule{1-26}
\end{tabular*}
 }
\caption{ \footnotesize{Case $p>n$. QTL and CLT as above. ADJ is a bootstrap version QTL for better performance under short-sample.}}\label{tab:SIM_20_HD_uniform }
\end{subtable}
\caption{\footnotesize{Simulated out-of-sample forecasting experiment. The reported values are coverage probabilities, i.e., relative (\%) counts of out-of-sample values covered in 1000 trials (left part) and average Winkler loss values (right part). The nominal coverage (the first column) ranks from is $95\%$ to $60\%$. Simulated error processes heavy tails and either short memory, long memory or are nonlinear. The elements of regression coefficient $\bm \beta$ are drawn independently from uniform distribution $U[-1,1]$. The sparsity of $\bm \beta$'s is fixed to 20\% }. }\label{tab:SIM_20_uniform }
\end{sidewaystable}

\end{document}